\documentclass{article}
\usepackage{graphicx} 

\usepackage{geometry}
\usepackage{amsmath,amssymb,amsthm}
\usepackage[linesnumbered,noend]{algorithm2e}
\usepackage{tabularx}
\usepackage{booktabs}
\usepackage{makecell}
\usepackage[table]{xcolor}
\usepackage{caption}
\usepackage{csquotes}
\usepackage[final]{microtype}

\definecolor{paperaccent}{RGB}{178,125,20}
\definecolor{summaryrowgray}{RGB}{245,245,245}
\definecolor{algorithmbackground}{rgb}{0.95,1,1}
\usepackage{tikz}
\usepackage{titlesec}
\usepackage{fancyhdr}
\usepackage[hidelinks]{hyperref}
\usepackage[colorinlistoftodos,prependcaption,textsize=tiny]{todonotes}

\usepackage{listings}
\lstdefinestyle{algorithmstyle}{
    backgroundcolor=\color{algorithmbackground},
    columns=fixed,
    basicstyle=\ttfamily\footnotesize,
    basewidth=0.5em,
    breakatwhitespace=false,
    breaklines=true,
    captionpos=t,
    frame=single,
    keepspaces=true,
    mathescape=true,
    numbers=left,
    numberstyle=\small,
    numbersep=5pt,
    showspaces=false,
    showstringspaces=false,
    showtabs=false,
    tabsize=2
}
\DontPrintSemicolon
\SetKwInput{KwInput}{Input}
\SetKwInput{KwOutput}{Output}
\SetKwProg{Fn}{Function}{:}{}

\definecolor{resultImpossible}{RGB}{255,238,238}
\definecolor{resultLower}{RGB}{255,246,224}
\definecolor{resultKnown}{RGB}{235,244,255}
\definecolor{resultUnknown}{RGB}{238,248,240}
\definecolor{resultTerminating}{RGB}{241,238,255}

\RestyleAlgo{ruled}
\SetAlCapFnt{\small\bfseries}
\SetAlCapNameFnt{\small\bfseries}
\SetAlgoNlRelativeSize{-1}
\SetAlgoInsideSkip{medskip}
\SetKwSty{procname}
\SetFuncSty{procname}

\newtheoremstyle{lipics}
  {6pt}
  {6pt}
  {\itshape}
  {}
  {\bfseries}
  {.}
  {0.5em}
  {\ensuremath{\blacktriangleright}\ \thmname{#1}\thmnumber{ #2}\thmnote{ (#3)}}
\newtheoremstyle{lipicsdefinition}
  {6pt}
  {6pt}
  {}
  {}
  {\bfseries}
  {.}
  {0.5em}
  {\ensuremath{\blacktriangleright}\ \thmname{#1}\thmnumber{ #2}\thmnote{ (#3)}}
\theoremstyle{lipics}
\newtheorem{theorem}{Theorem}[section]

\newtheorem{lemma}[theorem]{Lemma}
\newtheorem{corollary}[theorem]{Corollary}
\theoremstyle{lipicsdefinition}

\newtheorem{remark}[theorem]{Remark}

\newcommand{\procname}[1]{\textnormal{\textsc{#1}}}
\newcommand{\ProblemName}[1]{\texorpdfstring{\textnormal{\textsc{#1}}}{#1}}

\newcommand{\algproc}[2]{\texorpdfstring{\hyperref[#1]{\procname{#2}}\xspace}{#2}}
\newcommand{\Flood}{\algproc{alg:flood}{Flood}}
\newcommand{\DistinctValues}{\algproc{alg:distinct-values}{DistinctValues}}
\newcommand{\UpperBound}{\algproc{alg:upper-bound}{UpperBound}}

\newcommand{\InputFrequency}{\algproc{alg:bc-input-frequency}{InputFrequency}}
\newcommand{\InputMultiset}{\algproc{alg:bc-input-multiset}{InputMultisetWithMultiLeaders}}
\newcommand{\ConstraintRound}{\algproc{alg:bc-constraint-round}{ConstraintRound}}

\newcommand{\AdaptiveFlooding}{\algproc{alg:bc-adaptive-flooding}{AdaptiveFlooding}}
\newcommand{\Restart}{\procname{Restart}\xspace}
\newcommand{\DetectError}{\procname{DetectError}\xspace}

\newcommand{\StabilizingInputMultiset}{\algproc{alg:bc-stabilizing-input-multiset}{StabilizingInputMultiset}}

\SetKwFunction{AlgFlood}{Flood}
\SetKwFunction{AlgDistinctValues}{DistinctValues}
\SetKwFunction{AlgUpperBound}{UpperBound}

\SetKwFunction{AlgConstraintRound}{ConstraintRound}
\SetKwFunction{AlgInputFrequency}{InputFrequency}
\SetKwFunction{AlgInputMultiset}{InputMultisetWithMultiLeaders}

\SetKwFunction{AlgAdaptiveSimulation}{SimulateWithAdaptiveFlooding}
\SetKwFunction{AlgStartTrial}{StartTrial}

\SetKwFunction{AlgAdaptiveFlooding}{AdaptiveFlooding}
\SetKwFunction{AlgInitializeCommunication}{InitializeCommunication}
\SetKwFunction{AlgRunAdaptiveLayer}{RunAdaptiveLayer}
\SetKwFunction{AlgErrorRound}{ErrorRound}
\SetKwFunction{AlgResetRound}{ResetRound}
\SetKwFunction{AlgControlRound}{ControlRound}
\SetKwFunction{AlgResetCalendar}{ResetCalendar}
\SetKwFunction{AlgDetectError}{DetectError}

\SetKwFunction{AlgStabilizingInputMultiset}{StabilizingInputMultiset}
\SetKwFunction{AlgStartInputMultisetTrial}{StartInputMultisetTrial}
\SetKwFunction{AlgTerminatingInputMultiset}{TerminatingInputMultiset}
\SetKwFunction{AlgStartTerminatingInputMultisetTrial}{StartTerminatingInputMultisetTrial}

\newcommand{\boolname}[1]{\ifmmode\text{\normalfont\textsc{#1}}\else\textnormal{\textsc{#1}}\fi}

\title{Computing in Anonymous Dynamic Networks with One-Bit Communications}

\author{
Thibaut Blanc (ENS de Lyon)\\
\texttt{thibaut.blanc@ens-lyon.fr}
\and
Giuseppe Antonio Di Luna (Sapienza University of Rome)\\
\texttt{diluna@diag.uniroma1.it}
\and
Giovanni Viglietta (University of Aizu)\\
\texttt{viglietta@gmail.com}
}
\date{}

\begin{document}

\maketitle
\begin{abstract}
We initiate the study of deterministic computation in anonymous dynamic networks in which each agent broadcasts a single bit per round and receives only the number of neighbors that broadcast each bit value. Despite this minimal communication, we show that surprisingly rich global computation remains possible.

When the network has a unique leader and an upper bound \(U\) on the network size \(n\) is known, we give a terminating algorithm for any desired computable function of the input multiset in \(O(n^3\log^2 n+U)\) rounds, where inputs are drawn from a universe of size \(N=2^{O(n\log n)}\). In addition, without any prior knowledge of \(n\), we design a stabilizing algorithm for the same task that runs in \(O(n^3\log^2 n)\) rounds. Notably, this essentially matches the state of the art for the congested communication model, where messages may carry \(O(\log n)\) bits rather than just one, and general computation is achieved in \(O(n^3)\) rounds. We also obtain companion results for leaderless and multi-leader networks, with comparable performance.

We complement these upper bounds with an almost-matching lower bound of
\[
\Omega\!\left(\frac{n^2\log(N/n)}{\log n}\right)
\]
rounds, which becomes \(\Omega(n^3)\) when \(N=2^{\Omega(n\log n)}\). The proof is an information-theoretic argument on local histories, and the lower bound holds even when the network has a unique leader, \(n\) and \(N\) are known, and the communication graph is restricted to a ring that may change every round.

Our algorithmic techniques are based on extracting global linear equations from local one-bit aggregate observations. A one-bit cut test gives a conservation constraint on the sizes of indistinguishable agent classes; by repeatedly refining these classes and collecting independent constraints, the agents recover the desired multiplicities. For the unknown-size case, we also introduce a self-correcting adaptive flooding primitive of independent interest. Together, these results show that the computational power of congested anonymous dynamic networks is essentially preserved, even when every message is compressed to a single bit.
\end{abstract}

\section{Introduction}
\label{sec:introduction}

Dynamic networks are a central research topic in distributed computing.
In these systems, communication links may change unpredictably over time.
One of the most widely studied models is the
\emph{\(1\)-interval-connected} model. In this synchronous model, a system of $n$ agents
proceeds in rounds and, in every round \(t\), an adversary selects
an arbitrary communication graph \(G_t\) subject only to the requirement
that \(G_t\) be connected.

In this paper, we focus on \emph{anonymous} dynamic networks. Agents
do not have unique identifiers and execute the same deterministic
algorithm. Except for their input, agents start in the same state. Anonymous networks arise naturally when identifiers are unavailable, undesirable, or incompatible
with the application, for example in privacy-sensitive systems,
large-scale populations of simple devices, and biological systems in
which globally unique identifiers do not exist. From an algorithmic
perspective, however, anonymity creates a fundamental symmetry problem:
agents with identical local histories must remain in identical states.

It is often customary to make the minimal symmetry-breaking assumption that the network contains a set of \emph{leader} agents, namely a set of agents with a distinguished initial state whose exact size $k$ is known. The assumption of having a unique leader is the special case in which $k=1$. A symmetry-breaking assumption is generally necessary \cite{michail_chatzigiannakis_spirakis_SSS2013} for distributed algorithms computing functions that depend on the absolute scale of the system, such as \ProblemName{Counting}, which returns the exact number of agents $n$, and \ProblemName{Input Multiset}, which returns the multiset of initial inputs of the agents.

If broadcasting messages of unbounded size is allowed, the presence of a unique leader
is sufficient to reconstruct the complete multiset of inputs. More
generally, every function that is deterministically computable in anonymous
\(1\)-interval-connected networks can be computed in a linear number of
rounds, thanks to a data structure called a \emph{history tree} \cite{diluna_viglietta_FOCS2022}.

A more restrictive setting is the \emph{congested} model, in which each
message contains at most \(O(\log n)\) bits. Despite this bandwidth
restriction, when a unique leader is present, arbitrary functions of the input multiset can still be computed in \(O(n^3)\) rounds \cite{diluna_viglietta_DC2025_congested}. This raises a natural question: how much communication bandwidth is actually required for general computation in anonymous dynamic networks?

We study a minimal transmission model: in every round, each agent emits a single bit. The chosen bit is broadcast to all of its current neighbors. An agent learns the exact number of its neighbors that sent $0$ and the exact number that sent $1$ in the current round.\footnote{In the related beeping model, in which an agent either beeps or listens and cannot count the multiplicity of received beeps, computing the \ProblemName{Counting} function is impossible in an anonymous network with a leader. For example, consider a complete graph with a leader.}

\subsection{Contributions}
\label{subsec:contributions}

We consider \emph{stabilizing} and \emph{terminating} algorithms. An algorithm \emph{stabilizes} if, after a finite number of rounds, the outputs returned by all agents are correct and never change again; their internal states may still keep evolving. An algorithm \emph{terminates} if it not only stabilizes, but every agent eventually enters a final state, which no longer changes.

\paragraph{A quadratic lower bound.}
 We prove that computing the
\ProblemName{Input Set}, that is the set of initial inputs of agents without multiplicity, where inputs are drawn from a universe of size
\(N\geq 2n\), requires
\[
    \Omega\!\left(
        \frac{n^2\log(N/n)}{\log n}
    \right)
\]
rounds. This lower bound holds even when all agents have distinct input values (which is equivalent to having unique identifiers), all agents know \(n\) and \(N\), the network contains a unique leader, the communication graph is a ring in every round, and the algorithm is required only to stabilize. In particular,
when \(N=n^{1+\varepsilon}\) for any constant
\(\varepsilon>0\), the bound becomes $\Omega(n^2)$; when \(N=2^{n\log n}\), the bound becomes \(\Omega(n^3)\).

Note that the same lower bound holds for the \ProblemName{Input Multiset} problem, which also requires determining the number of agents that have been assigned each input value.

We then prove that this lower bound is nearly optimal by designing a terminating algorithm for \ProblemName{Input Set}, whose running time matches the lower bound up to logarithmic factors, provided that a linear upper bound on $n$ is known by the agents.

\paragraph{Leaderless \ProblemName{Input Frequency} and \ProblemName{Input Multiset} in the multi-leader setting.}
One of our main algorithmic contributions is a method for converting local counts into linear equations describing the multiplicities of classes of indistinguishable agents. The algorithm repeatedly refines these classes and collects independent equations until their relative frequencies are uniquely determined.

Given a common upper bound \(U\geq n\), the algorithm solves \ProblemName{Input Frequency}, the problem of computing the relative frequency of each input, without a leader in $O\!\left(UnB_{\max}+Un^2\log n \right)$ rounds, where \(B_{\max}\) is the maximum bit length of an input value.

We remark that computing input frequencies is complete for the class of problems solvable in leaderless anonymous dynamic networks \cite{diluna_viglietta_DISC2023_disconnected}, and it yields an immediate solution to the well-studied average consensus problem \cite{olshevsky_SICON2017}.

When a known set of leaders is available, that is, when the class containing the leaders has known positive multiplicity,\footnote{This is also the case in which one distinguished input value and its multiplicity are known a priori.} we can use the number of leaders as a fixed normalization factor for the linear system.
This allows every agent to recover the multiplicity of every input value. We therefore obtain algorithms for \ProblemName{Input Multiset} and \ProblemName{Counting} with the same asymptotic round complexity. For \(U=\Theta(n)\) and \(B_{\max}=O(\log n)\), the resulting complexity is $O(n^3\log n)=\widetilde{O}(n^3)$.

Thus, at the level of round complexity, restricting every sender to one-bit transmissions introduces only a logarithmic factor compared with the \(O(n^3)\)-round algorithm known for congested anonymous dynamic networks \cite{diluna_viglietta_DC2025_congested}.

Under the mild additional assumption that, at the beginning of each round, every agent can query an oracle returning its current degree, we show how a known number of leaders can be used to compute an exponential upper bound \(U\) on the network size.
Combining this bound with our known-bound procedure described above leads to a terminating, albeit exponential-time, algorithm that requires no prior global knowledge of the network.

\paragraph{A stabilizing polynomial algorithm with a single leader and no prior knowledge.}
Finally, in the case of a unique leader, we remove the assumption that agents initially know a common upper bound on the network size, as well as any other a priori assumption or oracle. We introduce a technique for building a stabilizing flooding algorithm that broadcasts a message using a self-correcting network-size estimate. The broadcast either completes or enters a special error state that ends up modifying the estimate. Based on this primitive, we obtain a stabilizing algorithm that does not explicitly terminate but computes \ProblemName{Input Multiset}, and thus \ProblemName{Counting}, with round complexity $O\!\left( n^2B_{\max}\log n+n^3\log^2 n\right)$
without any initial knowledge of \(n\).

If \(U\geq n\) is given, this technique yields a terminating algorithm for \ProblemName{Input Multiset} with complexity $O\!\left(n^2 B_{\max}\log n + n^3 \log^2 n + U\right)$. Note that this algorithm achieves a slightly better running time than the one in the previous section, although it requires a unique leader.

Interestingly, if the input universe consists of the integers from \(1\) to \(N=2^{n\log n}\), then we have \(B_{\max}=n\log n\), and the previous upper bounds reduce to \(O(n^3\log^2 n)\), matching our lower bound of \(\Omega(n^3)\) up to logarithmic factors. For larger input universes, we have again almost-matching upper and lower bounds of \(O(n^2\log N \log n)\) and \(\Omega(n^2\log N/\log n)\), respectively.

Table~\ref{tab:results} summarizes the main results of this paper.


\begin{table}[ht]
\centering
\scriptsize
\setlength{\tabcolsep}{4pt}
\renewcommand{\arraystretch}{1.18}

\begin{tabularx}{\linewidth}{@{}l l X@{}}
\toprule
\textbf{Problem} &
\textbf{Assumptions} &
\textbf{Result} \\
\midrule

\makecell[l]{\ProblemName{Input Set}\\\ProblemName{Input Multiset}} &
\makecell[l]{Unique leader\\\(N\geq 2n\)} &
Stabilization requires
\(\Omega\!\left(n^2\log(N/n)/\log n\right)\) rounds
(\S~\ref{sec:lower-bound-bc}) \\
\arrayrulecolor{black!15}\specialrule{0.35pt}{0pt}{0pt}\arrayrulecolor{black}

\ProblemName{Input Set} &
\makecell[l]{Known \(U\)} &
Terminating algorithm in $O(Un(1+B_{\max}))$ rounds
(\S~\ref{sec:inputset}) \\
\arrayrulecolor{black!15}\specialrule{0.35pt}{0pt}{0pt}\arrayrulecolor{black}

\ProblemName{Input Frequency} &
\makecell[l]{Known \(U\)} &
Terminating algorithm in \(O(Un B_{\max}+Un^2\log n)\) rounds
(\S~\ref{sec:input-frequency}) \\
\arrayrulecolor{black!15}\specialrule{0.35pt}{0pt}{0pt}\arrayrulecolor{black}

\ProblemName{Input Multiset} &
\makecell[l]{Multi-leader\\Known \(U\)} &
Terminating algorithm in \(O(Un B_{\max}+Un^2\log n)\) rounds
(\S~\ref{sec:frequency-to-multiset}) \\
\arrayrulecolor{black!15}\specialrule{0.35pt}{0pt}{0pt}\arrayrulecolor{black}

\ProblemName{Counting} &
\makecell[l]{Multi-leader\\Known \(U\)} &
Terminating algorithm in \(O(Un^2\log n)\) rounds
(\S~\ref{sec:frequency-to-multiset}) \\
\arrayrulecolor{black!15}\specialrule{0.35pt}{0pt}{0pt}\arrayrulecolor{black}

\makecell[l]{\ProblemName{Input Frequency}\\\ProblemName{Input Multiset}} &
\makecell[l]{Multi-leader\\Local degree oracle} &
Terminating algorithm in \(O(n^nB_{\max}+n^{n+1}\log n)\) rounds
(\S~\ref{sec:upperbound}) \\
\arrayrulecolor{black!15}\specialrule{0.35pt}{0pt}{0pt}\arrayrulecolor{black}

\ProblemName{Counting} &
\makecell[l]{Multi-leader\\Local degree oracle} &
Terminating algorithm in \(O(n^{n+1}\log n)\) rounds
(\S~\ref{sec:upperbound}) \\
\arrayrulecolor{black!15}\specialrule{0.35pt}{0pt}{0pt}\arrayrulecolor{black}

\ProblemName{Input Multiset} &
Unique leader &
Stabilizing algorithm in
\(O(n^2 B_{\max}\log n+n^3\log^2 n)\) rounds
(\S~\ref{sec:stabilizing-input-multiset-counting}) \\
\arrayrulecolor{black!15}\specialrule{0.35pt}{0pt}{0pt}\arrayrulecolor{black}

\ProblemName{Counting} &
Unique leader &
Stabilizing algorithm in \(O(n^3\log^2 n)\) rounds
(\S~\ref{sec:stabilizing-input-multiset-counting}) \\
\arrayrulecolor{black!15}\specialrule{0.35pt}{0pt}{0pt}\arrayrulecolor{black}

\ProblemName{Input Multiset} &
\makecell[l]{Unique leader\\Known \(U\)} &
Terminating algorithm in
\(O(n^2 B_{\max}\log n+n^3\log^2 n+U)\) rounds
(\S~\ref{subsec:bc-terminating-common-bound}) \\
\arrayrulecolor{black!15}\specialrule{0.35pt}{0pt}{0pt}\arrayrulecolor{black}

\ProblemName{Counting} &
\makecell[l]{Unique leader\\Known \(U\)} &
Terminating algorithm in \(O(n^3\log^2 n+U)\) rounds
(\S~\ref{subsec:bc-terminating-common-bound}) \\

\bottomrule
\end{tabularx}

\caption{%
Summary of the main results. Here, \(B_{\max}\) denotes the maximum bit length of an input value, \(U\geq n\) denotes an upper bound on the network size, and \(N\)
denotes the input universe size. For the multi-leader
assumption we have a set of \(k\geq 1\) leaders with \(k\) known; the unique-leader assumption is the special case where \(k=1\).
}
\label{tab:results}
\end{table}

\paragraph{Challenges and significance of the results.} These results show that general computation in anonymous dynamic networks
remains possible even when communication is reduced below the congested
threshold. To the best of our knowledge, this is the first work to study
deterministic global computation in anonymous dynamic networks with one-bit
broadcast communication. The setting is particularly hostile because
three restrictions interact: agents have no identifiers, the topology changes
adversarially from round to round, and messages contain only one bit. The
difficulty is not merely that messages are short. In an anonymous dynamic
network, a long message cannot simply be divided into small pieces and
reassembled later, because the pieces carry no persistent source information and
the neighbors of an agent may change before the next piece is received. In the
congested model, this obstacle can still be overcome by sending \(O(\log n)\)-bit
objects, such as temporary labels or edge descriptors \cite{diluna_viglietta_DC2025_congested}. In our model, even this
is unavailable: an agent broadcasts only a bit, and its feedback is only the
number of neighbors that sent each bit. Thus agents cannot transmit temporary
identifiers, attach names to fragments of information, or send numerical
quantities directly.

This makes the comparison with history-tree algorithms especially informative.
History trees were a breakthrough in the study of anonymous dynamic networks:
they led to optimal linear-time computability in the non-congested model and
were later adapted to obtain the \(O(n^3)\)-round state of the art for congested
anonymous dynamic networks
\cite{diluna_viglietta_FOCS2022,diluna_viglietta_DISC2023_disconnected,diluna_viglietta_DC2025_congested}.
The congested adaptation still builds on the history-tree paradigm: it
constructs and transmits compacted history-tree information using temporary
non-unique identifiers and carefully controlled broadcasts. Our algorithms do
not construct history trees, do not exchange vistas, and do not simulate the
congested history-tree transmission. Nevertheless, for a linear upper bound
\(U=\Theta(n)\) and \(B_{\max}=O(\log n)\), our terminating algorithms compute
input multiplicities in \(O(n^3\log n)\) rounds, and our unique-leader
stabilizing algorithm without prior knowledge computes them in
\(O(n^3\log^2 n)\) rounds. Hence one-bit communication matches the congested
state of the art up to logarithmic factors.

Our unique-leader algorithm also introduces a self-correcting broadcast primitive
that is of independent interest. The adaptive flooding layer of
Section~\ref{sec:anonymous-bc-unknown-bound} replaces the usual assumption of a
known upper bound \(U\geq n\) by a speculative estimate \(\widehat U\). If
\(\widehat U\geq n\), the primitive behaves like an ordinary flooding operation; otherwise, the certificates expose the inconsistency and trigger a restart with a larger estimate. Thus the layer provides a reusable way to convert flooding-based algorithms that assume a known size bound into stabilizing algorithms that require no prior knowledge of the
network size.

The results in this paper also separate our approach from the older local averaging and
mass-distribution methods used for counting and average-consensus-type tasks in
anonymous dynamic networks
\cite{diluna_baldoni_bonomi_chatzigiannakis_ICDCS2014,kowalski_mosteiro_ICALP2018,kowalski_mosteiro_JCSS2021,olshevsky_SICON2017}.
Those techniques rely on the ability to move and compare numerical quantities
through the network. Such operations have no direct analogue in the one-bit
broadcast-counting model. Instead, our algorithms extract information from the
only numerical data that remain available: the aggregate counts of neighbors
that chose \(0\) and \(1\). A one-bit cut test gives a linear conservation
constraint on the sizes of refined indistinguishability classes; after enough
independent constraints have been collected, the class frequencies, and then the
input multiplicities when leaders are available, are determined.

Thus the paper identifies a different route to global computation: not by reconstructing
history, transmitting identifiers, or redistributing mass, but by turning
one-bit aggregate observations into a solvable global linear system. Furthermore, for
logarithmic-size inputs, this representation uses only \(O(n^2\log n)\) bits of
local state. This is substantially smaller than storing
history-tree vistas which, after $t$ rounds, require \(O(tn^2\log n)\) bits of local state in the worst case \cite{viglietta_SIROCCO2024}.

\paragraph{Paper overview}
Section~\ref{sec:sysmod} defines the model and tasks.
Section~\ref{sec:lower-bound-bc} proves the input-set lower bound.
Section~\ref{sec:basic} gives the basic one-bit primitives and the near-optimal input-set algorithm.
Section~\ref{sec:leader-anonymous-bc} gives known-bound algorithms for \ProblemName{Input Frequency} and \ProblemName{Input Multiset}.
Section~\ref{sec:anonymous-bc-unknown-bound} removes the known-bound assumption in the unique-leader case by introducing the self-correcting adaptive flooding layer.
Section~\ref{sec:stabilizing-input-multiset-counting} uses this layer to build the stabilizing \ProblemName{Input Multiset} algorithm.
Section~\ref{sec:related-work} discusses related work.

\section{System Model} \label{sec:sysmod}
\paragraph{Agents and interval connectivity.}
The system consists of a finite, nonempty set $V$ of $n$ agents.
Time is divided into synchronous rounds. In round $t\geq 1$, the
communication topology is a simple undirected graph $G_t=(V,E_t)$ chosen by an adversary. A dynamic network is \emph{\(1\)-interval-connected} if $G_t$ is connected
in every round $t$.

\paragraph{Computation, anonymity and leaders.}
In an anonymous system, agents have no identifiers. Each agent $v$ starts with an input $\lambda(v)$, which is assigned to it at round $0$. When relevant, its local data also include its leader flag, and its initial state may depend on the common knowledge assumed, such as knowledge of a \(U \geq n\).
 When we discuss local space, we mean the number of bits needed
to encode an agent's local state that includes its output register.

At the beginning of each round $t \geq 0$, during the communication step, each agent chooses, based on its internal state, whether to send $0$ or $1$. This message is then delivered to all its neighbors.  After the communication step and before the end of the round, the agent learns the number of neighbors that sent $0$ and the number of neighbors that sent $1$. At this point, each agent uses its state and the multiset of received messages to update its state according to a deterministic algorithm ${A}$, and then proceeds to the next round.\footnote{An agent receives only a multiset of $0$'s and $1$'s, and no other information, such as port numbers or similar identifiers.} Note that ${A}$ is the same for all agents. We call this the \emph{one-bit broadcast-counting model}.

When we assume the presence of multiple leaders, we mean that the input of each agent also includes a distinguished input flag, called the leader flag, and that exactly $k$ agents start with this flag on, where $k$ is known to all agents. The unique-leader setting is the special case $k=1$.

When we assume an upper bound on the network size, we assume that there is a quantity $U\geq n$ known to all agents.

In one result, we also assume the presence of a \emph{local degree oracle}. In this case, each agent $v$ has access to an oracle that it can query at the beginning of each round $t$ to learn the number of its neighbors in $G_t$.

\paragraph{Stabilization and termination.}
Each agent produces an output at the end of every round. An execution
\emph{stabilizes} if, from some round onward, every agent has the
correct output and no output changes. Internal states may continue to
evolve after stabilization.

An agent \emph{terminates} when it enters a final state from which no
further transition is possible. An execution terminates when every
agent has terminated.

The \emph{stabilization time} is the least $T$ such that, at the end of every round $t\geq T$, every agent has the correct output and no output changes thereafter. The \emph{termination time} of an execution is the round in which all agents have terminated.
An \emph{admissible execution} is an execution consistent with the algorithm, the assumed inputs and leader flags, the assumed initial knowledge, and an adversarial sequence of \(1\)-interval-connected communication graphs.
An algorithm \emph{computes} a function if every admissible execution
stabilizes to the correct output. It computes the function \emph{with
termination} if every agent also eventually terminates.

\paragraph{Functions and Problems.}

An \emph{input assignment} is a map \(\lambda:V\to\mathbb N_{>0}\), together with any leader flags required by the theorem under consideration. The global input multiset of \(\lambda\) is
\[
    \mu_\lambda=\{\!\{\,\lambda(v):v\in V\,\}\!\}.
\]

Unless a theorem specifies a finite input universe, input values are positive integers. We write
$B_{\max}= \max_{v\in V} \left\lceil \log(\lambda(v)+1) \right\rceil$
for the maximum input bit length.
We investigate the following problems:
\begin{itemize}

\item \textbf{\ProblemName{Counting}.}
The \ProblemName{Counting} function is $F_C(v,\lambda)=n$: every agent must
output the size of the system.

\item \textbf{\ProblemName{Input Multiset}.}
The \ProblemName{Input Multiset} function is
$F_{IM}(v,\lambda)=\mu_\lambda$. Equivalently, for each input value $a$,
every agent must output its multiplicity $|\{u\in V:\lambda(u)=a\}|$. \ProblemName{Counting} reduces to \ProblemName{Input Multiset} by having every agent ignore its original
input and invoke the multiset routine with the same synthetic constant input.

\item \textbf{\ProblemName{Input Set}.}
The \ProblemName{Input Set} function is $F_{IS}(v,\lambda)=\{\lambda(u):u\in V\}$. Equivalently, every agent must output the set of input values that occur in the network, ignoring their multiplicities.

\item \textbf{\ProblemName{Input Frequency}.}
The \ProblemName{Input Frequency} function maps each input value \(a\) to its
relative multiplicity
\[
    \frac{|\{u\in V:\lambda(u)=a\}|}{n}.
\]
Equivalently, it is the normalized version \(\frac{1}{n}\mu_\lambda\) of the
input multiset.

\item \textbf{\ProblemName{Counting} upper bound.}
Every agent must eventually output the same finite integer $U$ satisfying \(U\geq n\).
\end{itemize}

\section{Lower Bound for Input-Set Computation}
\label{sec:lower-bound-bc}

We prove that computing the \ProblemName{Input Set} function requires
\[
    \Omega\!\left(\frac{n^2\log(N/n)}{\log n}\right)
\]
rounds when the input values are drawn from a universe of size $N$. In particular, if $N=n^{1+\varepsilon}$ for a fixed $\varepsilon>0$, the lower bound is $\Omega(n^2)$, even though each input has only $O(\log n)$ bits. This remains true even if the network has a
unique leader and even if all agents know both $n$ and $N$. Clearly, the same lower bound applies to the \ProblemName{Input Multiset} problem.

The proof is a counting argument on local histories. The adversary chooses each communication graph so that, in every round, all but a constant number of agents receive an observation that is determined by their current local history. We will show that, if too few rounds have passed, some agent has too few possible local histories to distinguish all possible input sets. Hence two different input assignments remain indistinguishable to that agent.

\begin{theorem}
\label{thm:bc-input-set-lower-bound-general-recall}
Suppose that input values are drawn from a universe of size $N\geq 2n$. In anonymous \(1\)-interval-connected dynamic networks, every deterministic algorithm that computes the Input
Set function requires
\[
    \Omega\!\left(
        \frac{n^2\log(N/n)}{\log n}
    \right)
\]
rounds in the worst case. This holds even if there is a unique leader, even if all
agents know $n$ and $N$, even if all agents are assigned different input values, and
even if the communication graph at every round is a ring (not necessarily the same ring in every round).
\end{theorem}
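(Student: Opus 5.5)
The adversary keeps almost every agent's observation predictable, so that after $T$ rounds a single agent's local history is a short string. A pigeonhole argument over input sets then produces two different input sets that this agent cannot tell apart, which is impossible if its output has stabilized correctly.

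\textbf{Setup.} Fix a deterministic algorithm that stabilizes on \ProblemName{Input Set} within $T$ rounds. Consider assignments giving the $n$ agents distinct values from a universe of size $N\geq 2n$, with one designated leader. In each round the adversary arranges the agents on a ring. Each agent $v$ has two ring neighbors, so its observation is the number of $1$'s among its two neighbors' bits, a value in $\{0,1,2\}$.

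\textbf{Ring construction.} Before round $t$, every agent's bit is determined by its current state. Let $Z_t$ be the agents sending $0$ and $O_t$ those sending $1$. The adversary places all of $Z_t$ consecutively, followed by all of $O_t$.
\begin{itemize}
\item Every agent whose two neighbors lie in its own block observes $0$ (if in $Z_t$) or $2$ (if in $O_t$). This observation is a function of its own bit, hence of its local history.
\item Only the at most four agents at the two block boundaries receive other observations, so only these can get a ``surprising'' one.
\end{itemize}
Within each block the order is free. The adversary rotates it so that, over the execution, surprises are spread evenly, for instance by always putting at the boundaries the agents that have received the fewest surprises so far (a round-robin or potential argument). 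After $T$ rounds the total number of surprises is at most $4T$, so some agent $v$ has received at most $4T/n$ of them.

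\textbf{Counting histories.} The local history of $v$ is determined by three things:
\begin{itemize}
\item its input;
\item the set of rounds at which surprises occurred, giving at most $\binom{T}{\le 4T/n}\le T^{O(T/n)}$ choices;
\item the surprise values, giving at most $3^{O(T/n)}$ choices.
\end{itemize}
The history is thus one of at most $N\cdot 2^{O((T/n)\log T)}$ possibilities. To apply pigeonhole we need, for one fixed $v$, many distinct input sets that all yield a history with few surprises for $v$. We get this by fixing $v$'s input and its role (e.g.\ $v$ is not the leader, or is the leader) and counting the input sets compatible with that: there are $\binom{N-1}{n-1}\geq (N/n)^{n-1}$ of them.

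\textbf{Main obstacle.} The difficulty is that the identity of the low-surprise agent $v$ depends on the inputs. I would handle this by averaging: over all assignments, each position index is the minimizer for at least a $1/n$ fraction of them, which costs only an extra factor of $n$ (and a factor $N$ for $v$'s input). To avoid the extra $T^{O(T/n)}$ blowup, I would make the adversary's spreading rule depend only on the surprise counts, which all agents could in principle track, so that it is canonical.

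\textbf{Conclusion.} If
\[
N\, n\, 2^{O((T/n)\log T)} < \binom{N}{n}\approx 2^{n\log(N/n)},
\]
then two assignments with different input sets give $v$ identical histories, hence the same output, a contradiction. The distinctness of the input sets needs $v$'s own value to be fixed among them while the rest differ. Solving the inequality gives
\[
T\log T=\Omega\bigl(n^2\log(N/n)\bigr),
\]
and since we may assume $T\le\mathrm{poly}(n,\log N)$ (otherwise the bound already holds), $\log T=O(\log n)$ up to the additive term we absorb when $N$ is huge. This yields $T=\Omega(n^2\log(N/n)/\log n)$.
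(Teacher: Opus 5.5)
Your architecture is the paper's: the two-block ring adversary, forced observations for agents away from the cut, averaging to find an agent with $O(T/n)$ surprises, and a pigeonhole of histories against input sets. However, the final step has a genuine gap.

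You bound the surprise patterns by $\binom{T}{\le 4T/n}\le T^{O(T/n)}$, which gives $T\log T=\Omega(n^2\log(N/n))$. You then assert $\log T=O(\log n)$ because one may assume $T\le\mathrm{poly}(n,\log N)$. That assumption only gives $\log T=O(\log n+\log\log N)$, and the $\log\log N$ term is not absorbed when $N$ is huge. For $N=2^{2^n}$ the theorem claims $\Omega(n^2 2^n/\log n)$, whereas your inequality is already satisfied by $T\approx n2^n$. As written, you prove $\Omega\bigl(n^2L/\log(nL)\bigr)$ with $L=\log(N/n)$. This equals the claimed bound only when $N\le 2^{n^{O(1)}}$, not for every $N\ge 2n$.

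The repair is the estimate the paper uses, $\binom{T}{r}\le(eT/r)^r$. Since $x\mapsto x\log(3eT/x)$ is increasing for $x\le 3T$, with $R=\lceil 4T/n\rceil$ one gets $\sum_{r\le R}\binom{T}{r}3^r\le(R+1)(3eT/R)^R$. Here $T/R=O(n)$, so the count is $2^{O((T/n)\log n)}$. With $\log n$ in place of $\log T$, setting $T=c\,n^2L/\log n$ gives the contradiction for every $N\ge 2n$, and no a priori bound on $T$ is needed.

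A few smaller remarks:

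\begin{itemize}
\item The ``spread surprises evenly'' rule is not achievable in general: if a single agent sends $1$, it sits at the cut in every ring. You never need it, though, since plain averaging over the at most $4T$ surprises suffices.
\item Making the adversary's rule canonical cannot remove the $\binom{T}{r}$ factor, because $v$ cannot predict its own surprise rounds. That factor has to be bounded sharply, not eliminated.
\item The paper handles your ``main obstacle'' more simply. It pigeonholes over pairs $(S,v)$ with $v$ quiet: at least half of the non-leaders are quiet in every execution, the history records the input value, and inputs are distinct within one execution. Hence two colliding pairs automatically have $S\neq S'$.
\item Your index-and-value averaging also works, at a cost of only $\log(nN)=o(n\log(N/n))$ bits. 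Two caveats: the correct statement is that some index (not each) is the minimizer for a $1/n$ fraction, and you must fix one labeled execution per input set.
\end{itemize}
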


\begin{proof}
Fix a deterministic algorithm $A$. We prove the lower bound in the
stronger setting in which there is a unique leader and all agents
know both $n$ and $N$, and restrict the adversary to choosing a ring in
every round. Giving information to the agents and restricting the
adversary can only help the algorithm.
Give the unique leader a fixed input value that is
not assigned to any other agent. The remaining $n-1$ non-leader agents
receive distinct input values from a set $U_N$ of size $N$, assigned to them in a well-defined canonical order. We consider
one execution for every set
\[
    S\subseteq U_N,\qquad |S|=n-1.
\]

The adversary will choose a ring as the communication graph in every
round. Hence every agent always has degree $d=2$, and its observation is completely determined by the number
$q\in\{0,1,2\}$ of its neighbors that sent $1$. We therefore identify an
observation with $q$ throughout the proof.

We now describe the adversary. Consider the beginning of a round, before
the graph of the round has been chosen. For each agent $v$, let
$b_v$ be the bit that $v$ sends in this round, given its current local
history and degree $2$. Since the algorithm $A$ is
deterministic and the adversary knows the execution so far, the values
$b_v$ are well defined. Let
\[
    V^0=\{v:b_v=0\},
    \qquad
    V^1=\{v:b_v=1\}.
\]

The adversary arranges the agents on a ring so that the agents in
$V^0$ form one contiguous block and the agents in $V^1$ form the
other. If one of the two classes is empty, it chooses an arbitrary ring.
Thus, the communication graph is a ring in every round, although its
cyclic order may change between rounds.

We call an agent $v\in V^b$ \emph{active} in this round if at least one
of its neighbors belongs to $V^{1-b}$. All other agents are called
\emph{inactive} in this round. Since a ring
containing two nonempty contiguous blocks has exactly two edges between
the blocks, at most $C=4$ agents are active in each round.

Observe that every inactive agent in a class $V^b$ has degree $2$, and both of its neighbors
belong to the same class $V^b$. Therefore, each of these three agents sends a bit according to the prediction used to
define the classes $V^0$ and $V^1$. It follows that an inactive agent
observes a number of ones that is:
\[
    0 \quad\text{if } b=0,
    \qquad
    2 \quad\text{if } b=1.
\]

An active agent can have $D=3$ observations $0, 1$ and $2$.

For an execution and an agent $v$, let $H_v(T)$ be the local history of
$v$ up to time $T$: it consists of the initial state of $v$, including
its input value and whether it is the leader, together with the sequence
of its observations during the first $T$ rounds. Since the
algorithm $A$ is deterministic, $H_v(T)$ determines the state of $v$ at time
$T$.

Suppose that a non-leader agent $v$ is active in at most $R$ of the
first $T$ rounds. Then the number of possible histories $H_v(T)$ is at
most
\[
    N\sum_{r=0}^{R}\binom{T}{r}D^r .
\]
Indeed, such a history is determined by:
\begin{enumerate}
\item the input value of $v$, with at most $N$ choices;
\item the set of active rounds, with $\binom{T}{r}$ choices if there are
$r$ active rounds;
\item the observations in the active rounds, with at most $D^r$ choices.
\end{enumerate}
The inactive observations do not contribute to this count, because they are forced. Given the
history of $v$ before a inactive round, the algorithm determines the
bit $b$ that $v$ sends. Since the round is
inactive, the observation is then necessarily
\[
    0 \quad\text{if } b=0,
    \qquad
    2 \quad\text{if } b=1.
\]

Since at most $C$ agents are active in any one round, during the
first $T$ rounds there are at most $CT$ pairs $(v,t)$ such that agent
$v$ is active in round $t$. At least half of the $n-1$ non-leader agents must be active in at most
\[
    R=\left\lceil \frac{2CT}{n-1}\right\rceil
\]
rounds. Indeed, if more than half of the non-leader agents were
active in more than $R$ rounds, then these agents alone would account
for more than $CT$ such pairs, a contradiction.

We call a pair $(S,v)$ \emph{quiet} if $S\subseteq U_N$ is one of the
input sets under consideration, $v$ is a non-leader agent in the
execution associated with $S$, and $v$ is active in at most $R$ of the
first $T$ rounds. Hence, for every input set $S$, at least $(n-1)/2$
choices of $v$ give a quiet pair. Therefore the total number of quiet pairs is
at least
\[
    \frac{n-1}{2}\binom{N}{n-1}.
\]
On the other hand, as shown above, the number of possible local
histories of an agent that is active in at most \(R\) rounds is at most
\[
    N\sum_{r=0}^{R}\binom{T}{r}D^r .
\]
Thus this also bounds the number of possible local histories that can
occur in quiet pairs.

We compare these two quantities. Since $N\geq 2n$,
\[
    \binom{N}{n-1}
    \geq
    \left(\frac{N}{n-1}\right)^{n-1}
    \geq
    \left(\frac{N}{n}\right)^{n-1}.
\]
Thus
\[
    \log\left(
        \frac{n-1}{2}\binom{N}{n-1}
    \right)
    \geq
    (n-1)\log(N/n)-O(\log n).
\]

Now set
\[
    T=
    c\,\frac{n^2\log(N/n)}{\log n}
\]
for a constant $c>0$ to be chosen sufficiently small. Then
\[
    R=\left\lceil \frac{2CT}{n-1}\right\rceil
    =
    O\!\left(
        c\,\frac{n\log(N/n)}{\log n}
    \right).
\]

We now count the number of possible local histories that can arise from
a quiet pair. From Stirling's formula, we have $r!\geq (r/e)^r$. Hence, for every $1\leq r\leq R$,
\[
    \binom{T}{r}
    =
    \frac{T(T-1)\cdots(T-r+1)}{r!}
    \leq
    \frac{T^r}{r!}
    \leq
    \left(\frac{eT}{r}\right)^r .
\]
Therefore
\[
    \binom{T}{r}D^r
    \leq
    \left(\frac{eDT}{r}\right)^r .
\]

The function
\[
    x\mapsto x\log\left(\frac{eDT}{x}\right)
\]
is increasing for $x\leq DT$, because its derivative is
\[
    \log(DT/x).
\]
For all sufficiently large $n$, we have $R\leq T\leq DT$. Hence every
term with $1\leq r\leq R$ is at most
\[
    \left(\frac{eDT}{R}\right)^R .
\]
The term $r=0$ is $1$, which is also bounded by the same expression.
Thus
\[
    \sum_{r=0}^{R}\binom{T}{r}D^r
    \leq
    (R+1)\left(\frac{eDT}{R}\right)^R .
\]

Since
\[
    R=\left\lceil \frac{2CT}{n-1}\right\rceil,
\]
we have $T/R=O(n)$. Therefore
\[
\begin{aligned}
    \log\left(
        N\sum_{r=0}^{R}\binom{T}{r}D^r
    \right)
    &\leq
    \log N
    +
    \log(R+1)
    +
    R\log\left(\frac{eDT}{R}\right)        \\[1mm]
    &=
    \log N
    +
    \log(R+1)
    +
    R\left(\log(eD)+\log\left(\frac{T}{R}\right)\right) \\[1mm]
    &\leq
    \log N
    +
    R(\log n+O(1))
    +
    O(\log R)                              \\[1mm]
    &\leq
    \log N
    +
    O\!\left(c\,n\log(N/n)\right)
    +
    O(\log R),
\end{aligned}
\]
where we used our previous bound on $R$.

Set $L=\log(N/n)\geq 1$. Observe that both \(L\) and \(O(\log n)\) are \(o(nL)\). Hence the lower bound on the number of quiet pairs is
\[
    \log\left(
        \frac{n-1}{2}\binom{N}{n-1}
    \right)
    \geq
    (n-1)L-O(\log n)
    =
    nL-L-O(\log n)
    =
    nL-o(nL).
\]

The upper bound on the number of quiet local histories is
\[
    \log\left(
        N\sum_{r=0}^{R}\binom{T}{r}D^r
    \right)
    \leq
    \log N+O(cnL)+O(\log R)
    =
    O(cnL)+o(nL),
\]
because \(\log N=\log n+L=o(nL)\), and from
\[
    R=O\!\left(c\,\frac{nL}{\log n}\right)
\]
we also have \(\log R=o(nL)\).

Choosing \(c>0\) sufficiently small, the upper bound is strictly smaller
than the lower bound for all sufficiently large \(n\). Therefore
\[
    N\sum_{r=0}^{R}\binom{T}{r}D^r
    <
    \frac{n-1}{2}\binom{N}{n-1}.
\]
We conclude that there are more quiet pairs than possible local histories
that can arise from quiet pairs. By the pigeonhole principle, there exist two distinct quiet pairs $(S,v)$ and $(S',v')$ such that the local history of $v$ in the
execution associated with $S$ is equal to the local history of $v'$ in
the execution associated with $S'$, up to time $T$.

A local history contains the agent's input value; hence \(v\) and
\(v'\) have the same input value. We claim that \(S\neq S'\). Indeed, if
\(S=S'\), then, since all inputs in \(S\) are distinct and assigned to agents in a well-defined canonical order, there is only one
agent with that input value. Thus \(v=v'\), contradicting the fact
that $(S,v)$ and $(S',v')$ are distinct. Therefore \(S\neq S'\).

However, $v$ and $v'$ have identical local histories up to time $T$,
and hence identical states at time $T$. Since the algorithm $A$ is
deterministic, both agents produce the same output at time $T$. But the correct
\ProblemName{Input Set} outputs are different, because one execution has input set
$S$ together with the fixed leader input, while the other has input set
$S'$ together with the same fixed leader input. Thus the same output
cannot be correct in both executions.

Consequently, $A$ cannot guarantee correctness by time
\[
    c\,\frac{n^2\log(N/n)}{\log n}.
\]
This proves the claimed lower bound.
\end{proof}

As a complementary observation, the frequency problem also requires large local
state when the input universe is large.

\begin{remark}
\label{rem:bc-frequency-space-lower-bound}
Suppose that input values are drawn from a finite universe of size \(N\geq n\). Every deterministic algorithm that computes the \ProblemName{Input Frequency} function on networks of \(n\) agents requires at least $\Omega(n \log(1+\frac{N}{n}))$ bits of local space in the worst case. This holds even if all agents know \(n\) and \(N\), and even if the communication graph is static and complete in every round.
\end{remark}

\begin{proof}
Fix an agent \(v\) and fix its input to some value \(a\) in the input universe. Consider all possible multiplicity vectors for the inputs of the remaining \(n-1\) agents. Such a vector consists of \(N\) nonnegative integers whose sum is \(n-1\), so there are \(\binom{n+N-2}{N-1}\) possible vectors.

Each of these vectors gives a different correct \ProblemName{Input Frequency} output at \(v\). Indeed, because \(n\) is fixed, the frequency of each value uniquely determines its multiplicity. Thus \(v\) must be able to produce at least \(\binom{n+N-2}{N-1}\) distinct outputs while its own input remains fixed.

If \(v\) uses at most \(s\) bits of local space, it has at most \(2^s\) possible local states. Since the output of an agent is determined by its local state, \(v\) can produce at most \(2^s\) distinct outputs. Consequently, \(2^s\geq\binom{n+N-2}{N-1}\).
Now by the symmetry of the binomial coefficient we have $\binom{n+N-2}{N-1}=\binom{n+N-2}{n-1}$. Using the standard estimate $\binom{m}{k}\geq (m/k)^k$ up to constant factors in the exponent, we get $\log \binom{n+N-2}{n-1}=\Omega((n-1)\log(\frac{n+N-2}{n-1}))$. Since \(N\geq n\), this simplifies to $s \geq \Omega(n \log(1+\frac{N}{n}))$.
\end{proof}

From the above, when \(N=n^{1+\varepsilon}\) for some \(\varepsilon>0\), each agent
requires \(\Omega(n\log n)\) bits of local space in the worst case.

\section{Basic Primitives}
\label{sec:basic}
In this section, we discuss some basic primitives that will be used by our algorithms and the pseudocode convention that we will use. We first present the \Flood algorithm, a primitive that computes the collective OR of the bits held by all agents. On top of this primitive, we build another primitive that computes the \ProblemName{Input Set} of the agents' values. Recall that this set contains the input values, but not their multiplicities. Both procedures require an upper bound $U$ on the network size.

\subsection{Computing the OR: \Flood}

The \Flood procedure
takes a duration $k$ and a local boolean value $b$. In each of the next
$k$ rounds, every agent sends its current value of $b$ to its
neighbors. If an agent receives at least one value equal to $1$, it
sets $b$ to $1$. After the $k$ communication rounds, every agent
returns $b$. We will show that, if \(k\geq n-1\), then, at the end of the procedure, each
agent returns the correct OR of all initial boolean values.
The pseudocode is given in Algorithm~\ref{alg:flood}.

\paragraph{Pseudocode convention.}
We also use Algorithm~\ref{alg:flood} to explain our pseudocode convention, which follows a standard convention of synchronous distributed algorithms (see, for example, \cite{lynch1996distributed}).

All pseudocode is written for an arbitrary agent, and the same code is
executed by every agent. Unless stated otherwise, all variables are local.
A statement ``send \(b\)'' represents one synchronous round: every agent
broadcasts the value of \(b\), computed from its local state, to its current
neighbors. After this statement, the agent knows how many neighbors sent
\(0\) and how many sent \(1\). All other statements are local computations
and do not consume rounds. Hence a loop with \(k\) iterations and one
``send'' statement per iteration uses \(k\) rounds.

\begin{algorithm}[H]
\caption{\Flood}
\label{alg:flood}

\KwInput{A duration $k\geq0$ and a boolean value $b \in \{0,1\}$}
\KwOutput{If $k \geq n-1$, returns the OR of all initial boolean values}

\Fn{\AlgFlood{$k,b$}}{
  \For{$t \leftarrow 1$ \KwTo $k$}{
    send $b$\;

    \If{at least one neighbor sent $1$}{
      $b \leftarrow 1$\;
    }
  }

  \Return{$b$}\;
}
\end{algorithm}

\begin{lemma}
\label{lem:flood-correctness}
Consider a \(1\)-interval-connected dynamic network. Let $S_0$ be the set
of agents whose initial boolean value is $1$. After \Flood$(k,b)$:
\begin{enumerate}
    \item an agent returns $1$ only if $S_0\neq\emptyset$;
    \item if $S_0\neq\emptyset$ and $|V\setminus S_0|\leq k$, then every
    agent returns $1$;
    \item the procedure uses exactly $k$ rounds.
\end{enumerate}
\end{lemma}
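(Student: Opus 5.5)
We argue about the set $S_t$ of agents whose local value $b$ equals $1$ after the $t$-th round of \Flood, with $S_0$ as in the statement. Part~3 is immediate from the pseudocode convention: the loop runs exactly $k$ iterations and contains one ``send'' statement per iteration, and everything else is local computation, so exactly $k$ rounds are used.

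For part~1 we prove by induction on $t$ that $S_0=\emptyset$ implies $S_t=\emptyset$. An agent's value is only ever changed from $0$ to $1$, and only when some neighbor sent $1$ in that round. If $S_{t-1}=\emptyset$, every agent sends $0$ in round $t$, so no agent sets $b$ to $1$, and $S_t=\emptyset$. Hence if some agent returns $1$, then $S_k\neq\emptyset$, which forces $S_0\neq\emptyset$.

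For part~2 the key invariant is monotonicity together with growth: $S_{t-1}\subseteq S_t$, and if $\emptyset\neq S_{t-1}\neq V$ then $|S_t|\geq|S_{t-1}|+1$.
\begin{itemize}
\item Monotonicity holds because values never revert from $1$ to $0$.
\item For growth, note that $G_t$ is connected and both $S_{t-1}$ and $V\setminus S_{t-1}$ are nonempty. So $G_t$ has an edge $\{u,w\}$ with $u\in S_{t-1}$ and $w\notin S_{t-1}$.
\item In round $t$, agent $u$ sends $1$, so $w$ receives at least one $1$ and sets $b\leftarrow 1$, giving $w\in S_t$.
\end{itemize}
By induction, $|V\setminus S_t|\leq\max(0,\,|V\setminus S_0|-t)$, because once $S_t=V$ it stays equal to $V$. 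Taking $t=k\geq|V\setminus S_0|$ gives $S_k=V$, so every agent returns $1$.

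The only mildly delicate point is to use connectivity of the graph $G_t$ in the current round, rather than any persistent edge. This is exactly what $1$-interval connectivity provides, and no step poses a real obstacle.
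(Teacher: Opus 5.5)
Your proof is correct and follows essentially the same argument as the paper: the sets $S_t$ are monotone, $S_0=\emptyset$ keeps every $S_t$ empty by induction, and for part~2 a crossing edge in the connected graph $G_t$ forces $|S_t|\geq|S_{t-1}|+1$ whenever $S_{t-1}\neq V$. Your explicit bound $|V\setminus S_t|\leq\max(0,|V\setminus S_0|-t)$ simply writes out the induction that the paper states in words.
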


\begin{proof}
For $r\in\{0,\ldots,k\}$, let $S_r$ be the set of agents whose
boolean value is $1$ after the first $r$ rounds. The update is
irreversible, so $S_0\subseteq S_1\subseteq\cdots\subseteq S_k$.

If $S_0=\emptyset$, no agent sends $1$ in the first round.
Consequently, no agent changes its value and $S_1=\emptyset$.
Repeating the same argument inductively gives $S_r=\emptyset$ for every
$r\leq k$. Therefore no agent can return $1$ unless some agent
initially had value $1$.

Suppose now that $S_0\neq\emptyset$. Consider a round
$r\in\{1,\ldots,k\}$ such that $S_{r-1}\neq V$. Both
$S_{r-1}$ and $V\setminus S_{r-1}$ are nonempty. Since the graph $G_r$
of that round is connected, it contains an edge
$\{u,v\}\in E_r$ with $u\in S_{r-1}$ and $v\notin S_{r-1}$.

Agent $u$ sends $1$ during round $r$. Hence $v$ receives at least one
value equal to $1$ and sets its value to $1$. Thus $|S_r|\geq |S_{r-1}|+1$
whenever $S_{r-1}\neq V$. It follows by induction that after at most
$|V\setminus S_0|$ rounds all agents belong to the flooded set.
Therefore, if $|V\setminus S_0|\leq k$, then $S_k=V$ and every agent
returns $1$.

Finally, the procedure executes one round in each of its
$k$ iterations, so it uses exactly $k$ rounds.
\end{proof}

\begin{corollary}
\label{cor:flood-upper-bound}
If all agents know a common upper bound $U\geq n$, then
\Flood$(U-1,b)$ returns $\bigvee_{v\in V} b_v$
at every agent, where $b_v$ is the initial boolean value of agent
$v$.
\end{corollary}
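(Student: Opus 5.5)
The plan is to derive Corollary~\ref{cor:flood-upper-bound} from Lemma~\ref{lem:flood-correctness} with $k=U-1$, splitting on whether the set $S_0$ of agents with initial value $1$ is empty.

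If $S_0=\emptyset$, then $\bigvee_{v\in V} b_v=0$. Item~1 of Lemma~\ref{lem:flood-correctness} says no agent returns $1$. Every agent therefore returns $0$, which is the correct OR.

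If $S_0\neq\emptyset$, then $\bigvee_{v\in V} b_v=1$, and I only need the hypothesis of item~2. Since $|S_0|\geq 1$,
\[
|V\setminus S_0| = n-|S_0|\leq n-1\leq U-1=k,
\]
using $U\geq n$. Item~2 then gives that every agent returns $1$.

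I do not expect a real obstacle. The one point to state explicitly is that $U$ is common knowledge, so all agents call \Flood$(U-1,b)$ with the same duration and stay synchronized. By item~3 the call takes exactly $U-1$ rounds at every agent, so all agents return at the same round.
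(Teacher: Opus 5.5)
Your proposal is correct and matches the paper's proof: the same case split on whether $S_0$ is empty, using part~1 of Lemma~\ref{lem:flood-correctness} for the all-zero case and part~2 with $|V\setminus S_0|\leq n-1\leq U-1$ otherwise. The remark about synchronized return via part~3 is not needed for the statement, but it does no harm.
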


\begin{proof}
If every initial value is $0$, the first part of
Lemma~\ref{lem:flood-correctness} implies that every agent returns
$0$. Otherwise $S_0\neq\emptyset$, and therefore $|V\setminus S_0|\leq n-1\leq U-1$.
The second part of the lemma implies that every agent returns $1$.
In both cases, the returned value is the OR of the initial values.
\end{proof}

\subsection{Computing the \ProblemName{Input Set}: \DistinctValues}\label{sec:inputset}

The \DistinctValues procedure is given in Algorithm~\ref{alg:distinct-values}.
Every agent $v$ invokes it with its local value $x=\lambda(v)$.

The first phase computes the bit length of the largest input value. The
counter $\ell$ is initialized in line~\ref{line:dv-init-length}, and the
loop in lines~\ref{line:dv-test-large}--\ref{line:dv-increase-length}
uses \Flood to find the smallest $\ell$ such that
$2^\ell>X_{\max}$.

The second phase extracts the distinct values in decreasing order. The
main loop starts in line~\ref{line:dv-remaining-loop}. In each iteration,
the currently remaining agents enter a bit-by-bit race, initialized
in line~\ref{line:dv-init-candidate}. The bit loop in
lines~\ref{line:dv-bit-loop}--\ref{line:dv-drop-candidate} keeps only
the candidates with the largest prefix seen so far. The value found in
this race is added to $I$ in line~\ref{line:dv-add-value}, and the
agents holding it are removed from later races in
line~\ref{line:dv-deactivate}. The final set is returned in
line~\ref{line:dv-return}.

\SetKw{KwDownTo}{downto}
\providecommand{\FloodCall}{\operatorname{\textnormal{\textsc{Flood}}}}

\begin{algorithm}[H]
\caption{\DistinctValues}
\label{alg:distinct-values}

\KwInput{A positive integer value $x$ and a common upper bound $U \geq n$}
\KwOutput{The set of input values initially present}

\Fn{\AlgDistinctValues{$x,U$}}{
  $\ell \leftarrow 0$\; \label{line:dv-init-length}

  \While{$\FloodCall(U-1,\, x \geq 2^\ell)$}{ \label{line:dv-test-large}
    $\ell \leftarrow \ell + 1$\; \label{line:dv-increase-length}
  }

  $I \leftarrow \emptyset$\; \label{line:dv-init-set}
  $\mathit{remaining} \leftarrow \boolname{true}$\; \label{line:dv-init-remaining}

  \While{$\FloodCall(U-1,\, \mathit{remaining})$}{ \label{line:dv-remaining-loop}
    $\mathit{candidate} \leftarrow \mathit{remaining}$\; \label{line:dv-init-candidate}
    $y \leftarrow 0$\; \label{line:dv-init-y}

    \For{$b \leftarrow \ell-1$ \KwDownTo $0$}{ \label{line:dv-bit-loop}
      $\mathit{hasOne} \leftarrow
        \FloodCall(U-1,\,
        \mathit{candidate} \land
        \text{bit } b \text{ of } x \text{ is } 1)$\; \label{line:dv-test-one}

      \If{$\mathit{hasOne}$}{
        set bit $b$ of $y$ to $1$\; \label{line:dv-set-bit}

        \If{$\mathit{candidate}$ and bit $b$ of $x$ is $0$}{
          $\mathit{candidate} \leftarrow \boolname{false}$\; \label{line:dv-drop-candidate}
        }
      }
    }

    $I \leftarrow I \cup \{y\}$\; \label{line:dv-add-value}

    \If{$\mathit{candidate}$}{
      $\mathit{remaining} \leftarrow \boolname{false}$\; \label{line:dv-deactivate}
    }
  }

  \Return{$I$}\; \label{line:dv-return}
}
\end{algorithm}

\begin{lemma}\label{lem:bc-distinct-values}
Let $B_{\max}$ be the bit length of the largest input value, and let
$q$ be the number of distinct input values. Assuming a common upper
bound $U\geq n$, \DistinctValues returns exactly the set of input values
initially present. It uses $O(Uq(1+B_{\max}))$ rounds, and
therefore $O(Un(1+B_{\max}))$ rounds.
\end{lemma}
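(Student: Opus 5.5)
The proof rests on Corollary~\ref{cor:flood-upper-bound}. Every call $\FloodCall(U-1,\cdot)$ returns, at every agent, the OR of the current local arguments. It also takes exactly $U-1$ rounds (Lemma~\ref{lem:flood-correctness}), so all agents stay synchronized and execute the same branches of every loop. With this, the procedure can be treated as a centralized algorithm in which each Flood call is an exact global OR query costing $U-1$ rounds. I will prove correctness phase by phase using loop invariants, and then count Flood calls.

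\emph{First phase.} The invariant is: at the start of each test, $\ell\geq 0$ and every input satisfies $x\geq 2^{\ell'}$ for all $\ell'<\ell$. Equivalently, $X_{\max}\geq 2^{\ell-1}$ when $\ell>0$. The test at line~\ref{line:dv-test-large} returns true exactly when $X_{\max}\geq 2^\ell$. Hence the loop stops at the least $\ell$ with $2^\ell>X_{\max}$, that is, at $\ell=B_{\max}$. It uses $B_{\max}+1$ Flood calls, and all inputs fit in $\ell$ bits.

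\emph{Second phase.} Here I will prove the following invariant of the outer loop. The set $R$ of agents with $\mathit{remaining}=\boolname{true}$ is exactly the set of agents whose value is not in $I$. Moreover, $I$ consists of the $|I|$ largest distinct input values.

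Inside one iteration, I argue by induction on the bit index $b$ going down. After processing bit $b$, two things hold. First, $y$ agrees with $M=\max\{x_u:u\in R\}$ on bits $\ell-1,\dots,b$ and is $0$ below. Second, the candidates are exactly the agents in $R$ whose $x$ agrees with $M$ on those bits.

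The step works as follows. The candidate set is nonempty, since it contains an agent holding $M$. The Flood returns $1$ iff some candidate has bit $b$ equal to $1$, which holds iff bit $b$ of $M$ is $1$, because candidates share $M$'s prefix and $M$ is their maximum. If the answer is $1$, the candidates with bit $0$ drop out. If the answer is $0$, every candidate already has bit $0$, matching $M$. At the end, $y=M$ and the candidates are exactly the agents in $R$ holding $M$. These agents then leave $R$ while $M$ is added to $I$, which preserves the invariant.

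The outer loop's test is true iff $R\neq\emptyset$, and each iteration strictly shrinks the set of values not yet in $I$. So it performs exactly $q$ iterations plus one final failing test, and then $I$ is the input set. The point needing the most care is exactly this "candidates share $M$'s prefix" induction, together with the fact that $R$ is nonempty whenever the loop body runs.

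\emph{Round count.} The number of Flood calls is $(B_{\max}+1)+(q+1)+qB_{\max}=O(q(1+B_{\max}))$, using $q\geq 1$. Each call costs $U-1$ rounds, which gives $O(Uq(1+B_{\max}))$ rounds in total. Since $q\leq n$, this is $O(Un(1+B_{\max}))$.
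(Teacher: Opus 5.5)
Your proof is correct and follows essentially the same route as the paper: each \textsc{Flood}$(U-1,\cdot)$ call is treated, via Corollary~\ref{cor:flood-upper-bound}, as an exact synchronized global OR, the first loop stops at $\ell=B_{\max}$, an induction over bit positions shows that each outer iteration isolates exactly the remaining agents holding the largest remaining value, and counting $O(q(1+B_{\max}))$ calls of $U-1$ rounds each gives the bound. The only slip is in your first-phase invariant, which should say that \emph{some} input (not every input) satisfies $x\geq 2^{\ell-1}$; the equivalent form $X_{\max}\geq 2^{\ell-1}$ that you actually use is the right one.
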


\begin{proof}
Let $X_{\max}$ be the largest input value in the input assignment.
By definition of $B_{\max}$, all input values can be represented using
bits $B_{\max}-1,\ldots,0$, and $B_{\max}$ is the smallest integer such
that $2^{B_{\max}}>X_{\max}$.

First consider the loop of lines~\ref{line:dv-test-large}--%
\ref{line:dv-increase-length}. For a fixed value of $\ell$, the call to
\Flood in line~\ref{line:dv-test-large} returns true at every agent if
and only if some input value is at least $2^\ell$, by
Corollary~\ref{cor:flood-upper-bound}. Hence the loop increments
$\ell$ exactly while $2^\ell\leq X_{\max}$. When the loop terminates,
we have $\ell=B_{\max}$.

Now consider one iteration of the main loop in
line~\ref{line:dv-remaining-loop}, and let $W$ be the set of agents
with $\mathit{remaining}=\boolname{true}$ at the start of the iteration.
If $W=\emptyset$, the call to \Flood in line~\ref{line:dv-remaining-loop}
returns false, and the algorithm terminates. Otherwise,
line~\ref{line:dv-init-candidate} makes exactly the agents in $W$
candidates for the current race.

The loop starting in line~\ref{line:dv-bit-loop} scans the bits from
most significant to least significant. At bit position $b$,
line~\ref{line:dv-test-one} checks whether some current candidate has
bit $1$. If not, all candidates remain in the race. If so,
line~\ref{line:dv-set-bit} sets bit $b$ of $y$ to $1$, and
line~\ref{line:dv-drop-candidate} removes every candidate with bit $0$.
Thus, after each bit position, the remaining candidates are exactly
those agents in $W$ whose scanned prefix is maximum.

By induction over the bit positions, after the last bit the candidates
are exactly the agents in $W$ with maximum input value. The value
reconstructed in $y$ is therefore this maximum remaining value.
Line~\ref{line:dv-add-value} adds it to $I$, and
line~\ref{line:dv-deactivate} marks exactly the agents holding this
value as no longer remaining. Hence each iteration removes one distinct
input value, starting from the largest. After exactly $q$ iterations,
no agent remains, so the next test in line~\ref{line:dv-remaining-loop}
terminates the algorithm. Therefore the set returned in
line~\ref{line:dv-return} is precisely the set of input values initially
present.

For the complexity, the first phase performs $O(1+B_{\max})$ calls to
\Flood. Each extraction performs one call in
line~\ref{line:dv-remaining-loop} and one call per bit in
line~\ref{line:dv-test-one}, hence $O(1+B_{\max})$ calls to \Flood per
extracted value. Since there are $q$ extracted values and each call to
\Flood lasts $U-1$ rounds, the total number of rounds is
$ O(Uq(1+B_{\max})) \leq O(Un(1+B_{\max}))$.
\end{proof}

Notice that, when \(U=O(n)\), Algorithm~\ref{alg:distinct-values} nearly
matches the lower bound of Section~\ref{sec:lower-bound-bc}. Indeed, if
the inputs are drawn from a universe of size \(N\), then
\(B_{\max}=O(\log N)\), and the algorithm uses
\(O(nq\log N)\subseteq O(n^2\log N)\) rounds. The lower bound is
\(\Omega(n^2\log(N/n)/\log n)\) rounds. Thus,
when \(N\) is polynomially related to \(n\), the upper and lower bounds
differ only by logarithmic factors. In particular, the quadratic
dependence on \(n\) is essentially tight.

\section{Computing with Multiple Leaders} \label{sec:leader-anonymous-bc}

We first compute the \ProblemName{Input Frequency} function without a leader and then
derive the \ProblemName{Input Multiset} function when the number \(k\) of leader-flagged
agents is known.

\subsection{Computing \ProblemName{Input Frequency}}
\label{sec:input-frequency}

We give a terminating algorithm for the \ProblemName{Input Frequency}
problem. All agents know a common upper bound $U\geq n$, but no
leader is required.

\paragraph{Overview.}
The agents first compute the set of input values occurring in the
network and assign one common label to each value. Agents with the
same label form a \emph{class}. If the current classes are
$P_1,\ldots,P_m$, let
\[
    c=(c_1,\ldots,c_m),
    \qquad c_i=|P_i|,
\]
be their size vector. The algorithm learns homogeneous linear equations
satisfied by $c$. Once it has $m-1$ independent equations, their common
solution space is the line spanned by $c$. Normalizing any nonzero
solution so that its coordinates sum to $1$ yields
\[
    \frac{c}{n}
    =\left(\frac{c_1}{n},\ldots,\frac{c_m}{n}\right),
\]
the frequencies of the current classes.

To obtain one equation, the algorithm chooses a nonempty proper set $S$
of class labels. In one test round, agents whose labels belong to $S$
send $1$, and all other agents send $0$. Each agent counts its
neighbors that sent the opposite bit. Counting the edges between the
two sides from either side gives a homogeneous equation.

Agents in the same class may obtain different counts. The algorithm
therefore splits each class according to the count observed by its
members. Previously learned equations are transferred to the refined
classes and remain valid. The set $S$ is chosen so that the new equation
is independent of the previous ones. We first describe this operation,
then show that a suitable $S$ always exists, and finally give the complete
algorithm.

\subsubsection{Classes and constraints}

At any step of the algorithm, the class labels are the consecutive
integers $1,\ldots,m$, and every label is held by at least one agent.
We call such labels \emph{compact}. The corresponding classes are $P_i=\{v\in V:\mathit{label}(v)=i\},$ with $ i\in\{1,\ldots,m\}$. Initially, each class consists of all agents having the same input value. Since subsequent refinements may split classes but never merge them, every current class is contained in
a unique initial class and therefore has a well-defined original input value.

A vector $a\in\mathbb R^m$ is a \emph{constraint} for the current
partition if $a\cdot c=0$.

The algorithm maintains a linearly independent list
$\mathcal C=(a^1,\ldots,a^r)$
of such constraints. Initially, $\mathcal C$ is empty. If $r=m-1$,
then the vectors in $\mathcal C$ span $c^\perp$. Thus every nonzero
solution of the equations $a^j\cdot y=0$ is a scalar multiple of $c$,
and normalizing it by the sum of its coordinates gives $c/n$.
\subsubsection{Obtaining one constraint with \ConstraintRound}

The procedure that creates a new constraint is given in Algorithm~\ref{alg:bc-constraint-round}. Consider one invocation of \ConstraintRound{} with a nonempty proper set \(S\subsetneq\{1,\ldots,m\}\). We call the classes present at the beginning of the invocation the \emph{old classes}, and the classes produced by the invocation the \emph{refined classes}.

The invocation begins with a single \emph{test round}. In this round, an agent sends \(1\) if its old label belongs to \(S\), and sends \(0\) otherwise. Here \(\mathit{oldLabel}(u)\) denotes the label held by \(u\) at the beginning of the invocation. For an agent \(u\), let
\[
\begin{aligned}
    d_1(u)&=\#\{\text{neighbors of \(u\) that sent \(1\)}\},\\
    d_0(u)&=\#\{\text{neighbors of \(u\) that sent \(0\)}\}.
\end{aligned}
\]
The \emph{opposite-side count} of \(u\) is
\[
    h(u)=
    \begin{cases}
        d_0(u),&\mathit{oldLabel}(u)\in S,\\
        d_1(u),&\mathit{oldLabel}(u)\notin S.
    \end{cases}
\]
Thus \(h(u)\) is the number of neighbors of \(u\) whose old labels lie on the opposite side of the partition induced by \(S\). The test round and the computation of the opposite-side counts are implemented in lines~\ref{line:cr-send}--\ref{line:cr-relevant-count}.

After the test round, \ConstraintRound{} splits each old class according to the opposite-side counts observed by its members. For each old class \(x\), all agents invoke \DistinctValues{}. An agent in class \(x\) contributes the encoded value \(h+2\), while every agent outside class \(x\) contributes the sentinel value \(1\). Since \(h\geq0\), the sentinel is distinct from every encoded count. After removing \(1\) from the set returned by \DistinctValues{}, all agents keep the remaining encoded values in sorted order and subtract \(2\) only when recording the corresponding opposite-side counts.

The smallest opposite-side count occurring in old class \(x\) retains label \(x\), while every other count receives a fresh label. Fresh labels are assigned in a fixed order. Hence all agents construct the same refined partition and the same parent map. For every refined label \(j\), \(\mathit{parent}[j]\) is the label of the old class from which it originates, and \(h_j\) is the opposite-side count shared by all agents in the corresponding refined class. The encoded counts are collected and the old classes are refined in lines~\ref{line:cr-encode-count}--\ref{line:cr-relabel}.

Let \(c'\) be the refined class-size vector, and write
$\pi(j)=\mathit{parent}[j]$.
Every previously known constraint \(\alpha\in\mathcal C\) is lifted to the refined classes by defining $\widehat{\alpha}_j=\alpha_{\pi(j)}$.

Since the refined classes with parent \(i\) partition old class \(i\), every lifted constraint remains valid for \(c'\).

The procedure defines a new vector \(a\in\mathbb R^{m'}\) by
\[
    a_j=
    \begin{cases}
        h_j,&\pi(j)\in S,\\
        -h_j,&\pi(j)\notin S.
    \end{cases}
\]
The old constraints are lifted in line~\ref{line:cr-lift}, and the new constraint is defined and added in lines~\ref{line:cr-define-constraint}--\ref{line:cr-append-constraint}.

Note that the sum
    $\sum_{j:\pi(j)\in S} h_jc'_j$
counts the edges crossing the partition from the \(S\) side, whereas $\sum_{j:\pi(j)\notin S} h_jc'_j$ counts the same edges from the other side. The two sums are therefore equal, and hence $a\cdot c'=0$.
Thus \(a\) is a valid constraint for the refined class-size vector.

Validity alone is not sufficient: the new constraint must also be independent of the lifted constraints already known. Otherwise, the invocation would add no new information about the class sizes, and the rank of the constraint system would not increase.

To express the condition that guarantees independence, define
\[
    K_S=
    \left\{
        z\in\mathbb R^m:
        z_i\geq0\text{ for }i\in S,
        \quad
        z_i\leq0\text{ for }i\notin S
    \right\}.
\]
We call a nonempty proper set \(S \subsetneq \{1,\ldots,m\}\) a \emph{progress set} for \(\mathcal C\) if
 $\operatorname{span}(\mathcal C)\cap K_S=\{0\}$.

The name reflects the role of this condition: choosing a progress set ensures that the constraint produced by \ConstraintRound{} is independent of the lifted old constraints and therefore increases the rank of the constraint system by one.

The next lemma establishes the properties of one invocation of \ConstraintRound{}. In particular, it shows that the procedure correctly refines the classes, preserves the previously known constraints under refinement, and adds one new independent constraint whenever \(S\) is a progress set. In Section~\ref{sec:wemakeprogress}, we prove that a progress set always exists as long as fewer than \(m-1\) independent constraints are known, and we explain how all agents can select the same progress set using only local computation.

\LinesNumbered

\begin{algorithm}[H]
\caption{\ConstraintRound}
\label{alg:bc-constraint-round}

\KwInput{A nonempty proper set \(S\subsetneq\{1,\ldots,m\}\), a common upper bound \(U\geq n\), the
current number \(m\) of classes, the constraint list \(\mathcal C\), and
the local label \(\mathit{label}\)}
\KwOutput{The refined label, the new number of classes, the augmented
constraint list, and the parent map}

\Fn{\AlgConstraintRound{$S,U,m,\mathcal C,\mathit{label}$}}{
  $\mathit{oldLabel}\leftarrow\mathit{label}$;
  $m'\leftarrow m$\label{line:cr-initialize}\;

\tcp{Test round: agents whose old labels belong to \(S\) send \(1\), while all other agents send \(0\)}
  send \(1\) iff \(\mathit{oldLabel}\in S\), and send \(0\)
  otherwise\label{line:cr-send}\;

  $d_1\leftarrow$ number of neighbors that sent \(1\);
  $d_0\leftarrow$ number of neighbors that sent \(0\)
  \label{line:cr-observe}\;

  $h\leftarrow d_0$ if \(\mathit{oldLabel}\in S\), and
  $h\leftarrow d_1$ otherwise\label{line:cr-relevant-count}\;

  $o\leftarrow h+2$\label{line:cr-encode-count}\;

  \For{$x\leftarrow1$ \KwTo $m$}{
    $z\leftarrow o$ if \(\mathit{oldLabel}=x\), and
    $z\leftarrow1$ otherwise\label{line:cr-select-input}\;

    $O_x\leftarrow\AlgDistinctValues(z,U)\setminus\{1\}$
    \label{line:cr-discover-counts}\;

    write \(O_x=\{o_{x,1}<\cdots<o_{x,q_x}\}\)
    \label{line:cr-order-counts}\;

    $\mathit{parent}[x]\leftarrow x$;
    $h_x\leftarrow o_{x,1}-2$
    \label{line:cr-retain-label}\;

    \For{$j\leftarrow2$ \KwTo $q_x$}{
      $m'\leftarrow m'+1$
      \label{line:cr-create-label}\;

      $\mathit{parent}[m']\leftarrow x$;
      $h_{m'}\leftarrow o_{x,j}-2$
      \label{line:cr-record-child}\;

      \If{$\mathit{oldLabel}=x$ and $o=o_{x,j}$}{
        $\mathit{label}\leftarrow m'$
        \label{line:cr-relabel}\;
      }
    }
  }
  \(\mathcal C' \leftarrow []\)\;
  for each \(\alpha\in\mathcal C\) append its lift
  \(\widehat\alpha\in\mathbb R^{m'}\) to \(\mathcal C'\), where
  \(\widehat\alpha_j=\alpha_{\mathit{parent}[j]}\)
  \label{line:cr-lift}\;

  define \(a\in\mathbb R^{m'}\) by
  \(a_j=h_j\) if \(\mathit{parent}[j]\in S\), and
  \(a_j=-h_j\) otherwise
  \label{line:cr-define-constraint}\;

  append \(a\) to \(\mathcal C'\)
  \label{line:cr-append-constraint}\;

  \Return{$(\mathit{label},m',\mathcal C',\mathit{parent})$}
  \label{line:cr-return}\;
}
\end{algorithm}

\begin{lemma}[Correctness of one constraint round]
\label{lem:bc-constraint-round-correct}
Assume that the current labels are compact. Let \(c\in\mathbb R_{>0}^m\)
be the current class-size vector, and let the vectors in \(\mathcal C\)
be linearly independent constraints for \(c\). If \(S\) is a progress
set for \(\mathcal C\), then after one invocation of \ConstraintRound:
\begin{enumerate}
    \item the new labels are compact, and two agents have the same
          new label exactly when they had the same old label and the
          same opposite-side count;
    \item for every refined class \(j\), \(\mathit{parent}[j]\) is the old
          label of that class and \(h_j\) is the common opposite-side count of
          its agents;
    \item the returned list \(\mathcal C'\) contains
          \(|\mathcal C|+1\) linearly independent constraints for the
          refined class-size vector \(c'\).
\end{enumerate}
\end{lemma}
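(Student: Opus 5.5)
Items 1 and 2 follow from the correctness of \DistinctValues (Lemma~\ref{lem:bc-distinct-values}), and the real work is the independence claim in item 3.

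\textbf{Items 1 and 2.} Fix an old class \(x\). In the call for \(x\), agents of class \(x\) contribute \(h+2\geq2\) and all others contribute \(1\). By Lemma~\ref{lem:bc-distinct-values}, every agent obtains the same set \(O_x\), which is exactly the set of encoded opposite-side counts occurring in class \(x\). This set is nonempty because class \(x\) is nonempty by compactness.

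Since every agent orders \(O_x\) identically and assigns fresh labels \(m+1,m+2,\ldots\) in a fixed order, all agents build the same \(\mathit{parent}\) map and the same values \(h_j\). An agent of class \(x\) keeps label \(x\) if its count is the minimum, and otherwise gets the fresh label indexed by its count. Hence new labels coincide exactly when old labels and counts coincide. Each label in \(\{1,\ldots,m'\}\) is realized by some agent, so the new labels are compact. This gives items 1 and 2 directly.

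\textbf{Item 3, validity.} As already observed, each lift \(\widehat\alpha\) satisfies
\[
\widehat\alpha\cdot c'=\sum_i \alpha_i\sum_{\pi(j)=i}c'_j=\alpha\cdot c=0 .
\]
Lifting is injective and linear, because \(\widehat\alpha=P^{\top}\alpha\) for the surjective \(m'\times m\)-indexed parent incidence map. So the lifts remain linearly independent. The double-counting argument shows \(a\cdot c'=0\).

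\textbf{Item 3, independence (the main step).} Suppose \(a=\sum_k\beta_k\widehat{\alpha^k}=\widehat{\gamma}\) for some \(\gamma\in\operatorname{span}(\mathcal C)\). Then \(a\) is constant on the children of each old class, say \(a_j=\gamma_i\) whenever \(\pi(j)=i\). By the definition of \(a\), \(\gamma_i=h_j\geq0\) for \(i\in S\) and \(\gamma_i=-h_j\leq0\) for \(i\notin S\). So \(\gamma\in K_S\cap\operatorname{span}(\mathcal C)=\{0\}\), which forces \(a=0\), i.e., every \(h_j=0\).

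It remains to exclude \(a=0\). Since \(S\) is nonempty and proper, and the labels are compact, both sides of the test partition contain agents. Since \(G_t\) is connected, some edge crosses the partition, and its endpoint on the \(S\) side has \(h>0\). Hence some \(h_j>0\), so \(a\neq0\), a contradiction.

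Therefore \(a\notin\operatorname{span}(\mathcal C')\setminus\{a\}\), and the returned list consists of \(|\mathcal C|+1\) linearly independent constraints.
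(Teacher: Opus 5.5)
Your proof is correct and follows the paper's argument essentially step by step. You use the same \textsc{DistinctValues}-based refinement for items 1--2, the same injectivity of the lift (every old class keeps at least one child), the double-counting validity of \(a\), the connectivity argument for \(a\neq0\), and the \(K_S\) contradiction. The only difference is order: you first deduce \(\gamma=0\) and then contradict \(a\neq0\), whereas the paper uses \(a\neq0\) to get \(b\neq0\) before invoking the progress-set condition.

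One small wording fix: your last line should say that \(a\) is not in the span of the lifted constraints. As written, \(a\notin\operatorname{span}(\mathcal C')\setminus\{a\}\) is trivially true and not what you mean.
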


\begin{proof}
Fix an old class \(x\). By lines~\ref{line:cr-relevant-count} and
\ref{line:cr-encode-count}, every agent computes its opposite-side
count \(h\) and encodes it as \(o=h+2\). In
line~\ref{line:cr-select-input}, agents in class \(x\) contribute
their encoded values, while all other agents contribute the sentinel
value \(1\). Since every encoded value is at least \(2\), the sentinel
cannot coincide with an encoded count. Consequently,
line~\ref{line:cr-discover-counts} returns exactly the encoded counts
occurring in class \(x\), after the sentinel has been removed. Thus all
agents obtain the same nonempty set \(O_x\), ordered in
line~\ref{line:cr-order-counts}.

Line~\ref{line:cr-retain-label} records the original label \(x\) as the
retained label for the smallest count occurring in class \(x\). Agents with
that smallest count keep label \(x\), because they are not relabeled later in
the loop. For every remaining count,
lines~\ref{line:cr-create-label} and \ref{line:cr-record-child} create a
fresh consecutive label and record its old parent and opposite-side
count. Line~\ref{line:cr-relabel} assigns this fresh label exactly to the
agents of class \(x\) having the corresponding count.

It follows that each distinct count occurring in old class \(x\)
corresponds to exactly one refined class. The original label \(x\)
remains in use, and every additional label is introduced consecutively.
Since this construction is performed for every old class, all labels
\(1,\ldots,m'\) occur. Moreover, two agents receive the same new
label exactly when their old labels and opposite-side counts agree.
This proves part~(1).
The construction also records the old parent and the corresponding decoded
opposite-side count for each retained or fresh label, proving part~(2).

Let $\pi(j)=\mathit{parent}[j]$
be the parent of refined class \(j\). In
line~\ref{line:cr-lift}, every old constraint
\(\alpha\in\mathcal C\) is replaced by the vector
\(\widehat\alpha\in\mathbb R^{m'}\) defined by $\widehat\alpha_j=\alpha_{\pi(j)}$.

Since the children of old class \(i\) form a partition of that class,
\[
\begin{aligned}
    \widehat\alpha\cdot c'
    &= \sum_{j=1}^{m'}\alpha_{\pi(j)}c'_j
    &=
    \sum_{i=1}^{m}\alpha_i
        \sum_{j:\pi(j)=i}c'_j
    &=
    \sum_{i=1}^{m}\alpha_i c_i
    &=
    0.
\end{aligned}
\]
Thus every lifted vector remains a constraint for \(c'\).

The lifting map is injective. Indeed, by
line~\ref{line:cr-retain-label}, every old class has at least one
refined child. Therefore, if two old vectors have the same lift, their
coordinates agree on at least one child of every old class, and hence
the old vectors are equal. It follows that the lifted constraints remain
linearly independent.

We next consider the vector \(a\) defined in
line~\ref{line:cr-define-constraint}. By
lines~\ref{line:cr-send}--\ref{line:cr-relevant-count}, an agent whose
old label lies in \(S\) sets \(h\) equal to the number of neighbors
outside \(S\), while an agent whose old label lies outside \(S\) sets
\(h\) equal to the number of neighbors inside \(S\). Therefore,
summing the relevant counts over the two sides counts the same crossing
edges:
\[
    \sum_{j:\pi(j)\in S} h_jc'_j
    =
    \sum_{j:\pi(j)\notin S} h_jc'_j.
\]
The signs assigned in line~\ref{line:cr-define-constraint} consequently
give $a\cdot c'=0$. Hence \(a\) is also a constraint for \(c'\).

Because \(S\) is nonempty and proper, both sides of the induced
partition contain at least one agent. The communication graph in the
test round is connected, so at least one edge crosses the partition.
By lines~\ref{line:cr-observe} and \ref{line:cr-relevant-count}, at least
one relevant count is therefore positive. Consequently, the vector
\(a\) defined in line~\ref{line:cr-define-constraint} is nonzero.

It remains to prove that \(a\) is independent of the lifted old constraints.
Suppose, for contradiction, that \(a\) belongs to their span. Then there
exist coefficients \(\lambda_1,\ldots,\lambda_r\) such that
\(a=\sum_{\ell=1}^r\lambda_\ell\widehat{\alpha^\ell}\), where
\(\mathcal C=(\alpha^1,\ldots,\alpha^r)\). Define \(b=\sum_{\ell=1}^r\lambda_\ell \alpha^\ell\).
By construction, \(b\in\operatorname{span}(\mathcal C)\). Moreover, for every
refined class \(j\), we have \(a_j=b_{\pi(j)}\), because \(\widehat{\alpha^\ell}_j=\alpha^\ell_{\pi(j)}\).
Since every old class has at least one refined child and \(a\neq0\), it follows that \(b\neq0\).

If \(i\in S\), then every refined child \(j\) of \(i\) satisfies
\(b_i=a_j=h_j\geq0\). If \(i\notin S\), then every refined child \(j\) of \(i\)
satisfies \(b_i=a_j=-h_j\leq0\). Hence \(0\neq b\in\operatorname{span}(\mathcal C)\cap K_S\),
contradicting the assumption that \(S\) is a progress set. Therefore,
\(a\) is independent of the lifted old constraints.

\end{proof}

\begin{lemma}
\label{lem:bc-constraint-round-complexity}
Suppose that an invocation of \ConstraintRound{} starts with \(m\)
classes and creates \(s\) fresh labels. It uses $O\bigl(U(m+s)\log n\bigr)$ rounds.
\end{lemma}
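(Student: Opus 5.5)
The plan is to count rounds directly from Algorithm~\ref{alg:bc-constraint-round}, since only the \emph{send} statement in line~\ref{line:cr-send} and the calls to \DistinctValues in line~\ref{line:cr-discover-counts} consume rounds; all relabeling, lifting and constraint construction is local. The test round contributes exactly one round. The loop over \(x\) makes \(m\) calls to \DistinctValues, so it suffices to bound the cost of the call for class \(x\) in terms of the number \(q_x\) of distinct counts found there. We then sum over \(x\), using \(\sum_{x=1}^m q_x = m+s\), since each old class keeps one label and contributes \(q_x-1\) fresh labels.

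For a single call, we apply Lemma~\ref{lem:bc-distinct-values} with the inputs \(z\) given in line~\ref{line:cr-select-input}. The number of distinct values present is \(q_x+1\) (the \(q_x\) encoded counts plus the sentinel \(1\)), which is at most \(2q_x\) because \(q_x\geq 1\). The maximum input is at most \(h+2\leq (n-1)+2=n+1\), since an opposite-side count never exceeds the degree, which is at most \(n-1\). Hence \(B_{\max}=O(\log n)\), and the call uses \(O(U\,q_x(1+\log n))=O(Uq_x\log n)\) rounds (for \(n\geq2\); the case \(n=1\) is trivial, or one may simply write \(1+\log n=O(\log n)\) after adjusting constants).

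Summing over \(x\) then gives \(1+\sum_x O(Uq_x\log n)=O(U(m+s)\log n)\). The only point needing care is the round count of \DistinctValues: its first phase costs \(O(U(1+B_{\max}))\) rounds independently of \(q\), and its second phase costs \(O(U(1+B_{\max}))\) per extracted value plus one final failing test. Both are absorbed by the \(q_x+1\leq 2q_x\) factor in Lemma~\ref{lem:bc-distinct-values}. I expect no real obstacle beyond checking that the bound of that lemma applies with \(q\) counting the sentinel and with the encoded values' bit length \(O(\log n)\) rather than the original inputs' \(B_{\max}\).
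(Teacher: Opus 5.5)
Your proposal is correct and follows the paper's proof essentially step for step. It uses one test round, a per-class bound from Lemma~\ref{lem:bc-distinct-values} based on encoded counts being at most $n+1$ (hence $O(\log n)$ bits), and the identity $\sum_x q_x = m+s$; the only cosmetic difference is that you bound $q_x+1\le 2q_x$ per class, whereas the paper sums $q_x+1$ directly to obtain $O(U(2m+s)\log n)$.
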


\begin{proof}
Lines~\ref{line:cr-send} and \ref{line:cr-observe} execute one
communication round. Computing the relevant count and its encoding in
lines~\ref{line:cr-relevant-count} and \ref{line:cr-encode-count} is
local.

For an old class \(x\), let \(q_x\) be the number of distinct
opposite-side counts occurring in that class. By
line~\ref{line:cr-select-input}, the invocation of \DistinctValues{} in
line~\ref{line:cr-discover-counts} receives at most \(q_x+1\) distinct
values: the \(q_x\) encoded counts and the sentinel value \(1\). Each opposite-side count is at most \(n-1\). Hence, by
line~\ref{line:cr-encode-count}, every encoded value is at most \(n+1\)
and has bit length \(O(\log n)\). Lemma~\ref{lem:bc-distinct-values}
therefore bounds the invocation in
line~\ref{line:cr-discover-counts} by
$O\bigl(U(q_x+1)\log n\bigr)$
rounds.

Lines~\ref{line:cr-retain-label}--\ref{line:cr-relabel} create one
refined class for each of the \(q_x\) counts occurring in old class
\(x\). Consequently, the total number of refined classes is
$\sum_{x=1}^{m}q_x$.

Since the procedure starts with \(m\) labels and creates \(s\) fresh
ones, the final number of classes is \(m+s\). Therefore, $\sum_{x=1}^{m}q_x=m+s$.

Summing the costs of line~\ref{line:cr-discover-counts} over all old
classes gives:
\[
\begin{aligned}
    \sum_{x=1}^{m}O\bigl(U(q_x+1)\log n\bigr)
    &=
    O\left(
        U\log n
        \sum_{x=1}^{m}(q_x+1)
    \right)\\
    &=
    O\bigl(U(2m+s)\log n\bigr)\\
    &=
    O\bigl(U(m+s)\log n\bigr).
\end{aligned}
\]
The lifting, construction of the new constraint, and all label
operations in lines~\ref{line:cr-order-counts}--\ref{line:cr-return}
are local. The single test round is absorbed by the stated bound.
\end{proof}

\begin{lemma}
\label{lem:bc-total-refinements}
If the initial partition has $q$ classes, all calls to \ConstraintRound{}
create at most $n-q$ fresh labels in total.
\end{lemma}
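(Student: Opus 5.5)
The plan is a potential argument on the number of classes. By Lemma~\ref{lem:bc-constraint-round-correct}(1), the labels stay compact throughout, and every class is nonempty. Initially there are $q$ classes, and each class is a nonempty set of agents.

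First, I will argue that refinement never merges classes and never empties a class. Each call to \ConstraintRound{} replaces an old class $x$ by its refined children, which partition $x$. The retained label $x$ is kept by the agents with the smallest count, and each fresh label is assigned to the agents of $x$ having the corresponding count. Every fresh label corresponds to a count that actually occurs in $O_x$, so some agent holds it and the new class is nonempty. Consequently, after any sequence of calls, the classes form a partition of $V$ into nonempty parts.

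Next comes the counting step. If $s_1,s_2,\ldots$ are the numbers of fresh labels created by the successive calls, the number of classes after the calls is $q+\sum_i s_i$. This holds because each call increases $m$ to $m+s_i$, as established in the proof of Lemma~\ref{lem:bc-constraint-round-complexity}, where $\sum_x q_x=m+s$. A partition of $V$ into nonempty parts has at most $n$ parts, so $q+\sum_i s_i\le n$, which gives $\sum_i s_i\le n-q$.

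There is no real obstacle. The only point needing care is that a fresh label is never created for an empty class, and that the retained label of $x$ stays nonempty. Both follow from the fact that $O_x$ contains exactly the encoded counts that occur in class $x$, with the sentinel removed, as shown in the proof of Lemma~\ref{lem:bc-constraint-round-correct}. The argument does not depend on $S$ being a progress set, so the bound holds for every sequence of calls.
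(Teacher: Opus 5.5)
Your proof is correct and follows the paper's own argument. Fresh labels always denote nonempty classes, refinement never merges classes, and a partition of $V$ into nonempty parts has at most $n$ parts, so $q+\sum_i s_i\le n$; you simply spell out the nonemptiness and the $m\mapsto m+s$ bookkeeping that the paper's one-line proof leaves implicit.
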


\begin{proof}
Each fresh label denotes a new nonempty class, classes are never merged,
and there can be at most $n$ nonempty classes.
\end{proof}

\subsubsection{Choosing a set that guarantees independence}\label{sec:wemakeprogress}

The agents do not know the class-size vector $c$. They therefore
choose $S$ using only the common constraint list $\mathcal C$. We first
show that a progress set exists whenever more constraints are needed,
and then show how to find one locally. Recall that $K_S=
    \left\{
        z\in\mathbb R^m:
        z_i\geq0\text{ for }i\in S,
        z_i\leq0\text{ for }i\notin S
    \right\}$.

\begin{lemma}[Existence of a progress set]
\label{lem:bc-progress-set-exists}
\label{lem:bc-sign-constraints-span-hyperplane}
Let $c\in\mathbb R_{>0}^m$, and let $L\subseteq c^\perp$. If
$\dim L<m-1$, then there exists a nonempty proper set
$S\subsetneq\{1,\ldots,m\}$ such that $L\cap K_S=\{0\}$.
\end{lemma}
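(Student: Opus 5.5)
The plan is a separating-hyperplane argument: find a single vector \(u\) that is orthogonal to every element of \(L\) and has no zero coordinate, and let \(S\) be the set of positive coordinates of \(u\). Then any \(z\in L\cap K_S\) satisfies \(z\cdot u=0\). But every term \(z_iu_i\) is nonnegative, since \(z_i\geq0\) where \(u_i>0\) and \(z_i\leq0\) where \(u_i<0\). So every term \(z_iu_i\) vanishes. Because each \(u_i\neq0\), this forces \(z=0\). It therefore suffices to construct such a \(u\) that has both positive and negative coordinates, so that \(S\) is nonempty and proper.

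To construct \(u\), I use the dimension gap. Since \(c\neq0\), we have \(\dim c^\perp=m-1>\dim L\), so \(c^\perp\cap L^\perp\) has positive dimension. Choose a nonzero \(w\) in it. Then \(w\cdot c=0\) with \(c\in\mathbb R^m_{>0}\) and \(w\neq0\), so \(w\) has at least one strictly positive and at least one strictly negative coordinate. However, \(w\) may have zero coordinates, and these would leave \(z_i\) unconstrained in the argument above. This is the only real obstacle.

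To fix the zeros, I perturb along \(c\). Set \(u=w+tc\) with \(t>0\) smaller than \(\min\{|w_i|/c_i : w_i\neq0\}\). The perturbation is harmless for orthogonality: for \(z\in L\subseteq c^\perp\) we get \(z\cdot u=z\cdot w+t\,z\cdot c=0\), since \(w\in L^\perp\) and \(z\in c^\perp\).

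The smallness of \(t\) settles the signs. Every coordinate with \(w_i\neq0\) keeps the sign of \(w_i\), because \(|tc_i|<|w_i|\). Every coordinate with \(w_i=0\) becomes \(tc_i>0\). Hence \(u\) has no zero coordinates.

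Taking \(S=\{i:u_i>0\}\), the set \(S\) contains every index with \(w_i>0\), so it is nonempty. Its complement contains every index with \(w_i<0\), so \(S\) is proper. The argument of the first paragraph then yields \(L\cap K_S=\{0\}\).
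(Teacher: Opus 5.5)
Your proof is correct and is essentially the paper's argument. The paper picks $y\in L^\perp$ not parallel to $c$ and a threshold $t$ strictly between the extreme ratios $y_i/c_i$ and different from all of them. It then takes $S=\{i:y_i/c_i>t\}$, i.e.\ the positive coordinates of $y-tc\in L^\perp$, a vector with no zero coordinates and mixed signs, exactly like your $u=w+tc$. It concludes with the same sign argument as yours: $\sum_i a_i(y_i-tc_i)$ would be positive for a nonzero $a\in L\cap K_S$, yet it equals $a\cdot y-t\,(a\cdot c)=0$.
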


\begin{proof}
Since $\dim L<m-1$, the space $L^\perp$ has dimension at least two. It
contains $c$, so choose $y\in L^\perp$ that is not a scalar multiple of
$c$.

The ratios $y_i/c_i$ are therefore not all equal. Choose $t$ strictly
between their minimum and maximum and different from every ratio, and set $S=\{i:y_i/c_i>t\}$. Then $S$ is nonempty and proper.

Suppose that a nonzero vector $a\in L\cap K_S$ exists. Since
$a\cdot c=0$ and every coordinate of $c$ is positive, $a$ has both a
positive and a negative coordinate. By the definitions of $S$ and
$K_S$, every term in
\[
    \sum_{i=1}^{m}a_i c_i\left(\frac{y_i}{c_i}-t\right)
\]
is nonnegative, and at least one is positive. The sum is therefore
positive. But it is also equal to
$a\cdot y-t(a\cdot c)=0$, because $a\in L$ and $y,c\in L^\perp$. This contradiction proves the
claim.
\end{proof}

\begin{lemma}[Testing a candidate set]
\label{lem:bc-lp-sign-vector}
Let \(L\subseteq\mathbb R^m\). For every nonempty proper set
\(S\subsetneq\{1,\ldots,m\}\), one can decide by linear programming whether
$L\cap K_S$ contains a nonzero vector.
\end{lemma}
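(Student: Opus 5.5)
The plan is to reduce the existence of a nonzero vector in \(L\cap K_S\) to the feasibility of a single linear program. A homogeneous cone condition alone is always satisfiable by zero, so I will rule out the zero vector with a normalization. I will assume \(L\) is given by a finite spanning list, as \(\operatorname{span}(\mathcal C)\) is in the algorithm, say \(L=\operatorname{span}(a^1,\ldots,a^r)\). A vector of \(L\) is then \(z=\sum_\ell \lambda_\ell a^\ell\), with free variables \(\lambda\in\mathbb R^r\). Alternatively, I can describe \(L\) by linear equalities \(Bz=0\), where the rows of \(B\) span \(L^\perp\); this is computable by Gaussian elimination. Either representation yields linear constraints on the variables.

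The key observation is that every nonzero \(z\in K_S\) has a strictly positive value of
\[
\sigma(z)=\sum_{i\in S} z_i-\sum_{i\notin S} z_i=\sum_{i=1}^m |z_i| .
\]
This holds because on \(K_S\) each term \(z_i\) (for \(i\in S\)) and each term \(-z_i\) (for \(i\notin S\)) is nonnegative, so \(\sigma(z)\) equals the \(\ell_1\) norm of \(z\). Moreover, \(L\cap K_S\) is a cone, so any nonzero element can be rescaled to satisfy \(\sigma(z)=1\).

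I will therefore consider the linear feasibility problem in the variables \(z\in\mathbb R^m\) (and \(\lambda\), if the span representation is used):
\[
Bz=0,\qquad z_i\ge 0\ (i\in S),\qquad z_i\le 0\ (i\notin S),\qquad \sum_{i\in S} z_i-\sum_{i\notin S} z_i=1 .
\]
I will prove both directions:
\begin{itemize}
\item If the system is feasible, then the feasible point \(z\) lies in \(L\cap K_S\) and has \(\ell_1\) norm \(1\), so it is nonzero.
\item Conversely, if a nonzero \(z\in L\cap K_S\) exists, then \(z/\sigma(z)\) is feasible.
\end{itemize}
Hence \(L\cap K_S\neq\{0\}\) if and only if this linear program is feasible. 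Feasibility is decidable, for instance by the simplex method, or by running phase one with a zero objective. Since the data in the algorithm are integers (opposite-side counts), the program can also be solved exactly over the rationals, so every agent reaches the same answer.

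The only step requiring care is the normalization: it must exclude zero while keeping the problem linear. The identity \(\sigma(z)=\|z\|_1\) on \(K_S\) handles this. If needed, I will also remark that all agents solve the same program deterministically, which is what allows the procedure to be used locally and consistently.
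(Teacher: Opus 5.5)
Your proof is correct and follows essentially the same route as the paper. You impose the sign constraints of \(K_S\) on \(z\in L\) (the paper parametrizes \(z=\sum_j\theta_j b^j\) over a basis of \(L\)), and you exclude zero with the normalization \(\sum_{i\in S}z_i-\sum_{i\notin S}z_i\), which equals \(\sum_i|z_i|\) on \(K_S\); the only cosmetic difference is that the paper uses \(\geq 1\) where you use \(=1\), and both work by the same rescaling argument.
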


\begin{proof} Let $L$ be given by a basis
$b^1,\ldots,b^r$. Write $z=\sum_{j=1}^{r}\theta_j b^j$ and use $\theta_1,\ldots,\theta_r$ as the variables. Impose
    $z_i\geq0\quad(i\in S)$,
     and
    $z_i\leq0\quad(i\notin S)$, together with
    $\sum_{i\in S}z_i-
    \sum_{i\notin S}z_i\geq1$.

Under the sign constraints, the last left-hand side is
$\sum_i|z_i|$. Hence feasibility implies that $z$ is a nonzero vector in
$L\cap K_S$. Conversely, any nonzero vector in $L\cap K_S$ can be
scaled by a positive constant to satisfy the last inequality.
\end{proof}

\begin{corollary}[Selecting the next progress set]
\label{cor:bc-missing-sign-pattern}
Let $c\in\mathbb R_{>0}^m$, and let the vectors in $\mathcal C$ be
linearly independent constraints for $c$. If $|\mathcal C|<m-1$, then
all agents can select the same progress set without communication.
\end{corollary}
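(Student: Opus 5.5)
The plan is to combine the two preceding lemmas: Lemma~\ref{lem:bc-progress-set-exists} guarantees that a progress set exists, and Lemma~\ref{lem:bc-lp-sign-vector} lets each agent test a candidate set locally. A fixed deterministic search order then makes all agents choose the same set.

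First we check that the hypotheses of Lemma~\ref{lem:bc-progress-set-exists} hold. Take \(L=\operatorname{span}(\mathcal C)\). Every vector of \(\mathcal C\) is a constraint for \(c\), so \(L\subseteq c^\perp\). The vectors of \(\mathcal C\) are linearly independent, so \(\dim L=|\mathcal C|<m-1\). Since \(c\in\mathbb R_{>0}^m\), the lemma gives a nonempty proper \(S\) with \(L\cap K_S=\{0\}\). By definition, this \(S\) is a progress set for \(\mathcal C\). Note that each agent only needs \(\mathcal C\) and \(m\), not \(c\), to run the selection procedure below; the unknown \(c\) enters only the existence argument.

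Second, we describe the selection rule. Every agent enumerates the \(2^m-2\) nonempty proper subsets of \(\{1,\ldots,m\}\) in a fixed canonical order, for instance lexicographic order of characteristic vectors. For each subset \(S\), the agent uses Lemma~\ref{lem:bc-lp-sign-vector} with the basis \(\mathcal C\) of \(L\) to decide whether \(L\cap K_S\) contains a nonzero vector. It outputs the first \(S\) for which the answer is negative. By the first step, some such \(S\) exists, so the search always terminates with a progress set.

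Third, we argue that all agents make the same choice. All agents hold identical copies of \(m\) and of the list \(\mathcal C\). This is true by induction over the invocations of \ConstraintRound{}: the parent map, the counts \(h_j\), and the lifted and new constraints are computed from common information, namely the outputs of \DistinctValues{}, as in Lemma~\ref{lem:bc-constraint-round-correct}. The LP test is a deterministic local computation; it can be done with exact rational arithmetic because the constraints have integer entries. Hence every agent obtains the same answers for every candidate and selects the same \(S\). No communication rounds are used.

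The only delicate point is this agreement on \(\mathcal C\) and \(m\), which must be justified by the inductive invariant. The exponential local cost of the enumeration does not matter, since only rounds are counted.
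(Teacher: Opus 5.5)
Your proposal is correct and follows essentially the same argument as the paper: apply Lemma~\ref{lem:bc-progress-set-exists} to $L=\operatorname{span}(\mathcal C)$ to get existence, then have every agent scan the nonempty proper subsets of $\{1,\ldots,m\}$ in a fixed common order and pick the first one certified by the linear-programming test of Lemma~\ref{lem:bc-lp-sign-vector}, all without communication. Your extra remarks only spell out what the paper states in one line: that $L\subseteq c^\perp$, that all agents hold identical $m$ and $\mathcal C$ (by induction over the calls to \textsc{ConstraintRound}), and that exact rational arithmetic is available.
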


\begin{proof}
Apply Lemma~\ref{lem:bc-progress-set-exists} to
$L=\operatorname{span}(\mathcal C)$. A progress set exists because
$\dim L=|\mathcal C|<m-1$. All agents know the same $m$ and
$\mathcal C$, enumerate the nonempty proper subsets of
$\{1,\ldots,m\}$ in the same fixed order, and use
Lemma~\ref{lem:bc-lp-sign-vector} to select the first progress set. These
are local computations and use no rounds.
\end{proof}

\subsubsection{The complete \ProblemName{Input Frequency} algorithm}

The \InputFrequency{} procedure is given in Algorithm~\ref{alg:bc-input-frequency}.
The agents use \DistinctValues to obtain the set of input values and assign initial labels according to their increasing order. The array $\mathit{origin}$ stores the original input associated with each current class; whenever a class is split, each child inherits its parent's value.

The procedure then repeatedly calls \ConstraintRound to obtain new constraints until the accumulated constraints are sufficient to compute the frequencies of the input.

\begin{algorithm}[H]
\caption{\InputFrequency}
\label{alg:bc-input-frequency}

\KwInput{A positive integer input $x$ and a common upper bound $U\geq n$}
\KwOutput{The frequency of every input value occurring in the network}

\Fn{\AlgInputFrequency{$x,U$}}{
  $I\leftarrow\AlgDistinctValues(x,U)$\;
  write $I=\{w_1<\cdots<w_m\}$\;
  $\mathit{label}\leftarrow i$ such that $x=w_i$\;

  \For{$i\leftarrow1$ \KwTo $m$}{
    $\mathit{origin}[i]\leftarrow w_i$\;
  }
  $\mathcal C\leftarrow\emptyset$\;

  \While{$|\mathcal C|<m-1$}{
    choose the first nonempty proper set
    $S\subsetneq\{1,\ldots,m\}$ in a fixed common order such that
    $\operatorname{span}(\mathcal C)\cap K_S=\{0\}$\;

    $\mathit{oldOrigin}\leftarrow\mathit{origin}$\;
    $(\mathit{label},m,\mathcal C,\mathit{parent})\leftarrow
      \AlgConstraintRound(S,U,m,\mathcal C,\mathit{label})$\;

    \For{$j\leftarrow1$ \KwTo $m$}{
      $\mathit{origin}[j]\leftarrow
       \mathit{oldOrigin}[\mathit{parent}[j]]$\;
    }
  }

  solve for $y=(y_1,\ldots,y_m)$ the system
  $\alpha\cdot y=0$ for every $\alpha\in\mathcal C$ and
  $\sum_{i=1}^{m}y_i=1$\;

  \ForEach{$w\in I$}{
    $f_w\leftarrow
      \displaystyle\sum_{j:\mathit{origin}[j]=w}y_j$\;
  }
  \Return{the map $w\mapsto f_w$ for $w\in I$}\;
}
\end{algorithm}

\begin{theorem}
\label{thm:anonymous-bc-frequency}
Assume that all agents know a common upper bound $U\geq n$.
The algorithm \InputFrequency{} computes the \ProblemName{Input Frequency} function with
termination in anonymous \(1\)-interval-connected dynamic networks. It uses
$O\bigl(UnB_{\max}+Un^2\log n\bigr)$
rounds.
\end{theorem}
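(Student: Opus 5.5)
The argument splits into correctness, which is an invariant maintained across iterations of the main loop, and complexity, which is obtained by summing the cost of \ConstraintRound{} calls.

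\textbf{Correctness.} By Lemma~\ref{lem:bc-distinct-values}, the initial call to \DistinctValues{} returns the exact input set $I=\{w_1<\cdots<w_m\}$ at every agent. The initial labels are therefore compact, each class is the set of agents holding input $w_i$, and $\mathit{origin}[i]=w_i$. I would maintain the following invariant at the start of each iteration of the while loop. All agents hold the same $m$, $\mathcal C$ and $\mathit{origin}$. The labels are compact, so the size vector satisfies $c\in\mathbb R^m_{>0}$. The list $\mathcal C$ consists of linearly independent constraints for $c$. Finally, for every $j$, all agents of class $j$ have input $\mathit{origin}[j]$.

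While $|\mathcal C|<m-1$, Corollary~\ref{cor:bc-missing-sign-pattern} guarantees that the chosen $S$ exists and is a common progress set. Lemma~\ref{lem:bc-constraint-round-correct} then re-establishes compactness and independence and increases $|\mathcal C|$ by one. Its part~(2), together with the update $\mathit{origin}[j]\leftarrow\mathit{oldOrigin}[\mathit{parent}[j]]$, preserves the origin invariant, because each refined class is a subset of its parent class. Common knowledge is preserved because all data returned by \ConstraintRound{} other than $\mathit{label}$ are computed identically by every agent.

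For termination of the loop, note that $m-1-|\mathcal C|$ need not decrease, since $m$ grows. However, $|\mathcal C|$ grows by one per iteration and $m\le n$ always, so after at most $n-1$ iterations we must have $|\mathcal C|\ge m-1$. Independence in $\mathbb R^m$ together with $\mathcal C\subseteq c^\perp$ then forces $|\mathcal C|=m-1$. The solution space of the system is therefore $\operatorname{span}(c)$, and the normalization $\sum y_i=1$ yields the unique solution $y=c/n$. Summing $y_j$ over the classes with $\mathit{origin}[j]=w$ gives exactly the fraction of agents holding $w$. After this each agent outputs and stops, so the algorithm terminates.

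\textbf{Complexity.} \DistinctValues{} costs $O(Un(1+B_{\max}))=O(UnB_{\max})$ rounds, since $B_{\max}\ge1$. For the loop, let $s_t$ be the number of fresh labels created in iteration $t$ and $m_t\le n$ the number of classes at its start. Lemma~\ref{lem:bc-constraint-round-complexity} gives a cost of $O(U(m_t+s_t)\log n)$ for that iteration. The number of iterations is at most $n-1$, and Lemma~\ref{lem:bc-total-refinements} gives $\sum_t s_t\le n$. Hence the total loop cost is $O(U\log n\,(n\cdot n+n))=O(Un^2\log n)$.

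The only delicate step is bounding the iteration count, since $m$ changes during the run; the bound $|\mathcal C|\le m-1\le n-1$ settles it. All remaining work, namely solving the linear program and the linear system, is local and uses no rounds.
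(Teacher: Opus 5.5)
Your proposal is correct and follows essentially the same route as the paper's proof. It maintains the same invariant through the progress-set corollary and the one-constraint-round lemma, and it derives the same iteration bound from $|\mathcal C|\le m-1\le n-1$ and the same cost summation; your bound $\sum_t s_t\le n$ is slightly looser than the paper's $n-q$, which is harmless.
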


\begin{proof}

By Lemma~\ref{lem:bc-distinct-values}, all agents obtain the same set
$I$. Its increasing order gives compact initial labels, and
$\mathit{origin}$ records the correct input for every initial class.
Initially, $\mathcal C$ is empty.

At the beginning of every loop iteration, the labels are compact, every
class has the correct value in $\mathit{origin}$, and the vectors in
$\mathcal C$ are linearly independent constraints for the current
class-size vector. If the loop guard holds,
Corollary~\ref{cor:bc-missing-sign-pattern} guarantees that the required
set $S$ exists and that all agents select the same one. By
Lemma~\ref{lem:bc-constraint-round-correct}, the call to
\ConstraintRound{} preserves compactness and the old constraints, adds one
independent constraint, and returns the correct parent map. The update
of $\mathit{origin}$ is therefore also correct. Thus these properties
hold throughout the execution.

Each iteration increases $|\mathcal C|$ by one. Since the vectors in
$\mathcal C$ are independent and orthogonal to the nonzero current
class-size vector, $|\mathcal C|\leq m-1\leq n-1$.

Every invocation of \ConstraintRound{} terminates, by Lemma \ref{lem:bc-constraint-round-complexity}, so the loop performs at most $n-1$ iterations. At termination, its guard is false; hence
$|\mathcal C|=m-1$.

Let $c$ be the final class-size vector. The vectors in $\mathcal C$ form
a basis of $c^\perp$, so the homogeneous equations in the final system
have solution space $\operatorname{span}\{c\}$. Since
$\sum_i c_i=n$, the additional equation $\sum_i y_i=1$ gives the unique
solution $y=c/n$. For every original input value $w$, the current
classes with $\mathit{origin}[j]=w$ partition exactly the agents whose
input is $w$. Therefore
\[
    f_w
    =\sum_{j:\mathit{origin}[j]=w}\frac{c_j}{n}
    =\frac{|\{v\in V:\lambda(v)=w\}|}{n}.
\]
Thus the output is correct.

For the complexity, let $q=|I|$. The initial call to \DistinctValues{}
uses $O\bigl(Uq(1+B_{\max})\bigr)=O(UnB_{\max})$ rounds, since inputs are positive integers and $B_{\max}\geq1$.
For iteration $i$, let $m_i$ be the number of classes at its start and
$s_i$ the number of fresh labels it creates. There are at most $n-1$
iterations, $m_i\leq n$, and $\sum_i s_i\leq n-q$
by Lemma~\ref{lem:bc-total-refinements}. Hence
Lemma~\ref{lem:bc-constraint-round-complexity} gives
$\sum_i O\bigl(U(m_i+s_i)\log n\bigr)
    =O\bigl(Un^2\log n\bigr)$. Adding the initial call proves the stated bound.
\end{proof}

\subsection{\ProblemName{Input Multiset} with a Known Number of Leaders}
\label{sec:frequency-to-multiset}

Relative frequencies become exact multiplicities once the multiplicity of some
distinguished population is known.  In this section we assume that exactly
\(k\geq1\) agents have their leader flag set, and that \(k\) is known to all
agents.

The \InputMultiset{} procedure is given in Algorithm~\ref{alg:bc-input-multiset}.

An agent with input $x$ and leader flag $b\in\{0,1\}$ encodes the pair
as $z=2x+b$.  This encoding is injective, and $z$ is odd exactly for leaders.
Note that leaders need not have the same input.
After running \InputFrequency{} on the encoded values, let $f_z$ be the
frequency of $z$, with $f_z=0$ for absent values. The total frequency of
odd encoded values is
$f_{\mathrm L}=\sum_{z\text{ odd}}f_z=\frac{k}{n}$.

Thus all agents recover $n=k/f_{\mathrm L}$. For every original input
$a$, its multiplicity is
$n\bigl(f_{2a}+f_{2a+1}\bigr)$.

\begin{algorithm}[H]
\caption{\procname{InputMultisetWithMultiLeaders}}
\label{alg:bc-input-multiset}

\KwInput{A positive integer input $x$, the local leader flag
\(\mathit{isLeader}\), a common upper bound $U\geq n$,
and the known number $k\geq1$ of leaders}
\KwOutput{The multiplicity of every input value occurring in the network}

\Fn{\AlgInputMultiset{$x,\mathit{isLeader},U,k$}}{
  \eIf{$\mathit{isLeader}$}{
    $z\leftarrow2x+1$\;
  }{
    $z\leftarrow2x$\;
  }

  $f\leftarrow\AlgInputFrequency(z,U)$\;
  $Z\leftarrow\operatorname{dom}(f)$\;
  treat $f_z$ as $0$ for every $z\notin Z$\;

  $f_{\mathrm L}\leftarrow
    \displaystyle\sum_{z\in Z:\,z\text{ odd}}f_z$\;
  $\widehat n\leftarrow k/f_{\mathrm L}$\;
  $I\leftarrow\{\lfloor z/2\rfloor:z\in Z\}$\;

  \ForEach{$a\in I$}{
    $C_a\leftarrow\widehat n\bigl(f_{2a}+f_{2a+1}\bigr)$\;
  }
  \Return{the map $a\mapsto C_a$}\;
}
\end{algorithm}

\begin{theorem}
\label{thm:anonymous-bc-input-multiset}
In anonymous \(1\)-interval-connected dynamic networks with exactly
$k\geq1$ leaders, where $k$ is known to all agents, \procname{InputMultisetWithMultiLeaders}
computes the \ProblemName{Input Multiset} function with termination, assuming a common
upper bound $U\geq n$. It uses
$O\bigl(UnB_{\max}+Un^2\log n\bigr)$ rounds.
\end{theorem}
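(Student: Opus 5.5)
The proof is a reduction to Theorem~\ref{thm:anonymous-bc-frequency}, applied to the encoded inputs \(z=2x+b\). Every agent runs \InputFrequency{} on its encoded value with the same bound \(U\). By that theorem, every agent terminates with the same correct frequency map \(f_z=|\{v:z(v)=z\}|/n\) for all \(z\) occurring in the network. The remaining lines of \InputMultiset{} are local computations, so they add no rounds, and termination is inherited.

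For correctness, first check that the encoding is injective. The map \((x,b)\mapsto 2x+b\) with \(b\in\{0,1\}\) is injective, and \(z\) is odd exactly when \(b=1\). Hence the agents with odd encoded values are exactly the \(k\) leaders, and
\[
f_{\mathrm L}=\sum_{z\text{ odd}}f_z=\frac{k}{n}.
\]
Since \(k\geq1\), we have \(f_{\mathrm L}>0\), so \(\widehat n=k/f_{\mathrm L}=n\) is well defined and equals the network size. Next, the set \(\{\lfloor z/2\rfloor : z\in Z\}\) is exactly the set of original inputs present. For each such \(a\), the agents with \(\lambda(v)=a\) are exactly those with \(z\in\{2a,2a+1\}\); absent codes are given frequency \(0\). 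Therefore
\[
C_a=n\bigl(f_{2a}+f_{2a+1}\bigr)=|\{v:\lambda(v)=a\}|,
\]
which is the correct multiplicity.

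For the complexity, the only subtlety is the bit length of the encoded values. Since \(z\le 2x+1\), the maximum bit length of \(z\) is at most \(B_{\max}+1=O(B_{\max})\), using \(B_{\max}\geq1\) for positive integer inputs. Substituting this into the bound of Theorem~\ref{thm:anonymous-bc-frequency} gives
\[
O\bigl(Un(B_{\max}+1)+Un^2\log n\bigr)=O\bigl(UnB_{\max}+Un^2\log n\bigr).
\]
There is no real obstacle here; the only points needing care are \(f_{\mathrm L}\neq0\) and the constant-factor blow-up in input length.
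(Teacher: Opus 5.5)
Your proposal is correct and follows essentially the same route as the paper: run \procname{InputFrequency} on the encoded values $z=2x+b$, use that the odd codes belong exactly to the $k$ leaders to get $f_{\mathrm L}=k/n>0$ and hence $\widehat n=n$, read off each multiplicity as $n(f_{2a}+f_{2a+1})$, and inherit termination and the round bound from Theorem~\ref{thm:anonymous-bc-frequency}, since the encoding adds at most one bit to the input length. Your extra remarks are correct details that the paper leaves implicit: that $\{\lfloor z/2\rfloor : z\in Z\}$ recovers exactly the set of original inputs, and that $B_{\max}\geq 1$ absorbs the additive $+1$ in the bit length.
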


\begin{proof}
Exactly the leaders have odd encoded values, so $f_{\mathrm L}=\frac{k}{n}>0$ and hence $\widehat n = \frac{k}{f_{\mathrm L}} = n$.
For every original input $a$, the agents holding $a$ are exactly
those with encoded value $2a$ or $2a+1$. Therefore $\widehat n\bigl(f_{2a}+f_{2a+1}\bigr)
    =|\{v\in V:\lambda(v)=a\}|$.

The output is correct, and termination follows from
Theorem~\ref{thm:anonymous-bc-frequency}. Since the encoding $2x+b$
increases the input bit length by at most one, the same theorem gives the
stated complexity.
\end{proof}

The unique-leader setting is the special case $k=1$.

\begin{corollary}
\label{cor:anonymous-bc-classical-counting}
In anonymous \(1\)-interval-connected dynamic networks with exactly \(k\geq1\)
leaders, where \(k\) is known to all agents, the \ProblemName{Counting} function can be computed
with termination in
$O\bigl(Un^2\log n\bigr)$ rounds, assuming a common upper bound $U\geq n$.
\end{corollary}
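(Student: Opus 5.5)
The corollary follows from the \ProblemName{Counting}-to-\ProblemName{Input Multiset} reduction in Section~\ref{sec:sysmod}, applied to \InputMultiset{} (Theorem~\ref{thm:anonymous-bc-input-multiset}). Every agent discards its original input and uses the synthetic constant input \(x=1\), keeping its leader flag. The encoded values \(z=2x+b\) then lie in \(\{2,3\}\), where \(3\) marks exactly the \(k\) leaders. We run \InputMultiset{}\((1,\mathit{isLeader},U,k)\), and each agent outputs the quantity \(\widehat n=k/f_{\mathrm L}\) computed inside the procedure. Equivalently, it outputs the multiplicity \(C_1\) of the single input value \(1\).

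For correctness, the proof of Theorem~\ref{thm:anonymous-bc-input-multiset} shows that \(f_{\mathrm L}=k/n>0\), so \(\widehat n=n\). Also \(C_1=n(f_2+f_3)=n\), since every agent holds input \(1\). Termination is inherited from Theorem~\ref{thm:anonymous-bc-frequency}, because \InputMultiset{} terminates at every agent and the final output is a local computation.

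For the round bound, \(B_{\max}\) is a constant once the input is synthetic: the encoded values are at most \(3\), so they have bit length at most \(2\). The term \(O(UnB_{\max})\) in Theorem~\ref{thm:anonymous-bc-input-multiset} therefore becomes \(O(Un)\), and it is absorbed by \(O(Un^2\log n)\).

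There is essentially no obstacle. The only point to check is that the complexity bound of Theorem~\ref{thm:anonymous-bc-frequency} is stated in terms of the bit length of the values actually passed to \DistinctValues{}, which are the encoded \(z\), not the original inputs. That proof charges the initial \DistinctValues{} call \(O(Uq(1+B_{\max}))\) with \(q\le 2\) here, which is \(O(U)\). It charges the constraint rounds \(O(Un^2\log n)\), independent of the inputs. Together these give \(O(Un^2\log n)\).
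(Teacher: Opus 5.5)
Your proposal is correct and is essentially the paper's own proof. You replace every input by the synthetic constant $1$ and run \procname{InputMultisetWithMultiLeaders} with the original leader flags and the known $k$. You read off the multiplicity $n$, and you note that constant input bit length lets the $O(UnB_{\max})$ term of Theorem~\ref{thm:anonymous-bc-input-multiset} be absorbed by $O(Un^2\log n)$. Your extra check, that the cost depends on the bit length of the encoded values $z\in\{2,3\}$, is correct and simply makes explicit what the paper leaves implicit.
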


\begin{proof}
Each agent ignores its original input and invokes
\procname{InputMultisetWithMultiLeaders} with synthetic input \(1\), common bound
\(U\), known leader count \(k\), and its original leader flag. The multiplicity
of the synthetic value \(1\) is \(n\). Since the
input bit length is constant, the term $O(UnB_{\max})$ is absorbed by
$O(Un^2\log n)$.
\end{proof}

\subsection{Computing an Upper Bound with a Known Number of Leaders and a Local Degree Oracle}\label{sec:upperbound}

We now present a procedure that computes a common upper bound on the network
size in a \(1\)-interval-connected dynamic network with exactly $k \geq 1$ leaders, $k$ known to all agents, and a
local degree oracle. The procedure, called \UpperBound, is given in Algorithm~\ref{alg:upper-bound}.

The algorithm considers successive trials indexed by an integer $q$.
In the trial with parameter $q$, the candidate upper bound is
$U=k(q+k)^q$. Initially, exactly the $k$ leaders are reached. Their signal
then propagates for $q$ rounds, but a reached agent forwards the signal
only when its degree in the current round is at most $q+k-1$.

After the propagation phase, every agent invokes
\(\Flood(U,\lnot\mathit{reached})\). Thus, an unreached agent supplies $1$,
while a reached agent supplies $0$. If some agent is unreached, \Flood
returns $1$ at every agent and the candidate is rejected. Otherwise, \Flood
returns $0$ at every agent and all agents return $U$.

\LinesNumbered
\begin{algorithm}[H]
\caption{\UpperBound}
\label{alg:upper-bound}

\KwInput{The local leader flag \(\mathit{isLeader}\), the known number \(k\geq1\) of leaders and access to the local degree oracle}
\KwOutput{A common integer $U$ satisfying $U\geq n$}

\Fn{\AlgUpperBound{$\mathit{isLeader}$, $k$}}{
  \For{$q \leftarrow 0,1,2,\ldots$}{
    $U \leftarrow k(q+k)^q$\label{line:ub-candidate}\;

    $\mathit{reached} \leftarrow \mathit{isLeader}$\label{line:ub-init}\;

    \For{$t \leftarrow 1$ \KwTo $q$}{
      $d \leftarrow$ current degree returned by the local degree oracle\label{line:ub-degree}\;

      send $1$ iff $\mathit{reached}$ and $d\leq q+k-1$; otherwise send $0$\label{line:ub-send}\;

      \If{at least one neighbor sent $1$}{
        $\mathit{reached} \leftarrow \boolname{true}$\label{line:ub-update}\;
      }
    }

    $\mathit{failed} \leftarrow $\AlgFlood{$U,\lnot\mathit{reached}$}\label{line:ub-flood}\;

    \If{$\lnot\mathit{failed}$}{
      \Return{$U$}\label{line:ub-return}\;
    }
  }
}
\end{algorithm}

The local degree oracle is queried at line~\ref{line:ub-degree}, and the returned
degree is tested at line~\ref{line:ub-send}. An agent whose degree is
larger than \(q+k-1\) does not immediately reject the candidate; it simply
refrains from forwarding the signal. This restriction is sufficient to bound
the number of reached agents by \(U=k(q+k)^q\).

The invocation of \Flood at line~\ref{line:ub-flood} has duration $U$, rather than $U-1$. At this point, $U$ is not yet known to be an upper bound on the entire network, so Corollary~\ref{cor:flood-upper-bound} cannot be applied. Instead, the proof shows that at most $U$ agents begin this invocation of \Flood with value $0$, allowing us to apply Lemma~\ref{lem:flood-correctness} directly.

\begin{lemma}\label{lem:bc-upper-bound-correct}
In anonymous \(1\)-interval-connected dynamic networks with exactly $k \geq 1$ leaders,
with $k$ known to all agents and with a local degree oracle, all
agents executing \UpperBound terminate in the same round with the
same value $U$, and this value satisfies $n\leq U$.
\end{lemma}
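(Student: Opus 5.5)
The proof splits into three claims. First, all agents run the trials in lockstep and agree on the outcome of each trial. Second, a trial accepts only if $n\le U$. Third, some trial eventually accepts.

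\textbf{Lockstep and agreement.} I would argue by induction on $q$ that all agents begin trial $q$ in the same round. Each trial consists of exactly $q$ propagation rounds followed by \Flood$(U,\cdot)$, which by Lemma~\ref{lem:flood-correctness}(3) lasts exactly $U$ rounds. Both quantities depend only on the common values $q$ and $k$. The remaining task is to show that all agents obtain the same value of $\mathit{failed}$, so that they either all return or all continue.

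\textbf{Soundness.} Let $R$ be the set of reached agents at the end of the propagation phase. The key counting claim is $|R|\le k(q+k)^q$. I would prove, by induction on $t$, that after $t$ propagation rounds at most $k(q+k)^t$ agents are reached. An agent becomes newly reached in round $t$ only if it is adjacent to some agent that was already reached and sent $1$. Such a sender has degree at most $q+k-1$ in that round. Hence each reached agent can be responsible for at most $q+k-1$ newly reached agents, giving
\[
|R_t|\le |R_{t-1}|\,(1+q+k-1)=|R_{t-1}|\,(q+k).
\]
Since $|R_0|=k$, this yields $|R_q|\le k(q+k)^q=U$.

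The reached agents are exactly those that start the \Flood with value $0$, so $|V\setminus S_0|=|R|\le U$. Lemma~\ref{lem:flood-correctness} then gives two cases:
\begin{itemize}
\item if $S_0\ne\emptyset$, every agent returns $1$;
\item if $S_0=\emptyset$, every agent returns $0$.
\end{itemize}
This establishes agreement. Moreover, if the trial accepts, then every agent was reached, so $n=|R|\le U$.

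\textbf{Termination.} This is the step I expect to be the main obstacle, because the adversary controls the degrees. I would take $q=n-1$ and use the $k$ leaders. Consider a round of this trial in which not all agents are reached. By connectivity, some edge joins the reached set to the unreached set. The issue is that a reached endpoint of such an edge may have degree larger than $q+k-1=n+k-2$ and therefore refuse to forward the signal.

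In a simple graph, however, every degree is at most $n-1\le n+k-2$, using $k\ge1$. So no agent is ever blocked, and propagation behaves exactly like flooding. By the argument of Lemma~\ref{lem:flood-correctness}, after $n-k\le n-1$ rounds every agent is reached. Hence trial $q=n-1$ (or an earlier one) accepts, and the algorithm terminates.

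If the degree-threshold argument turns out to need more care, the fallback is that $q\ge n-k$ together with the fact that degrees never exceed $n-1$ already suffices. Since all agents accept in the same trial, they all terminate in the same round with the same value $U$.
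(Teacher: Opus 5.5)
Your proof is correct and follows the paper's argument essentially step for step. It uses the same inductive bound $|R_i|\le (q+k)|R_{i-1}|$, the same appeal to parts (1)--(2) of Lemma~\ref{lem:flood-correctness} with $|V\setminus S_0|=|R_q|\le U$ to get agreement and soundness, and the same observation for termination that degrees in a simple graph never exceed $n-1$. The only difference is that you take $q=n-1$ as the accepting trial, whereas the paper takes $q=n-k$, where the threshold $q+k-1$ equals exactly $n-1$. Both choices work for this lemma, but the paper's choice is the one the subsequent bound $U\le n^{n-1}$ in Lemma~\ref{lem:bc-upper-bound-complexity} relies on.
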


\begin{proof}
Fix a trial with parameter $q$. For every $i\in\{0,\ldots,q\}$,
let $R_i$ be the set of agents whose variable $\mathit{reached}$ is
true after the first $i$ rounds of the propagation phase. By line~\ref{line:ub-init},
$R_0$ contains exactly the leaders, and hence $|R_0|=k$.

We first show that $|R_i|\leq k(q+k)^i$ for every $i\in\{0,\ldots,q\}$.
Consider propagation round $i$, where $1\leq i\leq q$. By synchrony,
only agents in $R_{i-1}$ can send $1$ in this round. By line~\ref{line:ub-send},
every agent that sends $1$ has degree at most $q+k-1$. Therefore, each sender
can cause at most $q+k-1$ agents to become newly reached, and at most $(q+k-1)|R_{i-1}|$
agents become newly reached in total. It follows that $|R_i|\leq(q+k)|R_{i-1}|$.
Since $|R_0|=k$, induction gives $|R_i|\leq k(q+k)^i$.

In particular, at the end of the propagation phase we have $|R_q|\leq k(q+k)^q$.
By line~\ref{line:ub-candidate}, the candidate is $U=k(q+k)^q$, and therefore $|R_q|\leq U$.

Consider now the invocation of \Flood at line~\ref{line:ub-flood}. Let $S_0$ be
the set of agents whose initial value in this invocation is $1$. Since the
initial value is $\lnot\mathit{reached}$, we have $S_0=V\setminus R_q$.

If $S_0$ is empty, all agents invoke \Flood with value $0$. By part~(1) of
Lemma~\ref{lem:flood-correctness}, every agent returns $0$.

Suppose instead that $S_0$ is nonempty. The agents that invoke \Flood with
value $0$ are exactly the agents in $R_q$. Consequently, $|V\setminus S_0|=|R_q|\leq U$.
Since the duration of \Flood is $U$, part~(2) of Lemma~\ref{lem:flood-correctness}
implies that every agent returns $1$.

Thus, the call to \Flood returns the same value at every agent and, because
the duration of the trial is fixed by \(q\), it returns in the same round at
every agent. It returns $1$ exactly when at least one agent remains
unreached. Consequently, all agents either reject the current candidate and
proceed to the next trial in the same round or return the same value $U$ at
line~\ref{line:ub-return} in the same round.

Suppose that the agents return in the trial with parameter $q$.
Then \Flood returned $0$, so no agent was unreached and $R_q=V$.
Since $|R_q|\leq U$, it follows that $n=|V|=|R_q|\leq U$. Therefore, every
returned value is a valid upper bound.

It remains to prove termination. Consider
the trial with parameter $q=n-k$. Every agent has degree at most $q+k-1 = n-1$,
so every reached agent sends $1$ at line~\ref{line:ub-send}. If some agent
is still unreached at the beginning of a propagation round, connectivity of the
communication graph implies that there is an edge between a reached agent and an
unreached agent. The reached endpoint sends $1$, and hence at least one new agent
becomes reached at line~\ref{line:ub-update}. Starting with the leaders, all $n$
agents therefore become reached within $n-k$ rounds. The call to \Flood
returns $0$, and all agents terminate.

Hence \UpperBound terminates for every network size.
\end{proof}

\begin{lemma}\label{lem:bc-upper-bound-complexity}
Assume that \(1\leq k\leq n\). Then \UpperBound returns a value
\(U\leq n^{n-1}\) and uses \(O(n^n)\) rounds.
\end{lemma}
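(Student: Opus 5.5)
From the termination argument in the proof of Lemma~\ref{lem:bc-upper-bound-correct}, the trial with parameter \(q^\ast=n-k\) always succeeds. Trials run in increasing order of \(q\), so the algorithm returns at some \(q\leq n-k\). The plan is to bound the returned candidate \(k(q+k)^q\) at this \(q\), and then sum the lengths of all trials executed.

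\textbf{The returned value.} The function \(q\mapsto k(q+k)^q\) is nondecreasing in \(q\). The returned value therefore satisfies
\[
U\leq k(q^\ast+k)^{q^\ast}=k\,n^{n-k}.
\]
We check \(k\,n^{n-k}\leq n^{n-1}\). This is equivalent to \(k\leq n^{k-1}\), which holds for \(k=1\) and, when \(k\geq2\), follows from \(n\geq k\) via \(n^{k-1}\geq k^{k-1}\geq k\). This settles the first claim.

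\textbf{Counting rounds.} Trial \(q\) uses \(q\) propagation rounds plus one call to \Flood of duration \(U_q=k(q+k)^q\), by part~(3) of Lemma~\ref{lem:flood-correctness}. The degree queries are free. The total is at most
\[
\sum_{q=0}^{n-k}\bigl(q+k(q+k)^q\bigr).
\]
The first part is \(O(n^2)\). For the second part, each term satisfies \(k(q+k)^q\leq k\,n^{q}\) because \(q+k\leq n\). Summing the geometric series gives
\[
\sum_{q=0}^{n-k} k\,n^q\leq 2k\,n^{n-k}\leq 2n^{n-1}
\]
for \(n\geq2\), using the inequality already shown for the value bound. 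Hence the total is \(O(n^{n-1}+n^2)=O(n^n)\); the bound is in fact slightly stronger than claimed. The case \(n=1\) is trivial (\(k=1\), \(q=0\), \(U=1\)).

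\textbf{Main obstacle.} The delicate point is only the interplay between \(k\) and \(n\): the candidate \(k(q+k)^q\) grows with \(k\), but the number of trials \(n-k+1\) shrinks with \(k\). The inequality \(k\leq n^{k-1}\) handles both the value bound and the sum, so I would state it once and reuse it in both places. Everything else is the geometric-sum estimate.
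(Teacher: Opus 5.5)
Your proof is correct and follows essentially the same route as the paper. Both take the guaranteed-successful trial $q=n-k$ from Lemma~\ref{lem:bc-upper-bound-correct}, bound the returned value by $k\,n^{n-k}\le n^{n-1}$, and add up the propagation and flooding lengths of the executed trials. The only difference is that the paper bounds the flooding cost crudely as fewer than $n$ trials of at most $n^{n-1}$ rounds each, whereas your geometric-series estimate gives the slightly sharper $O(n^{n-1}+n^2)$.
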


\begin{proof}
By the proof of Lemma~\ref{lem:bc-upper-bound-correct}, the trial with
parameter \(q=n-k\) succeeds. Therefore, the algorithm terminates in a trial
with some parameter \(q^\ast\leq n-k\). The returned value is
\(U=k(q^\ast+k)^{q^\ast}\).

Since \(k\leq n\), \(q^\ast+k\leq n\), and \(q^\ast\leq n-k\), we obtain
\(U\leq k n^{n-k}\leq n\cdot n^{n-2}=n^{n-1}\), if \(k\geq 2\).
If \(k=1\), then \(U=(q^\ast+1)^{q^\ast}\leq n^{n-1}\).
Thus, in all cases, \(U\leq n^{n-1}\).

A trial with parameter \(q\) uses exactly \(q\) rounds for the propagation
phase. By part~(3) of Lemma~\ref{lem:flood-correctness}, the call to \Flood
uses exactly \(k(q+k)^q\) rounds. Since the algorithm terminates with
\(q^\ast\leq n-k\), fewer than \(n\) trials are executed.

The propagation phases use fewer than \(n^2\) rounds in total. Moreover, by
the bound proved above, every flooding phase in an executed trial uses at most
\(n^{n-1}\) rounds. Since fewer than \(n\) trials are executed, the flooding
phases use at most \(n\cdot n^{n-1}=n^n\) rounds in total. Hence the overall
number of rounds is \(O(n^n)\).
\end{proof}

Already at this point, combining \UpperBound{} with the known-bound algorithms of
this section gives the following immediate corollary.

\begin{corollary}
In anonymous \(1\)-interval-connected dynamic networks with exactly
\(k\geq 1\) leaders, with \(k\) known to all agents and with a local degree
oracle, \ProblemName{Input Frequency}, \ProblemName{Input Multiset}, and \ProblemName{Counting} can be computed with
termination without any initially known upper bound on the network size.
The round complexities are $O(n^nB_{\max}+n^{n+1}\log n)$ for \ProblemName{Input Frequency} and \ProblemName{Input Multiset}, and $O(n^{n+1}\log n)$
for \ProblemName{Counting}.
\end{corollary}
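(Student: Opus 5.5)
The plan is to run the two procedures one after the other. First every agent executes \UpperBound with its leader flag and the known $k$. By Lemma~\ref{lem:bc-upper-bound-correct}, all agents terminate \UpperBound in the same round with the same value $U\geq n$. By Lemma~\ref{lem:bc-upper-bound-complexity}, this phase takes $O(n^n)$ rounds and returns $U\leq n^{n-1}$; the hypothesis $1\leq k\leq n$ holds because exactly $k$ agents are leaders. Since the agents finish simultaneously and agree on $U$, they can start the next phase in the same round with a genuine common upper bound. This is exactly the setting assumed in Section~\ref{sec:input-frequency} and Section~\ref{sec:frequency-to-multiset}.

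In the second phase the agents call \InputFrequency$(x,U)$, \InputMultiset$(x,\mathit{isLeader},U,k)$, or the counting variant of Corollary~\ref{cor:anonymous-bc-classical-counting}, as appropriate. Correctness and termination follow directly from Theorem~\ref{thm:anonymous-bc-frequency}, Theorem~\ref{thm:anonymous-bc-input-multiset} and Corollary~\ref{cor:anonymous-bc-classical-counting}. Those results only require a common $U\geq n$ known to all agents, and not $U=O(n)$.

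For the round complexity, substitute $U\leq n^{n-1}$ into the bound $O(UnB_{\max}+Un^2\log n)$. This gives $O(n^nB_{\max}+n^{n+1}\log n)$, and adding the $O(n^n)$ rounds of \UpperBound does not change it. For \ProblemName{Counting}, $B_{\max}$ is constant and Corollary~\ref{cor:anonymous-bc-classical-counting} gives $O(Un^2\log n)=O(n^{n+1}\log n)$.

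The only delicate point is synchrony of the composition. The known-bound procedures assume that all agents begin them in the same round, since their \Flood phases and test rounds must line up. I would justify this explicitly from the ``same round'' clause of Lemma~\ref{lem:bc-upper-bound-correct}. Apart from that step, the argument is straightforward substitution.
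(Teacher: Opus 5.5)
Your proposal is correct and follows the paper's proof: run \UpperBound{} to get a common \(U\) with \(n\leq U\leq n^{n-1}\) in \(O(n^n)\) rounds, then substitute this \(U\) into the known-bound results, whose costs dominate. Your remark that the ``same round'' clause of Lemma~\ref{lem:bc-upper-bound-correct} ensures a synchronized start of the second phase makes explicit a point the paper leaves implicit.
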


\begin{proof}
Run \UpperBound{} first. By Lemma~\ref{lem:bc-upper-bound-correct}, all
agents terminate with the same value $U$ satisfying $U\geq n$. This
value can then be used as the common upper bound required by
Theorem~\ref{thm:anonymous-bc-frequency},
Theorem~\ref{thm:anonymous-bc-input-multiset}, and
Corollary~\ref{cor:anonymous-bc-classical-counting}. Lemma~\ref{lem:bc-upper-bound-complexity}
gives \(U\leq n^{n-1}\) and \(O(n^n)\) rounds to compute the bound. Substituting
this \(U\) into the known-bound complexities gives
$
O(n^{n-1}\cdot nB_{\max}+n^{n-1}\cdot n^2\log n)
=O(n^nB_{\max}+n^{n+1}\log n)
$
for \ProblemName{Input Frequency} and \ProblemName{Input Multiset}, and
$
O(n^{n-1}\cdot n^2\log n)=O(n^{n+1}\log n)
$
for \ProblemName{Counting}. These terms dominate the cost of \UpperBound{}.
\end{proof}

\section{Adaptive Flooding Without a Known Bound}
\label{sec:anonymous-bc-unknown-bound}

The known-bound algorithms of the previous sections use
calls of the form \(\Flood(U-1,x)\), where all agents know a common upper
bound \(U\geq n\).  Such a call computes the OR of the local input bits and,
when all agents start it in the same round, also gives all agents a common
finishing round.

We now remove the initially known bound by providing a communication layer that only requires a leader and can be used to simulate the \ProblemName{Input Multiset}
algorithm described in the previous section, leading to a stabilizing algorithm.

The communication layer maintains an estimate \(\widehat U\) of
the network size and runs the known-bound \ProblemName{Input Multiset} algorithm in successive
\emph{trials}.  A trial starts either at the beginning of the execution or when
the layer delivers a \Restart{} event.  When a \Restart{} event is delivered,
the current copy of the simulated algorithm is discarded and a fresh copy is
started with the current value of \(\widehat U\).

The communication layer exposes to the simulation algorithm the following interface.
\begin{itemize}
    \item \(\widehat U\) is the current estimate.  The simulated algorithm may
    read this value but cannot modify it.
    \item \AdaptiveFlooding\((x)\), where \(x\in\{0,1\}\), replaces every call
    to \(\Flood(U-1,x)\) of the known-bound algorithm.
    \item \DetectError{} is an operation by which the simulated algorithm may
    invalidate the current trial.  In the \ProblemName{Input Multiset} simulation below this
    happens when the computed value is not consistent with the estimate \(\widehat U\).
    \item \Restart{} is a local event delivered by the layer.  When it is
    delivered, the current trial is invalidated and the next trial starts from the
    initial state of the simulated algorithm.
\end{itemize}
Local events delivered by the layer are processed at the beginning of a round,
before the simulated algorithm takes its next step.

Before the final successful trial, trials are speculative: a call to
\AdaptiveFlooding{} may return any value in \(\{0,1\}\), and any output produced
by the simulated algorithm may be arbitrary.  Correctness is required only after
the last restart, or from the beginning if no restart occurs.

\newcommand{\IsInvalid}{\procname{IsInvalid}\xspace}

\subsection{Adaptive Flooding}

The \AdaptiveFlooding{} routine is given in Algorithm~\ref{alg:bc-adaptive-flooding}.

Each agent stores the estimate \(\widehat U\), initially \(1\).  It also
stores a boolean flag \(\mathit{invalid}\), initially false.  An agent with
\(\mathit{invalid}=\boolname{true}\) has abandoned the current trial; an agent
with \(\mathit{invalid}=\boolname{false}\) is \emph{active}.  The operation
\DetectError{} invalidates the current trial at the executing agent by setting
\(\mathit{invalid}\) to true.  The simulated algorithm may read \(\widehat U\)
and may call \DetectError{}, but it cannot read or modify
\(\mathit{invalid}\).  The read-only predicate \IsInvalid{}, used later by the
terminating wrapper, is the predicate \(\mathit{invalid}=\boolname{true}\).

A call to \AdaptiveFlooding{} has a \emph{common start} if all active agents
invoke the call in the same round.  It terminates \emph{synchronously} if all
active agents receive its return value in the same round.

The routine \AdaptiveFlooding{} uses \(\widehat U\) in place of the unknown
upper bound.  It first floods the input bit for \(\widehat U-1\) rounds and
obtains a local value \(y\).

Then two floodings are executed in which the leader certifies the value it obtained as the result of the first flood, and other agents verify that the value each received is consistent with the one received by the leader.  In
the first one, the leader's initial bit is \(1\) exactly when the leader's value
is \(0\); in the second one, the leader's initial bit is \(1\) exactly when the
leader's value is \(1\).  Thus the leader always participates in both
certificate floodings, but its pair of initial certificate bits is either
\((1,0)\) or \((0,1)\), never \((0,0)\). All other agents participate in the two floodings with $(0,0)$. An agent accepts its local value only
if it receives the certificate corresponding to that value.

\begin{algorithm}[H]
\caption{Adaptive flooding}
\label{alg:bc-adaptive-flooding}

\KwInput{A local boolean value $x$}
\KwOutput{A boolean value}

\Fn{\AlgAdaptiveFlooding{$x$}}{
  $y\leftarrow\Flood(\widehat U-1,x)$\label{line:bc-af-value-flood}\;

  $c_0\leftarrow
    \Flood(\widehat U-1,\mathit{isLeader}\land y=0)$\label{line:bc-af-cert-zero}\;

  $c_1\leftarrow
    \Flood(\widehat U-1,\mathit{isLeader}\land y=1)$\label{line:bc-af-cert-one}\;

  \If{$(y=0\land\lnot c_0)$ or $(y=1\land\lnot c_1)$\label{line:bc-af-check}}{
    \AlgDetectError{}\label{line:bc-af-detect}\;
  }

  \Return{$y$}\label{line:bc-af-return}\;
}
\end{algorithm}

The two certificates are needed because the absence of a one-bit signal is not
informative by itself: it may mean that the leader certified the other value, or
that the matching certificate did not reach the agent.  With two certificates,
every agent that returns without calling \DetectError{} can verify that its
local value agrees with the leader's value.

We now show that if the estimate $\widehat{U} \geq n$ then \AdaptiveFlooding behaves like a normal flood and no agent flags the current trial as invalid.

\begin{lemma}
\label{lem:bc-adaptive-flooding-correctness}
Assume that no agent is invalid before a common-start call to
\AdaptiveFlooding{}, and that all agents use the same estimate
\(\widehat U\) in this call.  Then all agents reach the return statement in
the same round.  Moreover, either some agent becomes invalid during the call,
or all agents return the global OR of their input bits.  If
\(\widehat U\geq n\), then all agents return the global OR and no agent
becomes invalid.
\end{lemma}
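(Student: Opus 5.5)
The plan is to analyze the three \Flood{} calls in Algorithm~\ref{alg:bc-adaptive-flooding} separately, using only Lemma~\ref{lem:flood-correctness} and Corollary~\ref{cor:flood-upper-bound}.

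\textbf{Synchrony.} Since the call has a common start, all agents use the same \(\widehat U\), and by part~(3) of Lemma~\ref{lem:flood-correctness} each \Flood{} lasts exactly \(\widehat U-1\) rounds. Hence all agents begin each of the three floodings together. The check in line~\ref{line:bc-af-check} is local, so every agent reaches line~\ref{line:bc-af-return} exactly \(3(\widehat U-1)\) rounds after the start.

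\textbf{Dichotomy.} Let \(X=\bigvee_v x_v\), and let \(y_\ell\) be the leader's value after line~\ref{line:bc-af-value-flood}. Part~(1) of Lemma~\ref{lem:flood-correctness} gives two facts.
\begin{enumerate}
\item If \(X=0\), every agent has \(y=0\), so every \(y\) agrees with the OR.
\item For the certificate floods, the only possible initial \(1\) is at the leader. Its certificate pair is \((1,0)\) when \(y_\ell=0\) and \((0,1)\) when \(y_\ell=1\). Hence no agent can obtain \(c_{1-y_\ell}=1\).
\end{enumerate}
Now suppose no agent becomes invalid. Then every agent \(v\) satisfies \(c_{y_v}=1\), and by the second fact this forces \(y_v=y_\ell\). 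So all agents hold the common value \(y_\ell\).

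It remains to show that this common value equals \(X\). If \(X=0\), the first fact already gives \(y_\ell=0\). If \(X=1\), some agent \(u\) has \(x_u=1\), and the \Flood{} update is monotone, so \(y_u=1\). Since all agents agree, \(y_\ell=1=X\). Thus either some agent becomes invalid, or all agents return \(X\).

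\textbf{Case \(\widehat U\geq n\).} Corollary~\ref{cor:flood-upper-bound} applies to all three calls, since each has duration \(\widehat U-1\geq n-1\). The first call gives every agent \(y=X\). In the second and third calls the initial bits have OR equal to \([y_\ell=0]\) and \([y_\ell=1]\) respectively. Every agent therefore gets \(c_{y_\ell}=1\), and \(y_v=y_\ell\) for every \(v\). So the test in line~\ref{line:bc-af-check} fails everywhere, no agent calls \DetectError{}, and all agents return \(X\).

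The step needing the most care is the dichotomy. It relies on the fact that the certificate floods can produce a spurious \(1\) only if some agent other than the leader starts with \(1\). That is exactly why the unique-leader assumption and the two-certificate design are needed.
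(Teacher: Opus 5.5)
Your proposal is correct and follows essentially the same route as the paper's proof. Synchrony comes from the fixed flood durations, and since only the leader can seed a certificate, any agent that does not invalidate must share the leader's value. Part~(1) of the flooding lemma together with monotonicity identifies that common value with the global OR, and Corollary~\ref{cor:flood-upper-bound} handles the case \(\widehat U\geq n\).
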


\begin{proof}
Consider a common-start call and assume that no agent is invalid before it
starts.  Since all agents use the same estimate, the three floodings in
lines~\ref{line:bc-af-value-flood}--\ref{line:bc-af-cert-one} have the same
fixed durations at all agents.  Hence all agents reach the check in
line~\ref{line:bc-af-check} and the return statement in
line~\ref{line:bc-af-return} in the same round.

Let \(b\) be the leader's value after the first flooding.  In the certificate
flooding of line~\ref{line:bc-af-cert-zero}, the leader's initial bit is \(1\)
exactly if \(b=0\); in the certificate flooding of
line~\ref{line:bc-af-cert-one}, the leader's initial bit is \(1\) exactly if
\(b=1\).  No other agent can have initial bit \(1\) in either certificate
flooding.

Suppose that an agent reaches line~\ref{line:bc-af-return} without calling
\DetectError{} in line~\ref{line:bc-af-detect}.  If its local value is \(y=0\),
then the test in line~\ref{line:bc-af-check} implies that it received
\(c_0=1\).  By Lemma~\ref{lem:flood-correctness}, this is possible only if some
agent had initial bit \(1\) in the certificate-for-\(0\) flooding of
line~\ref{line:bc-af-cert-zero}; hence the leader had \(b=0\).  Similarly, if
its local value is \(y=1\), then receiving \(c_1=1\) implies, through
line~\ref{line:bc-af-cert-one}, that the leader had \(b=1\).  Therefore every
agent that returns without invalidating the trial has \(y=b\).  If no agent
becomes invalid during the call, all agents return the same value \(b\).

If some input bit is \(1\), then an agent holding such a bit has value \(1\)
after the value-flooding of line~\ref{line:bc-af-value-flood}, because flooding
never changes a \(1\) to \(0\).  Since all returned values are equal in the
non-invalid case, the common value must be \(1\).  If all input bits are \(0\),
flooding cannot create a \(1\), by Lemma~\ref{lem:flood-correctness}; hence the
common value is \(0\).  Thus, in the non-invalid case, the returned value is
exactly the global OR.

Finally suppose that \(\widehat U\geq n\).  Each internal flooding in
lines~\ref{line:bc-af-value-flood}--\ref{line:bc-af-cert-one} has duration
\(\widehat U-1\geq n-1\).  By Corollary~\ref{cor:flood-upper-bound}, the first
flooding returns the global OR at every agent, and the certificate for this
common value reaches every agent.  Therefore the condition in
line~\ref{line:bc-af-check} is false at every agent, and no agent calls
\DetectError{}.
\end{proof}

\subsection{Recovery Service}

The purpose of recovery is to turn a call to \DetectError{} at one agent into
a synchronized restart with a larger estimate.  The implementation uses three
logical recovery protocols, called invalidation, control, and reset.
These protocols are multiplexed by interleaving the rounds of their execution.

An execution of a recovery protocol means one round of that logical protocol.
Consecutive executions are consecutive inside that protocol's own subsequence;
after the protocols are interleaved, they need not be adjacent rounds of the
underlying network.

For an estimate \(u\), a \emph{control window} is a group
of \(u\) consecutive executions of the control protocol, measured from the
beginning of the current trial.  At the first execution of a control window, an
active leader starts a control signal, and active agents relay it until the
window ends.

A \emph{reset window} is defined analogously for the reset
protocol: it is a group of consecutive executions of the reset protocol, its
length is prescribed by a deterministic reset calendar known to all agents,
and reached agents update their estimate only at the common end of the
window.

The recovery service uses the invalidation, control, and reset protocols to achieve the synchronized restart in three phases.

\paragraph{Invalidation propagation.} The purpose of this phase is to propagate the information of an invalid trial to the entire network.
Once an agent becomes invalid, it repeatedly announces this fact.  Any agent
that hears such an announcement also calls \DetectError{} and becomes invalid.
Therefore, if a trial is invalidated anywhere, invalidation eventually reaches
the leader.

\paragraph{Silent control window.}
The leader does not reset immediately after becoming invalid.  Instead, it waits
through one complete control window of length equal to the current estimate.  In
a normal control window, an active leader would start the control signal.  An
invalid leader does not send such a signal.  Hence every agent that still trusts the
old trial misses the signal and calls \DetectError{} by the end of the window.
Thus, before the reset starts, no agent still trusts the old trial.

\paragraph{Calendar reset.}
After the silent control window has ended, the leader waits for a reset window
of length \(2\widehat U\) and floods a reset signal through that window.  Every
agent reached by the signal installs the estimate \(2\widehat U\), propagates the signal in the window, and when the window ends, clears its
invalid flag and generates a \Restart{}. If the window is too short to reach all agents, the
unreached agents remain invalid; their invalidation announcements eventually
force another reset.  If \(2\widehat U\geq n\), the reset reaches every agent.

The reset calendar is local and deterministic.  It is divided into stages.
Stage \(h\) is the finite sequence of reset windows whose lengths are
\(1,2,4,\ldots,2^h\).  Thus the calendar begins
\[
    1;\qquad 1,2;\qquad 1,2,4;\qquad 1,2,4,8;\qquad\ldots
\]
and every power of two appears infinitely often.  No messages are needed to
construct the calendar: all agents advance the same deterministic calendar
whenever the reset protocol is scheduled, so they agree on the length and the
end of the current reset window.

Figure~\ref{fig:bc-reset-calendar-interleaving} illustrates the interleaving of the channels and the concept of a reset window.

\begin{figure}[t]
\centering
\begin{tikzpicture}[
    slot/.style={draw,minimum width=6.5mm,minimum height=5mm,inner sep=0pt,font=\scriptsize},
    rslot/.style={slot,fill=black!10},
    tick/.style={draw,fill=black!10,minimum width=6.5mm,minimum height=3.5mm,inner sep=0pt,font=\tiny},
    win/.style={draw,fill=black!5,minimum height=4mm,inner sep=1pt,font=\tiny},
    lab/.style={font=\scriptsize,anchor=east},
    note/.style={font=\tiny}
]
\node[lab] at (-.45,0) {channels};
\foreach \i/\c/\sty in {
  0/S/slot,1/I/slot,2/C/slot,3/R/rslot,
  4/S/slot,5/I/slot,6/C/slot,7/R/rslot,
  8/S/slot,9/I/slot,10/C/slot,11/R/rslot,
  12/S/slot,13/I/slot,14/C/slot,15/R/rslot} {
  \pgfmathsetmacro{\x}{0.64*\i}
  \node[\sty] at (\x,0) {\c};
}

\node[lab] at (-.45,-.85) {reset rounds};
\foreach \x/\j in {1.92/1,4.48/2,7.04/3,9.60/4} {
  \node[tick] (rr\j) at (\x,-.85) {\j};
  \draw[dotted] (\x,-.28)--(\x,-.62);
}

\node[lab] at (-.45,-1.55) {calendar};
\node[win,minimum width=6.5mm] at (1.92,-1.55) {1};
\node[win,minimum width=6.5mm] at (4.48,-1.55) {1};
\node[win,minimum width=32mm] at (8.32,-1.55) {2};
\node[note] at (1.92,-1.95) {stage \(0\)};
\node[note] at (6.40,-1.95) {stage \(1\)};
\end{tikzpicture}
\caption{The four channels are interleaved in the fixed order: simulation (S),
 invalidation propagation (I), control (C), reset (R).  The reset calendar advances only
on the reset-channel subsequence.  Thus the window labeled \(2\) consists of two
reset-channel rounds, shown as reset rounds \(3\) and \(4\).}
\label{fig:bc-reset-calendar-interleaving}
\end{figure}

\subsection{Implementation by Four Interleaved Channels}

A \emph{channel} is a logical protocol executed on a prescribed subsequence of
rounds.  The implementation uses four channels: simulation, invalidation,
control, and reset.  The scheduler assigns rounds cyclically to these channels
in that order.  A \emph{channel round} is one execution of the corresponding
channel.  Since the network graph is connected in every round, the graph
sequence observed by each channel is connected in every channel round.

The simulation channel executes the simulated algorithm.  Each round of an
unfinished \AdaptiveFlooding{} call is executed as one simulation-channel round.
At the beginning of every simulation-channel round, the layer first checks
whether a restart flag has been set.  If so, it delivers \Restart{} to the
simulated algorithm and aborts any unfinished simulated operation.

\smallskip

The recovery service is implemented by Algorithms~\ref{alg:bc-channel-interleaving}, \ref{alg:bc-error-control-rounds} and \ref{alg:bc-reset-rounds}.

\paragraph{Recovery service initialization.}
Algorithm~\ref{alg:bc-channel-interleaving}
initializes the estimate, flags, control-position, and reset-calendar variables in
lines~\ref{line:bc-init-estimate}--\ref{line:bc-init-reset-countdown}.  It then
runs the four logical channels cyclically: the simulation channel is executed in
lines~\ref{line:bc-run-read-restart}--\ref{line:bc-run-simulation}, followed by
one invalidation, control, and reset channel round in
lines~\ref{line:bc-run-invalidation}--\ref{line:bc-run-reset}.

\paragraph{Invalidation and control.}

Algorithm~\ref{alg:bc-error-control-rounds} contains the invalidation and
control logic.  The operation \DetectError{} marks the current trial as invalid
in lines~\ref{line:bc-de-test-active}--\ref{line:bc-de-invalid}; if the leader
is the agent that becomes invalid, lines~\ref{line:bc-de-test-leader}--\ref{line:bc-de-reset-ready-false}
start the wait for a complete silent control window.  Invalidation is propagated
by lines~\ref{line:bc-inv-send}--\ref{line:bc-inv-detect} of Algorithm~\ref{alg:bc-error-control-rounds}.

The control window is also implemented directly in Algorithm~\ref{alg:bc-error-control-rounds}.
The local variable \(\mathit{controlPos}\), initialized in
line~\ref{line:bc-init-control-pos} of Algorithm~\ref{alg:bc-channel-interleaving}
and updated in lines~\ref{line:bc-control-pos-test}--\ref{line:bc-control-pos-inc}
of Algorithm~\ref{alg:bc-error-control-rounds}, records the current position in
the control window.  When the current estimate is \(\widehat U\), this position ranges
over \(1,\ldots,\widehat U\); position \(1\) is the first control-channel round of
the window and position \(\widehat U\) is its last round.  When an agent installs a
new estimate, line~\ref{line:bc-reset-control-pos-zero} of
Algorithm~\ref{alg:bc-reset-rounds} resets \(\mathit{controlPos}\) to \(0\), so the
next control-channel round starts a fresh control window for the new trial.  The
leader uses the flags \(\mathit{waitSilent}\) and \(\mathit{resetReady}\): the first
means that the leader has become invalid and is waiting for one complete silent
control window, while the second means that this silent window has ended and the
leader may start the next reset window of length \(2\widehat U\).

The control position is
advanced in lines~\ref{line:bc-control-pos-test}--\ref{line:bc-control-pos-inc};
the beginning of a control window is handled in
lines~\ref{line:bc-control-window-start}--\ref{line:bc-control-silent-set}; the
control signal is sent and relayed in lines~\ref{line:bc-control-send}--\ref{line:bc-control-relay};
and a missing signal or the end of the silent window is handled in
lines~\ref{line:bc-control-missing}--\ref{line:bc-control-silent-clear}.

\paragraph{Reset and restart.}
Algorithm~\ref{alg:bc-reset-rounds} contains the reset calendar and reset
flooding.  Lines~\ref{line:bc-cal-stage}--\ref{line:bc-cal-yield} of Algorithm~\ref{alg:bc-reset-rounds} define the
calendar.  Lines~\ref{line:bc-reset-new-window-test}--\ref{line:bc-reset-countdown-init}
open a new reset window when the previous one ends.  Once the silent control
window has ended, the leader starts a reset exactly at the beginning of the next
reset window of length \(2\widehat U\), by
lines~\ref{line:bc-reset-start-test}--\ref{line:bc-reset-clear-ready}.  The
reset signal is flooded in lines~\ref{line:bc-reset-send}--\ref{line:bc-reset-relay}.
Agents reached by the reset signal wait until the common end of that window,
then lines~\ref{line:bc-reset-end-test}--\ref{line:bc-reset-set-restart} install
the window value as the new estimate, clear the invalid flag, reset the control
position, and set the local flag that will be delivered as \Restart{} on the next
simulation-channel round.

\begin{algorithm}[H]
\SetKwFunction{AlgInvalidationRound}{InvalidationRound}
\caption{Initializing and scheduling the four-channel layer}
\label{alg:bc-channel-interleaving}

\Fn{\AlgInitializeCommunication{}}{
  $\widehat U\leftarrow1$\label{line:bc-init-estimate}\;
  $\mathit{invalid}\leftarrow\boolname{false}$\label{line:bc-init-invalid}\;
  $\mathit{restart}\leftarrow\boolname{false}$\label{line:bc-init-restart}\;
  $\mathit{reset}\leftarrow\boolname{false}$\label{line:bc-init-reset}\;
  $\mathit{control}\leftarrow\boolname{false}$\label{line:bc-init-control}\;
  $\mathit{controlPos}\leftarrow0$\label{line:bc-init-control-pos}\;
  \tcp{position inside the current control window; 0 means no window has started}
  $\mathit{waitSilent}\leftarrow\boolname{false}$\label{line:bc-init-wait-silent}\;
  \tcp{leader flag: waiting for one complete silent control window}
  $\mathit{silentWindow}\leftarrow\boolname{false}$\label{line:bc-init-silent-window}\;
  $\mathit{resetReady}\leftarrow\boolname{false}$\label{line:bc-init-reset-ready}\;
  \tcp{leader flag: the silent control window has ended}
  $U_{\mathit{reset}}\leftarrow\bot$\label{line:bc-init-reset-value}\;
  $T_{\mathit{reset}}\leftarrow0$\label{line:bc-init-reset-countdown}\;
}

\Fn{\AlgRunAdaptiveLayer{}}{
  \AlgInitializeCommunication{}\label{line:bc-run-init}\;

  \While{\boolname{true}}{
    $\mathit{restartEvent}\leftarrow\mathit{restart}$\label{line:bc-run-read-restart}\;
    $\mathit{restart}\leftarrow\boolname{false}$\label{line:bc-run-clear-restart}\;
    execute one simulation-channel round, delivering a \Restart{} event iff
    $\mathit{restartEvent}$\label{line:bc-run-simulation}\;

    \AlgInvalidationRound{}\label{line:bc-run-invalidation}\;

    \AlgControlRound{}\label{line:bc-run-control}\;

    \AlgResetRound{}\label{line:bc-run-reset}\;
  }
}
\end{algorithm}

\begin{algorithm}[H]
\SetKwFunction{AlgInvalidationRound}{InvalidationRound}
\caption{Invalidation propagation and control windows}
\label{alg:bc-error-control-rounds}

\Fn{\AlgDetectError{}}{
  \If{$\lnot\mathit{invalid}$\label{line:bc-de-test-active}}{
    $\mathit{invalid}\leftarrow\boolname{true}$\label{line:bc-de-invalid}\;

    \If{$\mathit{isLeader}$\label{line:bc-de-test-leader}}{
      $\mathit{waitSilent}\leftarrow\boolname{true}$\label{line:bc-de-wait-silent}\;
      $\mathit{silentWindow}\leftarrow\boolname{false}$\label{line:bc-de-silent-false}\;
      $\mathit{resetReady}\leftarrow\boolname{false}$\label{line:bc-de-reset-ready-false}\;
    }
  }
}

\Fn{\AlgInvalidationRound{}}{
  send $1$ iff $\mathit{invalid}$; otherwise send $0$ on the invalidation channel\label{line:bc-inv-send}\;

  \If{at least one neighbor sent $1$\label{line:bc-inv-receive}}{
    \AlgDetectError{}\label{line:bc-inv-detect}\;
  }
}

\Fn{\AlgControlRound{}}{
  \tcp{advance the position inside the current control window}
  \eIf{$\mathit{controlPos}=0$ or $\mathit{controlPos}=\widehat U$\label{line:bc-control-pos-test}}{
    $\mathit{controlPos}\leftarrow1$\label{line:bc-control-pos-reset}\;
  }{
    $\mathit{controlPos}\leftarrow\mathit{controlPos}+1$\label{line:bc-control-pos-inc}\;
  }

  \tcp{beginning of a control window}
  \If{$\mathit{controlPos}=1$\label{line:bc-control-window-start}}{
    $\mathit{control}\leftarrow\mathit{isLeader}\land\lnot\mathit{invalid}$\label{line:bc-control-start-signal}\;
    $\mathit{silentWindow}\leftarrow
      \mathit{isLeader}\land\mathit{invalid}\land\mathit{waitSilent}$\label{line:bc-control-silent-set}\;
  }

  send $1$ iff $\mathit{control}\land\lnot\mathit{invalid}$;
  otherwise send $0$ on the control channel\label{line:bc-control-send}\;

  \If{$\lnot\mathit{invalid}$ and at least one neighbor sent $1$\label{line:bc-control-receive}}{
    $\mathit{control}\leftarrow\boolname{true}$\label{line:bc-control-relay}\;
  }

  \tcp{end of a control window: active agents must have received the control signal}
  \If{$\mathit{controlPos}=\widehat U$ and
      $\lnot\mathit{invalid}$ and $\lnot\mathit{control}$\label{line:bc-control-missing}}{
    \AlgDetectError{}\label{line:bc-control-detect}\;
  }

  \If{$\mathit{controlPos}=\widehat U$ and $\mathit{silentWindow}$\label{line:bc-control-silent-end}}{
    $\mathit{resetReady}\leftarrow\boolname{true}$\label{line:bc-control-ready}\;
    $\mathit{waitSilent}\leftarrow\boolname{false}$\label{line:bc-control-wait-clear}\;
    $\mathit{silentWindow}\leftarrow\boolname{false}$\label{line:bc-control-silent-clear}\;
  }
}
\end{algorithm}

\begin{algorithm}[H]
\caption{Reset calendar and reset windows}
\label{alg:bc-reset-rounds}

\Fn{\AlgResetCalendar{}}{
  \For{$h\leftarrow0,1,2,\ldots$\label{line:bc-cal-stage}}{
    \For{$i\leftarrow0$ \KwTo $h$\label{line:bc-cal-index}}{
      yield $2^i$\label{line:bc-cal-yield}\;
    }
  }
}

\Fn{\AlgResetRound{}}{
  \If{$T_{\mathit{reset}}=0$\label{line:bc-reset-new-window-test}}{
    $U_{\mathit{reset}}\leftarrow\AlgResetCalendar{}$\label{line:bc-reset-calendar-next}\;
    $T_{\mathit{reset}}\leftarrow U_{\mathit{reset}}$\label{line:bc-reset-countdown-init}\;
  }

  \If{$T_{\mathit{reset}}=U_{\mathit{reset}}$ and
      $U_{\mathit{reset}}=2\widehat U$ and
      $\mathit{isLeader}$ and $\mathit{resetReady}$\label{line:bc-reset-start-test}}{
    $\mathit{reset}\leftarrow\boolname{true}$\label{line:bc-reset-start}\;
    $\mathit{resetReady}\leftarrow\boolname{false}$\label{line:bc-reset-clear-ready}\;
  }

  send $1$ iff $\mathit{reset}$; otherwise send $0$ on the reset channel\label{line:bc-reset-send}\;

  \If{at least one neighbor sent $1$\label{line:bc-reset-receive}}{
    $\mathit{reset}\leftarrow\boolname{true}$\label{line:bc-reset-relay}\;
  }

  \If{$T_{\mathit{reset}}=1$ and $\mathit{reset}$\label{line:bc-reset-end-test}}{
    $\mathit{invalid}\leftarrow\boolname{false}$\label{line:bc-reset-clear-invalid}\;
    $\mathit{reset}\leftarrow\boolname{false}$\label{line:bc-reset-clear-reset}\;
    $\mathit{control}\leftarrow\boolname{false}$\label{line:bc-reset-clear-control}\;
    $\mathit{controlPos}\leftarrow0$\label{line:bc-reset-control-pos-zero}\;
    $\mathit{waitSilent}\leftarrow\boolname{false}$\label{line:bc-reset-clear-wait}\;
    $\mathit{silentWindow}\leftarrow\boolname{false}$\label{line:bc-reset-clear-silent}\;
    $\mathit{resetReady}\leftarrow\boolname{false}$\label{line:bc-reset-clear-ready-final}\;
    $\widehat U\leftarrow U_{\mathit{reset}}$\label{line:bc-reset-update-estimate}\;
    \tcp{restart the simulation and start a new trial}
    $\mathit{restart}\leftarrow\boolname{true}$\label{line:bc-reset-set-restart}\;
  }

  $T_{\mathit{reset}}\leftarrow T_{\mathit{reset}}-1$\label{line:bc-reset-decrement}\;
}
\end{algorithm}

\subsection{Correctness and Cost of Recovery}

We first record the synchronization properties that follow from the deterministic
scheduler and reset calendar.

\begin{lemma}
\label{lem:bc-communication-synchronization}
All agents execute the same channel in the same rounds.  Moreover, all
agents advance the reset calendar synchronously and therefore agree on the
boundaries and lengths of all reset windows.
\end{lemma}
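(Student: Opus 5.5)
The plan is to prove both claims by induction on the round number, using the fact that the scheduler and the reset calendar depend only on a global round counter and never on the agent's state or its observations.

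For the first claim, the relevant code is \AlgRunAdaptiveLayer{} in Algorithm~\ref{alg:bc-channel-interleaving}. After \AlgInitializeCommunication{}, which consumes no rounds, the main loop executes a fixed cyclic pattern of communication rounds: one simulation-channel round, then one invalidation round, one control round, and one reset round. I will check that each of these four steps contains exactly one send statement. I will also check that this count does not depend on any branch:
\begin{itemize}
\item the restart test in lines~\ref{line:bc-run-read-restart}--\ref{line:bc-run-simulation} is local;
\item \DetectError{} is local;
\item the tests in \AlgControlRound{} and \AlgResetRound{} only choose which bit to send, and never whether a send happens;
\item a simulation-channel round always consumes exactly one round, even when the simulated algorithm is idle or has just been restarted; I adopt this as the reading of ``execute one simulation-channel round''.
\end{itemize}
Since all agents begin in round $0$, induction on the round number shows that round $4j+i$ belongs to the same channel $i\in\{0,1,2,3\}$ (simulation, invalidation, control, reset) at every agent.

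For the second claim, I will track the pair $(T_{\mathit{reset}},\text{calendar position})$. Every agent starts it at $(0,\text{beginning})$, and it is updated only inside \AlgResetRound{}. The updates are line~\ref{line:bc-reset-new-window-test}, which pulls the next value from the deterministic generator \AlgResetCalendar{}, and line~\ref{line:bc-reset-decrement}, which decrements unconditionally. The only other step that could touch these variables is the reset-completion block (lines~\ref{line:bc-reset-end-test}--\ref{line:bc-reset-set-restart}). I must verify that it modifies neither $T_{\mathit{reset}}$ nor the calendar generator: it changes $\widehat U$ and several flags, but only reads $U_{\mathit{reset}}$. With that verified, the pair evolves by a state-independent map applied once per reset-channel round.

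By the first claim, reset-channel rounds occur simultaneously at all agents. So after the $r$-th reset-channel round every agent holds the same pair. A new window opens exactly when $T_{\mathit{reset}}=0$ at the start of a reset-channel round, and its length is the value $2^i$ just yielded. Hence all agents agree on every window's starting reset round, its length $U_{\mathit{reset}}$, and its final round, which is the round with $T_{\mathit{reset}}=1$.

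The main obstacle is not mathematical but a matter of code auditing. I must confirm that no path, in particular a \Restart{} delivery or an estimate update, resets or skips the calendar, and that aborted simulated operations still consume their scheduled slot. If the pseudocode left this ambiguous, I would state explicitly that the layer advances these counters unconditionally.
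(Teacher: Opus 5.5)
Your proposal is correct and follows essentially the same route as the paper. The fixed cyclic scheduler gives every agent the same channel in every round, and an induction over reset-channel rounds shows that the calendar state ($U_{\mathit{reset}}$, $T_{\mathit{reset}}$, and the number of values already drawn from the deterministic generator) is updated by the same state-independent rule at every agent. Your explicit checks go slightly beyond the paper's proof, which takes them for granted: that the reset-completion block and restart delivery never modify $T_{\mathit{reset}}$ or the generator, and that each simulation slot consumes exactly one round. They do not change the argument.
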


\begin{proof}
The four-channel loop in Algorithm~\ref{alg:bc-channel-interleaving} is fixed:
one simulation-channel round is executed in line~\ref{line:bc-run-simulation},
then the invalidation, control, and reset channels are executed in
lines~\ref{line:bc-run-invalidation}, \ref{line:bc-run-control}, and
\ref{line:bc-run-reset}.  Since every agent runs this same loop, all agents
execute the same channel in the same rounds.

It remains to consider the reset calendar.  Initially all agents set
\(U_{\mathit{reset}}=\bot\) and \(T_{\mathit{reset}}=0\) in
lines~\ref{line:bc-init-reset-value}--\ref{line:bc-init-reset-countdown}, and no
agent has requested a calendar value.  Suppose these values, and the number of
previous calendar requests, are the same at all agents at the beginning of a
reset-channel round.  The test in line~\ref{line:bc-reset-new-window-test} has
the same truth value everywhere.  If it is true, all agents request the next
calendar value in line~\ref{line:bc-reset-calendar-next}; the calendar generated
by lines~\ref{line:bc-cal-stage}--\ref{line:bc-cal-yield} of Algorithm~\ref{alg:bc-reset-rounds} is deterministic, so
they receive the same value and set the same countdown in
line~\ref{line:bc-reset-countdown-init}.  If the test is false, no agent
requests a new value.  In both cases all agents decrement the countdown once
in line~\ref{line:bc-reset-decrement}.  Thus the calendar state remains
synchronized by induction, and the reset-window boundaries and lengths are
common to all agents.
\end{proof}

\begin{lemma}
\label{lem:bc-channel-overhead}
An execution of \(q\) rounds on any one of the four channels uses at most
\(4q\) rounds.
\end{lemma}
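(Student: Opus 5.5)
The plan is to read the bound off the fixed cyclic schedule of Algorithm~\ref{alg:bc-channel-interleaving}. The main loop of \AlgRunAdaptiveLayer{} executes, in each iteration, exactly one round of each channel: one simulation-channel round (line~\ref{line:bc-run-simulation}), then one invalidation, one control, and one reset round (lines~\ref{line:bc-run-invalidation}--\ref{line:bc-run-reset}). By Lemma~\ref{lem:bc-communication-synchronization}, all agents follow this same schedule. So the underlying rounds are partitioned into consecutive blocks of four, each containing exactly one round of every channel, and every channel occupies the same fixed position inside each block.

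Fix a channel and consider any $q$ consecutive executions of it. The $i$-th of these lies in some block $B_{j+i-1}$, where $B_j$ is the block containing the first one. The $q$-th execution therefore lies in block $B_{j+q-1}$. The number of underlying rounds from the first execution through the last is the position difference across $q$ blocks: at most $4(q-1)+1 \le 4q$. Counting from the start of block $B_j$ instead gives at most $4q$ rounds, since $q$ full blocks contain exactly $4q$ rounds.

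I do not expect a real obstacle here. The one point to check is that the local work in the loop does not consume extra rounds. Reading and clearing the restart flag, delivering \Restart{}, \DetectError{}, and the calendar bookkeeping are local computations, and by the pseudocode convention only ``send'' statements cost rounds. Each channel procedure contains exactly one send, so each iteration uses exactly four rounds, and the bound follows.
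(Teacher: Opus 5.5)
Your proposal is correct and is essentially the paper's proof. Both read the bound off the fixed cyclic schedule of Algorithm~\ref{alg:bc-channel-interleaving}, where each channel occurs exactly once in every four consecutive rounds, so the span from the first to the $q$-th execution is $4(q-1)+1\le 4q$ rounds; your remark that local bookkeeping consumes no rounds is a harmless clarification the paper leaves implicit.
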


\begin{proof}
The scheduler in Algorithm~\ref{alg:bc-channel-interleaving} executes the
channels cyclically: one simulation-channel round in line~\ref{line:bc-run-simulation},
one invalidation-channel round in line~\ref{line:bc-run-invalidation}, one
control-channel round in line~\ref{line:bc-run-control}, and one reset-channel
round in line~\ref{line:bc-run-reset}.  Hence each channel is selected exactly
once in every four consecutive rounds.  The interval from the first to the
\(q\)-th occurrence of a fixed channel has length \(4(q-1)+1\leq4q\).
\end{proof}

\begin{lemma}
\label{lem:bc-error-propagation}
If at least one agent is invalid, then every agent becomes invalid within
\(n-1\) invalidation-channel rounds, and hence within \(4(n-1)\) rounds.
\end{lemma}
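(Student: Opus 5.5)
The natural approach is to mirror the proof of Lemma~\ref{lem:flood-correctness}, applied to the invalidation channel. For \(r\geq 0\), let \(I_r\) be the set of invalid agents right after the \(r\)-th invalidation-channel round following the moment at which some agent is invalid; \(I_0\) is the set of invalid agents at that moment, so \(I_0\neq\emptyset\). Two facts are needed. First, the sets are monotone: \(I_{r-1}\subseteq I_r\). Second, \(|I_r|\geq|I_{r-1}|+1\) whenever \(I_{r-1}\neq V\). Together these give \(I_{n-1}=V\), since \(|I_0|\geq 1\) and each round adds at least one agent until all \(n\) agents are invalid. Lemma~\ref{lem:bc-channel-overhead} then converts \(n-1\) invalidation-channel rounds into at most \(4(n-1)\) rounds of the network.

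The expansion step follows the flooding argument. In invalidation-channel round \(r\), every agent in \(I_{r-1}\) sends \(1\) by line~\ref{line:bc-inv-send}. The graph of that round is connected, as noted for channels, so there is an edge joining some \(u\in I_{r-1}\) to some \(v\notin I_{r-1}\). Agent \(v\) hears a \(1\), and by lines~\ref{line:bc-inv-receive}--\ref{line:bc-inv-detect} it calls \DetectError{}, which sets \(\mathit{invalid}\) to true. Since \DetectError{} only ever sets the flag to true, every agent that sent \(1\) is still invalid at the end of its own call.

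The main obstacle is monotonicity. The invalid flag is cleared at line~\ref{line:bc-reset-clear-invalid}, on the reset channel, which is interleaved between consecutive invalidation rounds. Clearing is not the only issue: the control channel can also make new agents invalid, but that only helps. I would read the statement as concerning an interval during which no reset window ends with a clearing, and I would prove this is automatic under the hypothesis. A clearing happens only at agents holding \(\mathit{reset}=\boolname{true}\) at the end of a reset window. Such a signal originates from a leader with \(\mathit{resetReady}\) set, which requires the leader to have been invalid through a whole silent control window. So within the relevant span the leader has not yet completed its reset.

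If the paper's intended reading allows resets in between, the lemma would be stated relative to the current trial. In that case I would phrase the argument as: within any stretch of invalidation rounds in which no agent clears its flag, the invalid set grows by one per round. I would then note that a clearing reset either reaches everyone, so the claim is moot for the old trial, or leaves unreached agents invalid, from which the argument restarts. Concretely, I would state explicitly that \(\mathit{invalid}\) is reset to false only in line~\ref{line:bc-reset-clear-invalid}, and assume, as the surrounding discussion does, that this does not occur during the \(n-1\) invalidation rounds considered.
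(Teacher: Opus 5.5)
Your core argument is the paper's. Invalid agents send $1$ on the invalidation channel, and each channel round is connected, so some active agent hears a $1$ and calls \procname{DetectError}. The invalid set therefore grows by at least one per invalidation-channel round, and Lemma~\ref{lem:bc-channel-overhead} converts $n-1$ channel rounds into $4(n-1)$ rounds. The paper's proof silently assumes the invalid set is monotone, so you are right that line~\ref{line:bc-reset-clear-invalid} of Algorithm~\ref{alg:bc-reset-rounds} is the point needing care.

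Your justification of that point does not work. The absence of clearings during the $n-1$ invalidation-channel rounds is not automatic.
\begin{itemize}
\item The hypothesis may hold after the leader has already finished its silent control window, for instance while it waits for or is inside a reset window.
\item Even if you count from the first invalidation of a trial, when $\widehat U\ll n$ the leader can become invalid, pass a silent window of $\widehat U$ control rounds, and complete a reset window of length $2\widehat U$ well within $n-1$ invalidation-channel rounds. If that reset is partial, reached agents become active again while others stay invalid.
\end{itemize}
Your fallback of restarting the argument from the unreached agents then only gives $n-1$ rounds measured from the clearing, not from the original moment. Simply assuming no clearing adds a hypothesis the lemma does not have.

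The missing observation is one you half-state. A clearing requires the leader to have completed a silent control window. At the last round of that window, every still-active agent shares the leader's control position, has $\mathit{control}=\boolname{false}$, and has received no control $1$, so it calls \procname{DetectError} at line~\ref{line:bc-control-detect} of Algorithm~\ref{alg:bc-error-control-rounds}. Since only a clearing can make an agent active again, all agents stay simultaneously invalid from then until that clearing.

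The lemma then splits into two cases:
\begin{itemize}
\item If a clearing falls within the $n-1$ invalidation-channel rounds, all agents were already invalid at some earlier moment of the span: either at its start, or at the end of that silent window.
\item If no clearing falls within the span, the invalid set is monotone there and your flooding count applies unchanged.
\end{itemize}
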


\begin{proof}
In an invalidation-channel round, invalid agents send \(1\) in
line~\ref{line:bc-inv-send}.  Any active agent that receives such a \(1\) calls
\DetectError{} in line~\ref{line:bc-inv-detect} and becomes invalid in
line~\ref{line:bc-de-invalid}.  As long as some agent is active and some
agent is invalid, connectedness provides an edge crossing the cut between the
two sets in the next invalidation-channel round.  Therefore at least one active
agent becomes invalid in each such round.  Starting from one invalid agent,
after at most \(n-1\) invalidation-channel rounds all agents are invalid. The
bound in rounds follows from Lemma~\ref{lem:bc-channel-overhead}.
\end{proof}

\begin{lemma}
\label{lem:bc-errors-trigger-resets}
\label{lem:bc-error-reset-correctness}
\label{lem:bc-common-estimate}
Throughout the execution, all active agents have the same estimate and the
same control position.  Fix an interval between two consecutive reset completions,
or the initial interval before the first reset completion, and let \(u\) be the
common estimate of the active agents at the beginning of the interval.  If
some agent is invalid during this interval, then the leader eventually
initiates a reset flooding at the beginning of a reset window of length \(2u\).
At the beginning of that window all agents are invalid.  All agents
reached during the window set their estimate to \(2u\), reset their control
position to \(0\), and observe \Restart{} in the same round.  If \(2u\geq n\), the
reset reaches all agents.
\end{lemma}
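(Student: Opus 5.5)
The plan is an induction over the intervals between consecutive reset completions, carrying the invariant that all active agents share the same estimate and the same control position. At time \(0\) every agent has \(\widehat U=1\) and \(\mathit{controlPos}=0\), so the invariant holds trivially. Within an interval, the estimate changes only at line~\ref{line:bc-reset-update-estimate}, and the control position is advanced deterministically in lines~\ref{line:bc-control-pos-test}--\ref{line:bc-control-pos-inc} by the same rule at every agent, since it depends only on the common \(\widehat U\). Hence active agents stay synchronized in both variables until the next reset completion.

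At a reset completion, the agents that become active again are exactly those with \(\mathit{reset}=\boolname{true}\) when \(T_{\mathit{reset}}=1\). By Lemma~\ref{lem:bc-communication-synchronization}, all such agents hit this test in the same reset-channel round. They all install \(\widehat U=U_{\mathit{reset}}\), the same value for everyone, and all set \(\mathit{controlPos}=0\). Unreached agents remain invalid, so they are not constrained by the invariant. This gives the first claim, together with the ``same round'' and ``estimate \(2u\), position \(0\)'' parts, once we check that the only reset flooding started in the window is the leader's. The \(\mathit{reset}\) flag can only originate at line~\ref{line:bc-reset-start}, which requires \(U_{\mathit{reset}}=2\widehat U\) and \(\mathit{resetReady}\). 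It is cleared at the end of each window, so no stale signal carries over.

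For the liveness part, suppose some agent is invalid during the interval. By Lemma~\ref{lem:bc-error-propagation}, the leader becomes invalid within \(n-1\) invalidation rounds; at that moment \DetectError{} sets \(\mathit{waitSilent}\). At the next start of a control window, line~\ref{line:bc-control-silent-set} sets \(\mathit{silentWindow}\). Because the invalid leader does not start a control signal, no control \(1\) is emitted by anyone in this window. The key point is that relaying requires having received a \(1\), and \(\mathit{control}\) was cleared at the last window start except for an active leader. Therefore every active agent ends the window with \(\mathit{control}=\boolname{false}\) and calls \DetectError{} at line~\ref{line:bc-control-detect}, so all agents are invalid when the window ends. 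At that same position \(u\), the leader sets \(\mathit{resetReady}\). The reset calendar contains windows of length \(2u\) infinitely often, so the leader starts the reset at the beginning of the next one. Since \(\mathit{resetReady}\) is set only after the silent window, and invalid agents never clear \(\mathit{invalid}\) except at a reset end, all agents are invalid at the start of that reset window. The leader's \(\widehat U\) is still \(u\), since it has not been reset.

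Finally, if \(2u\ge n\), I will argue as in Lemma~\ref{lem:flood-correctness}. The reset flag is monotone within the window and advances by at least one agent per reset-channel round by connectivity, so it reaches everyone within \(n-1\le 2u\) rounds of the window.

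I expect the main obstacle to be the subtle stale-state case. A relayed \(\mathit{control}\) flag from a previous window must not survive into the silent window, since it would let an active agent wrongly pass the end-of-window check. Handling this requires verifying that line~\ref{line:bc-control-start-signal} resets \(\mathit{control}\) to false at every non-leader and at the invalid leader at each window start. It also requires checking that the silent window is a complete window, because \(\mathit{silentWindow}\) is only set at position \(1\).
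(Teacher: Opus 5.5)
Your proposal is correct and follows essentially the same route as the paper's proof. Both argue by induction over reset completions with the shared estimate/control-position invariant, use a complete silent control window (in which the invalid leader starts no signal and \(\mathit{control}\) is re-initialized at position \(1\)) to force every still-active agent to call \procname{DetectError}, rely on the calendar to supply a window of value \(2u\), and finish with the flooding argument when \(2u\geq n\); your extra checks on stale \(\mathit{reset}\)/\(\mathit{control}\) flags and on the completeness of the silent window spell out details the paper leaves implicit.
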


\begin{proof}

We argue by induction over reset completions.  Initially all agents are
active, have estimate \(1\), and have control position \(0\), by
lines~\ref{line:bc-init-estimate}, \ref{line:bc-init-invalid}, and
\ref{line:bc-init-control-pos} of Algorithm~\ref{alg:bc-channel-interleaving}.  Thus the invariant holds before the first
reset completion.  Assume that, at the beginning of the current interval, all
active agents have the same estimate \(u\) and the same control position.  While
no reset completes, estimates do not change.  In every control-channel round,
all active agents update their position by the same deterministic rule in
lines~\ref{line:bc-control-pos-test}--\ref{line:bc-control-pos-inc} of Algorithm~\ref{alg:bc-error-control-rounds}, using
the same estimate \(u\).  If an active agent becomes invalid, it leaves the
active set by line~\ref{line:bc-de-invalid} of Algorithm~\ref{alg:bc-error-control-rounds}.  Hence all active agents
continue to have the same estimate and the same control position throughout the
interval.

If some agent is invalid, Lemma~\ref{lem:bc-error-propagation} implies that
invalidation reaches every agent within at most \(n-1\) invalidation-channel
rounds; in particular, the leader eventually becomes invalid.

When the leader first becomes invalid in this interval, it has estimate \(u\). It
sets \(\mathit{waitSilent}\) to true and clears the two waiting flags in
lines~\ref{line:bc-de-wait-silent}--\ref{line:bc-de-reset-ready-false} of Algorithm~\ref{alg:bc-error-control-rounds}.  Consider
the first control window whose first round occurs after this event.  At the first
round of this window, all active agents enter position \(1\) together by
lines~\ref{line:bc-control-pos-test}--\ref{line:bc-control-pos-reset} of Algorithm~\ref{alg:bc-error-control-rounds}.  The
leader is invalid and \(\mathit{waitSilent}=\boolname{true}\), so the leader
sets \(\mathit{silentWindow}\) to true in line~\ref{line:bc-control-silent-set} of Algorithm~\ref{alg:bc-error-control-rounds}
and does not start a control signal in line~\ref{line:bc-control-start-signal} of Algorithm~\ref{alg:bc-error-control-rounds}.
Every active non-leader also sets its control flag to false in
line~\ref{line:bc-control-start-signal} of Algorithm~\ref{alg:bc-error-control-rounds}.  During the window, invalid agents
send \(0\) on the control channel by line~\ref{line:bc-control-send} of Algorithm~\ref{alg:bc-error-control-rounds}, and no
active agent can create a control signal unless it first receives one and
executes line~\ref{line:bc-control-relay} of Algorithm~\ref{alg:bc-error-control-rounds}.  Since the invalid leader starts no
signal, no active agent receives one in the window.  Therefore, at the last
round of the window, every agent that is still active satisfies the test in
line~\ref{line:bc-control-missing} of Algorithm~\ref{alg:bc-error-control-rounds} and calls \DetectError{} in
line~\ref{line:bc-control-detect} of Algorithm~\ref{alg:bc-error-control-rounds}.  Thus all agents are invalid by the end of
the silent control window.  At that same last round, the leader sets
\(\mathit{resetReady}\) to true in line~\ref{line:bc-control-ready} of Algorithm~\ref{alg:bc-error-control-rounds} and clears
the waiting flags in lines~\ref{line:bc-control-wait-clear}--\ref{line:bc-control-silent-clear} of Algorithm~\ref{alg:bc-error-control-rounds}.

After the silent control window, the leader waits until the first reset-calendar
window whose value is \(2u\).  Such a window exists because the calendar of
lines~\ref{line:bc-cal-stage}--\ref{line:bc-cal-yield} of Algorithm~\ref{alg:bc-reset-rounds} contains every power of
two infinitely often.  At the beginning of that window, the condition in
line~\ref{line:bc-reset-start-test} of Algorithm~\ref{alg:bc-reset-rounds} holds at the leader: the window has just
started, its value is \(2u\), and \(\mathit{resetReady}\) is true.  Hence the
leader sets \(\mathit{reset}\) to true in line~\ref{line:bc-reset-start} of Algorithm~\ref{alg:bc-reset-rounds} and
starts a reset flooding.  By Lemma~\ref{lem:bc-communication-synchronization},
all agents agree on the window boundaries and on the value
\(U_{\mathit{reset}}=2u\).  During the window, the reset signal is flooded by
lines~\ref{line:bc-reset-send}--\ref{line:bc-reset-relay}.  Every reached
agent has \(\mathit{reset}=\boolname{true}\) at the last reset-channel round
of the window and therefore satisfies the test in line~\ref{line:bc-reset-end-test}.
It then clears its invalid flag, resets its control position, installs the estimate
\(2u\), and sets its local restart flag in
lines~\ref{line:bc-reset-clear-invalid}--\ref{line:bc-reset-set-restart}.  Since
all reached agents agree on the end of the reset window, they perform these
updates in the same round and install the same new estimate and control position.

The restart flag is read and cleared in
lines~\ref{line:bc-run-read-restart}--\ref{line:bc-run-clear-restart}, and the
corresponding \Restart{} event is delivered in line~\ref{line:bc-run-simulation}.
By Lemma~\ref{lem:bc-communication-synchronization}, this is the same round for
all reached agents.  Hence they observe \Restart{} synchronously.

If \(2u\geq n\), then the reset window has length at least \(n\).  Starting from
the leader, the reset signal is flooded for at least \(n\) reset-channel rounds
by lines~\ref{line:bc-reset-send}--\ref{line:bc-reset-relay}.  By the flooding
argument of Lemma~\ref{lem:flood-correctness}, it reaches every agent by the
end of the window.

The only agents that become active after the reset are the reached agents,
which clear \(\mathit{invalid}\) in line~\ref{line:bc-reset-clear-invalid}; all
of them install the same estimate in line~\ref{line:bc-reset-update-estimate} and
reset their control position to \(0\) in line~\ref{line:bc-reset-control-pos-zero}.
Agents not reached, if any, remain invalid.  Thus the induction invariant is
preserved for the next interval.
\end{proof}

\begin{lemma}
\label{lem:bc-adaptive-flooding-cost}
One call to \AdaptiveFlooding{} with estimate \(\widehat U\) uses
\(O(\widehat U)\) rounds.
\end{lemma}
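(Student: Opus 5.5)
The plan is to count rounds in two layers: first the simulation-channel rounds consumed by the call, then the conversion to real network rounds via Lemma~\ref{lem:bc-channel-overhead}.

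\textbf{Simulation-channel rounds.} By Algorithm~\ref{alg:bc-adaptive-flooding}, a call to \AdaptiveFlooding{} consists of three sequential invocations of \Flood, at lines~\ref{line:bc-af-value-flood}, \ref{line:bc-af-cert-zero} and \ref{line:bc-af-cert-one}. Each has duration \(\widehat U-1\), so by part~(3) of Lemma~\ref{lem:flood-correctness} each uses exactly \(\widehat U-1\) communication rounds. The check at line~\ref{line:bc-af-check}, the possible call to \DetectError{} (which only sets local flags), and the return are all local and consume no rounds. Since every round of an unfinished \AdaptiveFlooding{} call is executed as one simulation-channel round, the call occupies exactly \(3(\widehat U-1)\) simulation-channel rounds, or fewer if it is aborted by a \Restart{}.

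\textbf{Conversion to network rounds.} Lemma~\ref{lem:bc-channel-overhead} shows that \(q=3(\widehat U-1)\) simulation-channel rounds fit in at most \(4q=12(\widehat U-1)\) network rounds. Adding one more round to cover the partial scheduling cycle in which the call starts, we get \(O(\widehat U)\) rounds in total, since \(\widehat U\ge1\).

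The only subtle point is whether the estimate can change in the middle of the call. The estimate is modified only at the end of a reset window (line~\ref{line:bc-reset-update-estimate}). That same step sets the restart flag, and the flag is delivered at the next simulation-channel round, which aborts the unfinished call. So for the duration of any call that is not aborted, all three floodings use the same \(\widehat U\). An aborted call uses even fewer rounds, so the bound still holds.
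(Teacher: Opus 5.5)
Your proposal is correct and follows the paper's proof: three \textsc{Flood} calls of duration $\widehat U-1$, each round executed as one simulation-channel round, giving $3(\widehat U-1)$ channel rounds, which Lemma~\ref{lem:bc-channel-overhead} converts to $O(\widehat U)$ network rounds. Your extra remarks are harmless but not needed. The lemma already absorbs the partial scheduling cycle, and a \textsc{Restart} can only shorten a call.
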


\begin{proof}
The routine performs three calls to \Flood{} in
lines~\ref{line:bc-af-value-flood}--\ref{line:bc-af-cert-one}, each of duration
\(\widehat U-1\).  Each round of these internal floodings is executed in one
simulation-channel round by line~\ref{line:bc-run-simulation}.  Thus the call
uses \(3(\widehat U-1)=O(\widehat U)\) simulation-channel rounds, and the bound
in rounds follows from Lemma~\ref{lem:bc-channel-overhead}.
\end{proof}

We separate two kinds of cost in the analysis.  The
\emph{calendar cost} is the passive delay after the leader is ready to reset but
before the deterministic reset calendar reaches a window of the required
length.  The \emph{non-calendar cost} of an interval with fixed leader estimate
is all other work charged to that interval: simulation work until some agent
aborts, invalidation propagation, the silent control window, and the actual flooding
inside the reset window once that window has begun.  Thus reset flooding is
non-calendar cost, while waiting for the appropriate reset-window boundary is
calendar cost.

\begin{lemma}[Non-calendar recovery cost]
\label{lem:bc-recovery-cost}
After the first invalidation in an interval with common active estimate \(u\),
the recovery cost excluding the waiting time for the appropriate reset-calendar
window is \(O(n+u)\) rounds.
\end{lemma}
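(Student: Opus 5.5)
We decompose the recovery after the first invalidation into consecutive phases and bound each one in channel rounds. Lemma~\ref{lem:bc-channel-overhead} then converts channel rounds into network rounds at a cost of a factor $4$. Throughout, we work inside one interval with common active estimate $u$, as in Lemma~\ref{lem:bc-errors-trigger-resets}.

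\emph{Phase 1 (invalidation propagation).} Starting at the first invalidation, Lemma~\ref{lem:bc-error-propagation} gives that every agent, and in particular the leader, becomes invalid within $n-1$ invalidation-channel rounds, that is, $O(n)$ rounds. The simulation work that runs concurrently is not charged separately, because the channels are interleaved.

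\emph{Phase 2 (silent control window).} When the leader becomes invalid, it sets $\mathit{waitSilent}$. By Lemma~\ref{lem:bc-errors-trigger-resets}, all active agents share the control position, which cycles with period $u$. The first control window that begins after this event therefore starts within at most $u$ control-channel rounds, and the silent window itself lasts $u$ more. Hence $\mathit{resetReady}$ is set within $2u$ control-channel rounds, which is $O(u)$ rounds.

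\emph{Phase 3 (reset flooding).} We exclude the wait until the calendar reaches a window of value $2u$, since by definition that is calendar cost. The reset window then lasts $2u$ reset-channel rounds, and all reached agents install the estimate and set $\mathit{restart}$ at its end. The restart is delivered at the next simulation-channel round, which adds $O(1)$.

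Summing the phases gives $O(n)+O(u)+O(u)+O(1)=O(n+u)$ channel rounds. Lemma~\ref{lem:bc-channel-overhead} turns this into $O(n+u)$ network rounds.

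The main subtlety is Phase 2. We must argue that the leader's control position stays aligned with that of the active agents even after the leader is invalid, so that ``the next window'' is well defined and begins within $u$ rounds. The leader still executes \AlgControlRound every control round with the same estimate $u$; its position rule does not depend on $\mathit{invalid}$, and it does not change until a reset completes. It therefore keeps the common position, and the wait for the next window boundary is bounded by $u$.

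We should also note that no reset completes during this recovery before the leader's own reset. A reset starts only when the leader sets $\mathit{reset}$, and that requires $\mathit{resetReady}$. So the estimate stays equal to $u$ throughout, and every window length above is indeed $u$ or $2u$.
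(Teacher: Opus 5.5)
Your proof is correct and follows the paper's own argument: $O(n)$ for invalidation to reach the leader (Lemma~\ref{lem:bc-error-propagation}), at most one partial control window plus the full silent window of length $u$, and $O(u)$ for the reset flooding, all converted to network rounds via Lemma~\ref{lem:bc-channel-overhead}. Your additional remarks are details the paper leaves implicit: the leader's control position keeps cycling with period $u$ after it becomes invalid, no reset can complete before the leader's own reset, and the restart is delivered within $O(1)$ rounds.
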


\begin{proof}
By Lemma~\ref{lem:bc-error-propagation}, invalidation reaches all agents, and
in particular the leader, within \(O(n)\) rounds, using
lines~\ref{line:bc-inv-send}--\ref{line:bc-inv-detect} of Algorithm~\ref{alg:bc-error-control-rounds}.  When the leader becomes
invalid, lines~\ref{line:bc-de-wait-silent}--\ref{line:bc-de-reset-ready-false} of Algorithm~\ref{alg:bc-error-control-rounds}
start the wait for a complete silent control window.  The position update in
lines~\ref{line:bc-control-pos-test}--\ref{line:bc-control-pos-inc} of Algorithm~\ref{alg:bc-error-control-rounds} implies
that at most one partial control window remains before the first silent control
window starts, and that silent window has length \(u\).  At its end, the leader
sets \(\mathit{resetReady}\) in line~\ref{line:bc-control-ready} of Algorithm~\ref{alg:bc-error-control-rounds}.  Hence reaching
the end of the silent control window costs \(O(u)\) channel rounds, and therefore
\(O(u)\) rounds by Lemma~\ref{lem:bc-channel-overhead}.  When the reset window of
length \(2u\) begins, the leader starts the reset in
lines~\ref{line:bc-reset-start-test}--\ref{line:bc-reset-start} of Algorithm~\ref{alg:bc-reset-rounds}, and the reset
flooding of lines~\ref{line:bc-reset-send}--\ref{line:bc-reset-relay} of Algorithm~\ref{alg:bc-reset-rounds} costs
\(O(u)\) more rounds.  These are the only recovery costs not counted as calendar
waiting.
\end{proof}

\begin{lemma}[Reset-calendar waiting time]
\label{lem:bc-reset-calendar-complexity}
Suppose that \(t\) reset-channel rounds have elapsed when the leader starts
waiting for a reset window of length \(u=2^k\).  The desired window begins after
at most \(t+O(u+\log(t+2))\) additional reset-channel rounds.  Consequently, if \(T\) rounds have
elapsed at that time, then the waiting costs at most \(T+O(u+\log(T+2))\)
additional rounds.
\end{lemma}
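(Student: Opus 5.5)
We analyze the structure of the reset calendar of Algorithm~\ref{alg:bc-reset-rounds} directly. Stage \(h\) consists of windows of lengths \(1,2,\ldots,2^h\), so its total length is \(2^{h+1}-1\). The cumulative length of stages \(0,\ldots,h\) is therefore \(\sum_{j=0}^{h}(2^{j+1}-1)=2^{h+2}-h-3\). By Lemma~\ref{lem:bc-communication-synchronization} all agents agree on this calendar, so it suffices to bound, in reset-channel rounds, the gap between time \(t\) and the start of the next window of length \(u=2^k\).

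First, locate time \(t\) inside some stage \(h_0\). From the cumulative formula, the number of reset-channel rounds completed before stage \(h_0\) starts is \(2^{h_0+1}-h_0-2\le t\). Hence \(2^{h_0}=O(t+1)\) and \(h_0=O(\log(t+2))\). Next, consider the next stage \(h_1\) that starts after time \(t\) and has \(h_1\ge k\). Every stage with index at least \(k\) contains a window of length \(2^k\). So the desired window begins no later than inside stage \(h_1=\max(h_0+1,k)\).

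We bound the waiting time as follows. The first piece is the remainder of stage \(h_0\), at most \(2^{h_0+1}-1\). The second piece comprises the prefix of stage \(h_1\) before its window of length \(2^k\), at most \(1+2+\cdots+2^{k-1}<2^k=u\). If \(h_1=k>h_0+1\), we must also add the full stages \(h_0+1,\ldots,k-1\), whose total is \(O(2^k)=O(u)\). Otherwise \(h_1=h_0+1\) and no full stages intervene.

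The main point is to compare the remainder of stage \(h_0\) with \(t\). Since stage \(h_0\) starts after \(2^{h_0+1}-h_0-2\) rounds, its length satisfies \(2^{h_0+1}-1\le t+h_0+1\). The remainder is therefore at most \(t+O(\log(t+2))\). Summing the pieces gives \(t+O(u+\log(t+2))\) reset-channel rounds, which is the first claim. The boundary case \(t=0\) is covered by the additive constants.

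For the round bound, note that \(t\le T\), since reset-channel rounds are a subsequence of all rounds. Lemma~\ref{lem:bc-channel-overhead} then converts \(q\) reset-channel rounds into at most \(4q\) rounds. Read literally, this yields \(4T+O(u+\log(T+2))\) rather than the stated form. The statement's constant on \(T\) is absorbed if one interprets \(t\approx T/4\), because exactly one round in four is a reset-channel round. Then \(4(t+O(u+\log(t+2)))\le T+O(u+\log(T+2))\) up to an additive \(O(1)\). The step I expect to need most care is this last constant bookkeeping: using \(t\le \lceil T/4\rceil\), and checking the off-by-one in the cumulative stage lengths, so that the coefficient of \(T\) is exactly \(1\).
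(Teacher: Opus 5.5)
Your proof is correct and follows essentially the same argument as the paper. It uses the same stage decomposition, the same bound $2^{h_0+1}-1\le t+h_0+1$ on the remainder of the current stage, the same case split on whether $k\le h_0+1$, and the same conversion to ordinary rounds via $t\le\lfloor T/4\rfloor+1$.

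That last inequality is not an interpretation but a direct consequence of the fixed four-channel scheduler, which gives $4t\le T+O(1)$. Your initial detour through $t\le T$ and the resulting $4T$ can simply be dropped.
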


\begin{proof}
The calendar is generated by \AlgResetCalendar in Algorithm \ref{alg:bc-reset-rounds}:
stage \(j\) consists of the reset windows of lengths \(1,2,4,\ldots,2^j\), and
therefore has total length \(2^{j+1}-1\).  Let the leader start waiting during
stage \(h\).  The total length of the stages preceding stage \(h\) is
\(2^{h+1}-h-2\), and hence \(h=O(\log(t+2))\).  Finishing the current stage costs at
most \(t+O(\log(t+2))\) additional reset-channel rounds.

After the current stage ends, a window of length \(u=2^k\) appears no later than
stage \(\max\{h+1,k\}\), again by the definition of \AlgResetCalendar.
If \(k\leq h+1\), then after finishing the current stage, the desired window
appears during stage \(h+1\), and the part of that stage before the window has
length less than \(u\).  If \(k>h+1\), then the total length of stages
\(h+1,\ldots,k\) before the desired window is \(O(2^k)=O(u)\).  Thus the desired
window begins after at most \(t+O(u+\log(t+2))\) additional reset-channel rounds.

After \(T\) ordinary rounds, at most \(\lfloor T/4\rfloor+1\) reset-channel
rounds have elapsed, and each additional reset-channel round costs at most four
ordinary rounds by the scheduler lines~\ref{line:bc-run-simulation}--\ref{line:bc-run-reset}.  The
waiting time in ordinary rounds is therefore at most \(T+O(u+\log(T+2))\).
\end{proof}

\begin{lemma}[Calendar amortization]
\label{lem:bc-calendar-amortization}
Let \(u_i=2^i\), and suppose that the leader initiates exactly \(r\) resets.
Let \(T_i\) be the elapsed time at the beginning of the interval with leader
estimate \(u_i\). If \(H_i\) is an upper bound on the non-calendar cost charged
to that interval, including simulation work before invalidation propagation,
the silent control window, and the reset flooding, and if the
sequence \(H_i\) is nondecreasing and satisfies \(H_i=\Omega(u_i)\), then the
total time until the beginning of the interval with leader estimate \(u_r\) is
\(O\left(\sum_{i=0}^{r-1} H_i2^{r-1-i}\right)\).
\end{lemma}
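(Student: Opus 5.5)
We set up a recurrence for the times \(T_i\) and unroll it. Interval \(i\) (leader estimate \(u_i=2^i\)) begins at time \(T_i\). Its end, the beginning of interval \(i+1\), is reached after three phases:
\begin{enumerate}
\item the non-calendar work charged to interval \(i\) up to the moment the leader sets \(\mathit{resetReady}\), which is at most \(H_i\);
\item the calendar waiting for a reset window of length \(2u_i=u_{i+1}\);
\item the reset flooding itself, which is also included in \(H_i\).
\end{enumerate}

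Lemma~\ref{lem:bc-reset-calendar-complexity} bounds the waiting when the leader becomes ready at time \(T_i+H_i\). It gives at most \((T_i+H_i)+O\bigl(u_{i+1}+\log(T_i+H_i+2)\bigr)\) additional rounds. Hence
\[
T_{i+1}\le T_i+H_i+(T_i+H_i)+O\bigl(u_{i+1}+\log(T_i+H_i+2)\bigr)+H_i
= 2T_i+3H_i+O\bigl(u_{i+1}+\log(T_i+H_i+2)\bigr).
\]

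Next we absorb the additive error terms into \(H_i\). Since \(H_i=\Omega(u_i)\), we have \(u_{i+1}=2u_i=O(H_i)\). For the logarithmic term we argue inductively. Suppose \(T_i\le C\sum_{j<i}H_j2^{i-1-j}\) for a constant \(C\) fixed later. Using \(H_j\le H_i\) (monotonicity) gives \(T_i\le C\,2^iH_i\). Then \(\log(T_i+H_i+2)=O(i+\log H_i)\). Also \(i=\log u_i=O(\log H_i)\), and \(\log H_i=O(H_i)\). Therefore \(\log(T_i+H_i+2)=O(H_i)\), with a constant independent of \(C\) once \(H_i\) is at least a fixed constant; small \(H_i\) are handled by enlarging the \(O(1)\). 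We conclude \(T_{i+1}\le 2T_i+C'H_i\) for an absolute constant \(C'\).

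Finally we unroll. With \(T_0=0\) (the first interval starts at time \(0\)), induction gives
\[
T_r\le C'\sum_{i=0}^{r-1}H_i2^{r-1-i}.
\]
Taking \(C=C'\) closes the induction hypothesis, because the hypothesis was used only to bound a logarithm and \(C\) enters only inside \(\log\). This yields \(T_r=O\bigl(\sum_{i=0}^{r-1}H_i2^{r-1-i}\bigr)\), the total time until the interval with estimate \(u_r\) begins.

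The main obstacle is the \(\log T\) term in the calendar lemma, which must not introduce a dependence on \(C\) that breaks the induction. The plan handles it through the monotonicity of \(H_i\) and the bound \(i=O(\log H_i)\), which together show that this term is \(O(H_i)\) uniformly.
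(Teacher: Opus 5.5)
Your proof is correct and follows essentially the same route as the paper. You derive the recurrence \(T_{i+1}\le 2T_i+O\bigl(H_i+\log(T_i+H_i+2)\bigr)\) from Lemma~\ref{lem:bc-reset-calendar-complexity}, absorb the logarithm using the monotonicity of \(H_i\) and \(i=O(\log H_i)\), and unroll from \(T_0=0\). The only difference is how you control \(\log T_i\): you use a bootstrapped induction on the target bound, whereas the paper first proves the coarse bound \(T_i\le Kc^iH_i\) and then sharpens it.

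One small precision is needed. Your \(C'\) is not literally independent of \(C\): the logarithm contributes an additive \(\log C\), so \(C'=O(1+\log C)\). You should therefore choose \(C\) large enough that \(C\ge C'(C)\), rather than setting \(C=C'\). Your remark that \(C\) enters only inside the logarithm is exactly what guarantees such a \(C\) exists.
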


\begin{proof}
Let \(A_i\) be the non-calendar cost incurred before the leader starts
waiting for the appropriate reset window, and let \(B_i\) be the
non-calendar cost incurred after that waiting period, including the reset
flooding. By the definition of \(H_i\), we may assume \(A_i+B_i\leq H_i\).
Let \(S_i=T_i+A_i\) be the elapsed time when the leader starts waiting.

By Lemma~\ref{lem:bc-reset-calendar-complexity}, the waiting period lasts at
most \(S_i+O(u_i+\log(S_i+2))\) rounds.
So the total time until the beginning of the next interval is
\(T_i + A_i + S_i + O(u_i+\log(S_i+2)) + B_i\), which, since \(S_i=T_i+A_i\),
equals \(2T_i + 2A_i + B_i + O(u_i+\log(S_i+2))\).
Using \(A_i+B_i\leq H_i\), \(A_i \leq H_i\) and \(H_i=\Omega(u_i)\), this is
\(2T_i + O(H_i+\log(T_i+H_i+2)) = 2T_i + O(H_i+\log(T_i+2))\).
Then \(T_{i+1} = 2T_i + O(H_i+\log(T_i+2))\).

We first derive a coarse bound. Since \(\log(T_i+2)\leq T_i+2\) and
\(H_i=\Omega(u_i)=\Omega(1)\), we get \(T_{i+1}\leq O(T_i + H_i)\). Since
the sequence \(H_i\) is nondecreasing and \(T_0=0\), induction gives
\(T_i\leq Kc^iH_i\) for some sufficiently large constant \(K\) and \(c\) independent
of \(i\).

We now bound the logarithmic term more sharply. From the coarse bound,
\(\log(T_i+2)\leq \log(Kc^iH_i+2)=O(i+\log(H_i+2))\). Since
\(H_i=\Omega(u_i)=\Omega(2^i)\), we have \(i=O(\log(H_i+2))\), and therefore
\(\log(T_i+2)=O(\log(H_i+2))=O(H_i)\). Substituting this into the previous
recurrence gives \(T_{i+1}\leq 2T_i+c'H_i\) for a constant \(c'\) independent
of \(i\).

Unfolding this recurrence over \(r\) resets gives
\[
    T_r
    =
    O\left(\sum_{i=0}^{r-1} H_i2^{r-1-i}\right),
\]
which proves the claimed bound.
\end{proof}

\section{Stabilizing \ProblemName{Input Multiset} and \ProblemName{Counting}}
\label{sec:stabilizing-input-multiset-counting}

We now combine the recovery layer with the known-bound \ProblemName{Input Multiset} algorithm. The procedure is given in Algorithm~\ref{alg:bc-stabilizing-input-multiset}.
Each trial simulates \InputMultiset{}.  Every flooding call is replaced by
\AdaptiveFlooding{}.  When a trial returns a candidate multiset, the agent
checks whether the total multiplicity represented by the candidate is larger
than \(\widehat U\).  If so, the estimate is certainly too small, and the agent
invalidates the trial by calling \DetectError{}.  Otherwise the candidate becomes
the current stabilizing output.

\begin{algorithm}[H]
\caption{Stabilizing \ProblemName{Input Multiset} without a known bound}
\label{alg:bc-stabilizing-input-multiset}

\KwInput{An input value $\mathit{input}$}
\KwOutput{A stabilizing output for the \ProblemName{Input Multiset} function}

\Fn{\AlgStartInputMultisetTrial{}}{
  initialize a simulation of \InputMultiset{} on $\mathit{input}$ \label{line:bc-sim-im-start}\;
  $\mathit{answer}\leftarrow\bot$\label{line:bc-sim-im-answer-bot}\;
}

\Fn{\AlgStabilizingInputMultiset{$\mathit{input}$}}{
  start the communication layer\label{line:bc-sim-im-start-layer}\;
  \AlgStartInputMultisetTrial{}\label{line:bc-sim-im-first-trial}\;
  \While{\boolname{true}}{
    \If{\Restart{} event occurs\label{line:bc-sim-im-restart-test}}{
      \AlgStartInputMultisetTrial{}\label{line:bc-sim-im-restart}\;
    }
    \Else{
      execute the next step of the current simulation, replacing each call to
      \Flood$(U-1,x)$ by \AdaptiveFlooding$(x)$\label{line:bc-sim-im-step}\;
      \If{the current simulation returns a candidate multiset $C$\label{line:bc-sim-im-candidate}}{
        \eIf{$C$ is not a well-formed multiset or its total multiplicity is larger than $\widehat U$\label{line:bc-sim-im-too-large}}{
          \AlgDetectError{}\label{line:bc-sim-im-detect}\;
        }{
          $\mathit{answer}\leftarrow C$\label{line:bc-sim-im-store}\;
        }
      }
    }
    output $\mathit{answer}$\label{line:bc-sim-im-output}\;
  }
}
\end{algorithm}

We use two simple properties of the simulated known-bound algorithm.  First, if a
trial starts at all agents in the same round and all completed calls to
\AdaptiveFlooding{} have returned the correct global OR synchronously, then the
next simulated flooding call, if any, is invoked by all agents in the same
round.  This is just the usual lockstep behavior of the known-bound synchronous
algorithm: all ordinary communication steps have fixed duration, and every
completed adaptive-flooding call has the same return round at all agents.
Second, the correctness proof of the known-bound \ProblemName{Input Multiset} algorithm uses
the bound only to guarantee that the flooding calls return the global OR.  Thus,
if all flooding calls are answered by correct global ORs, the simulated execution
returns the correct input multiset.

\begin{lemma}
\label{lem:bc-known-bound-simulation-interface}
Consider the known-bound \InputMultiset{} algorithm.  If each call
to \(\Flood(U-1,x)\) is replaced by a synchronous oracle that returns one bit,
and if all oracle calls return the correct global OR
at all agents in the same simulation-channel round, then a synchronously
started execution stays in lockstep.  The algorithm makes \(O(nB_{\max}+n^2\log n)\) such OR calls and
\(O(n)\) further one-round cut-test steps.
\end{lemma}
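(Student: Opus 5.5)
We argue in two parts: lockstep behavior by induction on the simulation-channel rounds, then a count of OR calls and cut-test steps read off the structure of \InputMultiset{}, i.e.\ \InputFrequency{} on encoded inputs, which in turn calls \DistinctValues{} and \ConstraintRound{}.

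\textbf{Lockstep.} The induction hypothesis is that at the beginning of each simulation-channel round all agents hold the same \emph{control state} of the simulated algorithm. By control state we mean the program counter, loop indices, $m$, $\mathcal C$, $\ell$, $I$, the parent map, and the sets $O_x$; it excludes private data such as $x$, $\mathit{label}$, $h$, $\mathit{candidate}$, and $\mathit{remaining}$. The point is that every branch or loop bound in Algorithms~\ref{alg:flood}, \ref{alg:distinct-values}, \ref{alg:bc-constraint-round}, \ref{alg:bc-input-frequency} and \ref{alg:bc-input-multiset} that governs communication depends only on control state or on return values of \Flood{} calls. Examples are the while-guards of \DistinctValues{}, the value $\mathit{hasOne}$, the selection of $S$ via Corollary~\ref{cor:bc-missing-sign-pattern}, and the loop over $x$ and $j$. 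Private data only decide which bit is sent or which label is adopted. A step is either local computation, which changes control state identically at all agents since it is deterministic in common data, or a one-round test send in line~\ref{line:cr-send}, which does not branch, or an oracle call. By hypothesis an oracle call returns the same correct OR at all agents in the same round. Hence the control state stays common, and the next oracle call or test round is issued by all agents in the same round. When every oracle answer is the true OR, the correctness arguments of Lemmas~\ref{lem:bc-distinct-values} and~\ref{lem:bc-constraint-round-correct} and of Theorems~\ref{thm:anonymous-bc-frequency} and~\ref{thm:anonymous-bc-input-multiset} go through verbatim, because they invoke Corollary~\ref{cor:flood-upper-bound} only to obtain that OR.

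\textbf{Counting.} We reuse the proofs of Lemma~\ref{lem:bc-distinct-values}, Lemma~\ref{lem:bc-constraint-round-complexity} and Theorem~\ref{thm:anonymous-bc-frequency}, dividing each round bound by the factor $U$ contributed by one flood. The initial \DistinctValues{} on encoded values of bit length $B_{\max}+1$ makes $O(q(1+B_{\max}))=O(nB_{\max})$ calls. Each \ConstraintRound{} starting with $m$ classes and creating $s$ fresh labels makes $O((m+s)\log n)$ calls, since every encoded count is at most $n+1$. There are at most $n-1$ iterations, each with $m_i\le n$, and $\sum_i s_i\le n$ by Lemma~\ref{lem:bc-total-refinements}. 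Summing gives $O(n^2\log n)$ calls. Each iteration contributes exactly one test round, so there are $O(n)$ cut-test steps. All remaining work in \InputMultiset{} is local.

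\textbf{Main obstacle.} The delicate step is making precise that communication control flow never depends on private data. One has to check \DistinctValues{} line by line: $\mathit{candidate}$ is private but only affects the sent bit, while $y$ and $\ell$ are derived from oracle answers and so are common. The count itself is routine bookkeeping. The only subtlety there is the bit length $O(\log n)$ of encoded counts, which holds because the hypothesis that all oracle answers are correct puts us in the true execution, where each count is at most $n-1$.
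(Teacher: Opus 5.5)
Your proof is correct and follows the same route as the paper. Lockstep holds because every branch and duration that affects communication is fixed by common data and synchronous oracle answers, and the call count comes from dividing the known-bound round bounds (Lemma~\ref{lem:bc-constraint-round-complexity}, Theorem~\ref{thm:anonymous-bc-input-multiset}) by the factor $U$ of one flood, plus one test round per \ConstraintRound{} invocation. Your explicit separation of common control state from private data such as $\mathit{candidate}$ and $\mathit{label}$ makes precise a point the paper only asserts; one small correction is that the bound $h\le n-1$ needs no appeal to oracle correctness, since $h$ is a physical neighbor count in a single round.
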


\begin{proof}
The known-bound algorithm is synchronous and deterministic.  Between flooding
calls, all ordinary communication steps have fixed duration, and every local
choice is determined by the common state, the local input, and the observations
already obtained.  Hence, if the previously consumed oracle calls return in the same
simulation-channel round at all agents, the next oracle call, if any, is
invoked by all agents in the same simulation-channel round.

The correctness proof of Theorem~\ref{thm:anonymous-bc-input-multiset} uses the
bound \(U\) only through Lemma~\ref{lem:flood-correctness}, namely to ensure that
each \(\Flood(U-1,x)\) call returns the global OR.  Replacing those calls by a
correct synchronous OR oracle therefore preserves the simulated execution and its
output.  The call bound follows from the complexity proof of
Theorem~\ref{thm:anonymous-bc-input-multiset}: the \(O(UnB_{\max}+Un^2\log n)\)
term is \(U\) times the number of flooding calls, while
\ConstraintRound{} contributes only \(O(n)\) additional one-round cut tests over
the whole execution.
\end{proof}

\begin{lemma}
\label{lem:bc-clean-adaptive-simulation}
Consider a synchronously started adaptive \ProblemName{Input Multiset} trial with common
estimate \(u\), and suppose all agents are active at the beginning of the
trial.  If no agent invalidates before the simulated execution produces a candidate
multiset \(C\), then all agents produce the same \(C\) in the same
simulation-channel round, and \(C\) is the correct input multiset.  In
particular, \(|C|=n\).
\end{lemma}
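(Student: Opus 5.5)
The plan is an induction over the sequence of \AdaptiveFlooding{} calls made by the simulated \InputMultiset{} execution, using Lemma~\ref{lem:bc-adaptive-flooding-correctness} as the inductive step. At that point we invoke Lemma~\ref{lem:bc-known-bound-simulation-interface} to conclude lockstep behavior and a correct output.

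The invariant, for each $j$, is: the first $j$ calls to \AdaptiveFlooding{} in the trial
\begin{itemize}
\item have a common start;
\item use the common estimate $u$;
\item terminate synchronously;
\item return the correct global OR at every agent.
\end{itemize}

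For the base case, the trial starts synchronously, which by Lemma~\ref{lem:bc-errors-trigger-resets} happens in the same round at all reached agents. All agents are active, so they share estimate $u$. The local steps and the one-round cut tests of \ConstraintRound{} preceding the first call have fixed durations and are executed on the simulation channel. By Lemma~\ref{lem:bc-communication-synchronization} every agent executes that channel in the same rounds, so the first call has a common start.

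For the inductive step, suppose a call has a common start and no agent is invalid before it. Lemma~\ref{lem:bc-adaptive-flooding-correctness} then says all agents reach the return together, and that either some agent becomes invalid during the call or all return the global OR. The hypothesis forbids invalidation before the candidate is produced, so the second alternative holds. The estimate cannot change within the trial, because changes happen only at reset completions, and a reset requires an invalid leader. By the lockstep part of Lemma~\ref{lem:bc-known-bound-simulation-interface}, the next call, or the final return, is reached synchronously by all agents.

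The main subtlety is ruling out invalidation coming from the recovery channels rather than from \AdaptiveFlooding{} itself. If no agent is invalid, the leader is active. The leader therefore starts a control signal at each window start, so it is enough to rule out an active agent missing the control signal within a window of length $u$. If $u<n$ this can genuinely happen. But such a miss calls \DetectError{}, which is an invalidation before $C$ is produced and is excluded by hypothesis. So the assumption ``no agent invalidates'' covers all sources of invalidation, and the invariant holds throughout.

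Once every oracle call has been answered by the correct synchronous OR, Lemma~\ref{lem:bc-known-bound-simulation-interface} together with Theorem~\ref{thm:anonymous-bc-input-multiset} (with $k=1$) gives the rest. All agents finish the simulation in the same simulation-channel round and return the same candidate $C$, namely the true input multiset. Its total multiplicity is therefore $n$.
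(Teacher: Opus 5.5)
Your proposal is correct and follows essentially the same route as the paper. Both argue by induction over the adaptive-flooding calls, using Lemma~\ref{lem:bc-adaptive-flooding-correctness} (with invalidation excluded by hypothesis) to get a correct, synchronous global OR at each call, and Lemma~\ref{lem:bc-known-bound-simulation-interface} for lockstep and the correct final multiset. Your extra observations, that the estimate cannot change mid-trial because a reset requires an invalid leader and that control-channel invalidations are also covered by the hypothesis, make explicit points the paper's proof leaves implicit.
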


\begin{proof}
We argue by induction over the adaptive-flooding calls consumed by the
simulation before \(C\) is produced.  At the start of the trial, all agents
are in the same simulated state with the same estimate.  If all previous
adaptive-flooding calls have returned the correct global OR synchronously, the
lockstep part of Lemma~\ref{lem:bc-known-bound-simulation-interface} implies
that the next such call, if any, has a common start.  Since no agent invalidates
before \(C\) is produced and all active agents
have the same estimate by Lemma~\ref{lem:bc-common-estimate},
Lemma~\ref{lem:bc-adaptive-flooding-correctness} implies that this call returns
the correct global OR synchronously.  Thus every flooding answer consumed before
\(C\) is a correct global OR.

By Lemma~\ref{lem:bc-known-bound-simulation-interface}, the simulated known-bound execution therefore
returns the correct input multiset.  The same lockstep induction gives the same
return round and the same candidate \(C\) at all agents.  Since \(C\) is the
global input multiset, its mass is \(n\).
\end{proof}

\begin{lemma}[Progress of one \ProblemName{Input Multiset} trial]
\label{lem:bc-im-trial-progress}
Consider a synchronously started trial of Algorithm~\ref{alg:bc-stabilizing-input-multiset}
with estimate \(u\leq2n\), and suppose all agents are active at the beginning
of the trial.  Let $f(n)=n B_{\max}+n^2\log n$. If \(u<n\), then the trial becomes invalid within \(O(uf(n))\) rounds.  If
\(u\geq n\), then no agent becomes invalid, no further reset is initiated, and
all agents store the correct input multiset within \(O(uf(n))\) rounds.
\end{lemma}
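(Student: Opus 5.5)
We first bound the length of the trial in the case where nobody invalidates. By Lemma~\ref{lem:bc-known-bound-simulation-interface}, the simulated \InputMultiset{} makes \(O(nB_{\max}+n^2\log n)=O(f(n))\) OR calls plus \(O(n)\) one-round cut tests. These counts concern the real network, so they hold regardless of the estimate \(u\). Each OR call is answered by \AdaptiveFlooding{}. By Lemma~\ref{lem:bc-adaptive-flooding-cost}, each such call costs \(O(u)\) rounds, and by Lemma~\ref{lem:bc-channel-overhead}, channel interleaving adds only a factor of four. Hence the simulated execution either produces a candidate \(C\), or some agent invalidates, within \(O(uf(n)+n)=O(uf(n))\) rounds.

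One subtlety needs care here. While no agent has invalidated, Lemma~\ref{lem:bc-adaptive-flooding-correctness} together with the lockstep induction of Lemma~\ref{lem:bc-clean-adaptive-simulation} guarantees common starts. These in turn guarantee that all OR answers are correct, so the simulated run follows the true execution and its call count is as stated. If the run deviated before invalidation, the count could grow arbitrarily, so this induction is what justifies the bound.

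\textbf{Case \(u<n\).} Suppose that no agent invalidates before a candidate is produced. Then Lemma~\ref{lem:bc-clean-adaptive-simulation} gives that all agents produce the correct \(C\) with \(|C|=n>u=\widehat U\). The test in line~\ref{line:bc-sim-im-too-large} therefore fires, and every agent calls \DetectError{}. So in either case the trial becomes invalid within the \(O(uf(n))\) bound above.

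\textbf{Case \(u\geq n\).} Lemma~\ref{lem:bc-adaptive-flooding-correctness} shows that each adaptive call with common start returns the correct OR and causes no invalidation. Inductively, every call has a common start. We must also rule out the other sources of \DetectError{}.
\begin{itemize}
\item \emph{Control channel.} All agents are active with the same estimate \(u\) and the same control position (Lemma~\ref{lem:bc-common-estimate}). In every control window the active leader starts the signal, and the window has \(u\geq n\) control-channel rounds. By the flooding argument of Lemma~\ref{lem:flood-correctness}, every agent receives the signal, so line~\ref{line:bc-control-detect} never fires.
\item \emph{Invalidation channel.} This channel only propagates existing invalidity, so it creates none.
\item \emph{Reset channel.} No reset is started, because the leader never sets \(\mathit{resetReady}\); that flag requires the leader to be invalid.
\end{itemize}
Lemma~\ref{lem:bc-clean-adaptive-simulation} then yields the correct \(C\) with \(|C|=n\leq u\). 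The check passes, and the agents store \(\mathit{answer}=C\) within \(O(uf(n))\) rounds. Since no agent ever becomes invalid afterwards (the control argument holds forever), no further reset is initiated.

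The main obstacle is the simultaneous induction in the case \(u\geq n\). It has to interleave four facts across all channels: the common-start property of adaptive floods, the absence of invalidation from the control windows, the common estimate, and the absence of resets. Each of these depends on the others holding up to the current round, so the induction must run round by round over the whole interleaved schedule rather than call by call.
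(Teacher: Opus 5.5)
Your proposal is correct and follows essentially the same route as the paper. You bound the trial length by the $O(f(n))$ adaptive-flooding calls at $O(u)$ rounds each, and you use Lemma~\ref{lem:bc-clean-adaptive-simulation} to get the correct candidate with $|C|=n$, so the mass test fires exactly when $u<n$. For $u\geq n$ you rule out every source of \DetectError{}: the adaptive floods, the invalidation channel, control windows of length $u\geq n$, and resets, which need an invalid leader. Your explicit remark that the call count is justified only by the lockstep induction up to the first invalidation is a useful sharpening of a point the paper leaves implicit.
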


\begin{proof}
By Lemma~\ref{lem:bc-known-bound-simulation-interface}, the known-bound \ProblemName{Input Multiset} algorithm
performs \(O(nB_{\max}+n^2\log n)=O(f(n))\) flooding calls and
synchronous steps.
  In the simulated trial, each flooding call is replaced by
\AdaptiveFlooding{} in line~\ref{line:bc-sim-im-step}.  By
Lemma~\ref{lem:bc-adaptive-flooding-cost}, each such replacement costs \(O(u)\)
rounds.  Hence, unless the trial is invalidated earlier, the simulated algorithm
returns a candidate multiset within \(O(uf(n))\) rounds.

If the trial is
invalidated earlier, then the first conclusion for \(u<n\) already holds within
this time bound.

Assume first that no agent invalidates before this return, and let \(C\) be the
candidate returned.  By Lemma~\ref{lem:bc-clean-adaptive-simulation}, \(C\) is
the correct input multiset and \(|C|=n\).  If \(u<n\), then \(C\) is not
\(u\)-valid, so
the test in line~\ref{line:bc-sim-im-too-large} succeeds and the agent calls
\DetectError{} in line~\ref{line:bc-sim-im-detect}.  Thus the trial is
invalidated within \(O(uf(n))\) rounds.  If \(u\geq n\), then \(C\) is \(u\)-valid, so the test in line~\ref{line:bc-sim-im-too-large} fails and the
correct candidate is stored in line~\ref{line:bc-sim-im-store}.

It remains, in the case \(u\geq n\), to check that no invalid propagation is created by
the layer before the candidate is stored.  The adaptive-flooding calls do not
call \DetectError{} in line~\ref{line:bc-af-detect}, by
Lemma~\ref{lem:bc-adaptive-flooding-correctness}.  The
invalidation channel cannot create an invalidated agent while none exists.  The
reset channel only reacts to an already started reset.  In the control channel,
the active leader starts the control signal at the beginning of each control
window in line~\ref{line:bc-control-start-signal} of Algorithm~\ref{alg:bc-error-control-rounds}; since the window length is
\(u\geq n\), the signal reaches every agent before the missing-signal test in
line~\ref{line:bc-control-missing} of Algorithm~\ref{alg:bc-error-control-rounds}, by the flooding argument of
Lemma~\ref{lem:flood-correctness}.  Hence no agent invalidates the trial.  Since a
reset can start only after the leader becomes invalid, no further reset is initiated.

\end{proof}

\begin{theorem}
\label{thm:bc-stabilizing-input-multiset}
\StabilizingInputMultiset{} stabilizes to the \ProblemName{Input Multiset} function.  More
precisely, after some finite time, every agent outputs the correct input
multiset forever.  From the beginning of the execution, the stabilization time is
$O(n^2B_{\max}\log n+n^3\log^2 n)$ rounds.
\end{theorem}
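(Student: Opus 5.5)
We structure the argument as a sequence of trials indexed by the estimate. Initially \(\widehat U=1=u_0\), and by Lemma~\ref{lem:bc-errors-trigger-resets} every completed reset doubles the common active estimate, so the \(i\)-th trial runs with \(u_i=2^i\). The first step is to show that the estimate never overshoots: while \(u_i<n\), we want the trial with estimate \(u_i\) to be invalidated, and we want the resulting reset to yield \(u_{i+1}=2u_i\leq 2n\). Thus the final estimate \(u_{r}\) is the first power of two with \(u_r\geq n\), so \(u_r<2n\) and \(r=O(\log n)\). Two technical points arise. First, a reset with \(2u<n\) may fail to reach everyone. In that case the unreached agents remain invalid, and their invalidation announcements (Lemma~\ref{lem:bc-error-propagation}) reinvalidate the new trial, so that trial is simply invalidated early, which is consistent with the \(u<n\) branch of Lemma~\ref{lem:bc-im-trial-progress}. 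Second, Lemma~\ref{lem:bc-im-trial-progress} assumes all agents are active at the start of the trial. For \(u<n\) I would argue that this hypothesis is not needed for the conclusion ``the trial becomes invalid'': either someone is already invalid, or all agents are active and the lemma applies. Once \(u_r\geq n\), we have \(2u_{r-1}=u_r\geq n\), so the reset that created trial \(r\) reached all agents, and the hypothesis of the lemma holds.

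The second step is correctness of the stable phase. In trial \(r\), Lemma~\ref{lem:bc-im-trial-progress} with \(n\leq u_r\leq 2n\) gives that no agent becomes invalid, no reset is ever initiated again, and all agents store the correct multiset within \(O(u_rf(n))=O(nf(n))\) rounds. No further \Restart{} is delivered, so \(\mathit{answer}\) is never reset to \(\bot\). After the simulated algorithm returns, it takes no further steps that overwrite \(\mathit{answer}\), so the output is correct forever.

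The third step is timing, via Lemma~\ref{lem:bc-calendar-amortization}. I set \(H_i=C\,u_i f(n)\). This dominates the simulation work until invalidation, which is \(O(u_if(n))\) by Lemma~\ref{lem:bc-im-trial-progress}. In the early-invalidation case that cost is even smaller, though I must note that an invalid agent left over from a partial reset invalidates the trial within \(O(n)\) rounds. The bound \(H_i\) also dominates the non-calendar recovery cost \(O(n+u_i)\) from Lemma~\ref{lem:bc-recovery-cost}. Since \(f(n)\geq n\), the sequence \(H_i\) is nondecreasing and \(H_i=\Omega(u_i)\). The lemma then gives a time to reach trial \(r\) of
\[
O\Bigl(\sum_{i<r}u_if(n)2^{r-1-i}\Bigr)=O\bigl(r\,2^{r}f(n)\bigr)=O\bigl(n\log n\cdot f(n)\bigr).
\]
Adding the \(O(nf(n))\) rounds of the final trial gives \(O(n\log n(nB_{\max}+n^2\log n))=O(n^2B_{\max}\log n+n^3\log^2 n)\).

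The main obstacle is the first step, namely rigorously handling partially reaching resets and the ``all agents active'' hypothesis. The cleanest approach is to charge any lingering invalid agents to Lemma~\ref{lem:bc-error-propagation}, which bounds their effect by \(O(n)\), absorbed in \(H_i\), and to observe that \(u\) cannot exceed \(2n\) because any trial with \(u\geq n\) whose agents are all active is never invalidated.
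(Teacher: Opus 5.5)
Your proposal is correct and follows essentially the same route as the paper. It uses the same case split on whether the reset opening an interval with $u_i<n$ reached everyone, the same observation that the first reset of value at least $n$ reaches all agents (so the final trial meets the hypotheses of Lemma~\ref{lem:bc-im-trial-progress} with $u_r<2n$), and the calendar amortization of Lemma~\ref{lem:bc-calendar-amortization}. The one difference is cosmetic: you absorb the $O(n+u_i)$ recovery cost into $H_i=Cu_if(n)$ via $f(n)\geq n$, whereas the paper takes $H_i=K(u_if(n)+n+u_i)$ and sums the recovery part separately to $O(n^2)$; both give $O(nf(n)\log n)$.
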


\begin{proof}

The case \(n=1\) is immediate, so assume \(n\geq2\).  Let
\(r=\lceil\log_2 n\rceil\) and \(u_i=2^i\).  The leader starts with estimate
\(u_0=1\) by line~\ref{line:bc-init-estimate}.  Whenever the leader initiates a
reset with estimate \(u_i\), the reset condition in
line~\ref{line:bc-reset-start-test} of Algorithm~\ref{alg:bc-reset-rounds} requires the reset-window value to be
\(2u_i=u_{i+1}\).  The leader is reached by the reset it starts, and installs
this value in line~\ref{line:bc-reset-update-estimate}.  Thus the leader's
estimates are \(u_0,u_1,u_2,\ldots\).

Consider an interval in which the leader's estimate is \(u_i<n\).  If the reset
that began this interval reached all agents, then all agents observe
\Restart{} synchronously by Lemma~\ref{lem:bc-error-reset-correctness}, and the
new trial is synchronously started with all agents active.  By
Lemma~\ref{lem:bc-im-trial-progress}, this trial is invalidated within
\(O(u_i f(n))\) rounds.  If the reset did not reach all agents, then some
unreached agent remained invalid by line~\ref{line:bc-reset-clear-invalid}; in
this case recovery is already active, and no progress guarantee for the simulated
trial is needed.  In either case, once an invalid agent exists, the
non-calendar recovery work before the next reset is \(O(n+u_i)\) rounds by
Lemma~\ref{lem:bc-recovery-cost}.  Hence the non-calendar cost charged to the
interval with leader estimate \(u_i\) is
\[
  H_i=K\bigl(u_i f(n)+n+u_i\bigr)
\]

for a sufficiently large absolute constant \(K\). The sequence \(H_i\) is
nondecreasing and satisfies \(H_i=\Omega(u_i)\), so Lemma~\ref{lem:bc-calendar-amortization} applies.

By Lemma~\ref{lem:bc-calendar-amortization}, the total time before the first
trial with leader estimate at least \(u_r\geq n\) is
\[
    O\left(\sum_{i=0}^{r-1} H_i2^{r-1-i}\right).
\]
Since \(2^{r-i}=\Theta(n/u_i)\), the simulation-work part contributes
\[
    \sum_{i=0}^{r-1} O\left(u_i f(n)\frac{n}{u_i}\right)
    =O(nf(n)\log n),
\]
and the recovery terms contribute
\[
    \sum_{i=0}^{r-1} O\left((n+u_i)\frac{n}{u_i}\right)=O(n^2).
\]

The first reset whose value is at least \(n\), if any, reaches all agents by
Lemma~\ref{lem:bc-error-reset-correctness}.  Therefore the following trial is
synchronously started with all agents active and estimate at least \(n\).  By
Lemma~\ref{lem:bc-im-trial-progress}, no further reset is initiated and all
agents store the correct input multiset within an additional
\(O(u_r f(n))=O(nf(n))\) rounds.
In the no-reset case, the initial estimate must already be sufficient for the
\ProblemName{Input Multiset} trial: otherwise Lemma~\ref{lem:bc-im-trial-progress} would
invalidate the trial and trigger recovery.  Hence the same lemma gives
stabilization in the initial trial.  Combining the bounds and substituting
\(f(n)=nB_{\max}+n^2\log n\) gives
\[
    O(nf(n)\log n+n^2)
    =O(n^2B_{\max}\log n+n^3\log^2 n),
\]
as claimed.
\end{proof}

\begin{corollary}[Stabilizing \ProblemName{Counting}]
\label{cor:bc-stabilizing-counting}
The same communication layer gives a stabilizing algorithm for the \ProblemName{Counting}
function.  From the beginning of the execution, it stabilizes in
\(O(n^3\log^2 n)\) rounds.
\end{corollary}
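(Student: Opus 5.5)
The plan is to reduce \ProblemName{Counting} to \ProblemName{Input Multiset}, exactly as in Corollary~\ref{cor:anonymous-bc-classical-counting}. Every agent ignores its original input and runs \StabilizingInputMultiset{} with the synthetic constant input $1$, keeping its leader flag. The agent then outputs the multiplicity of the value $1$ in its current stored answer, or $\bot$ while the answer is $\bot$.

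Correctness follows directly from Theorem~\ref{thm:bc-stabilizing-input-multiset}. After finite time every agent outputs the correct input multiset forever. Under the synthetic input this multiset is $\{\!\{1,\ldots,1\}\!\}$ with multiplicity $n$. The derived output is a fixed deterministic function of the stored answer, so it also stabilizes to $n$ forever. No round is added, because the projection is local.

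For the running time, I would redo the accounting in the proof of Theorem~\ref{thm:bc-stabilizing-input-multiset} with $B_{\max}=O(1)$. The synthetic input $1$, once encoded as $2x+b\le 3$ inside \InputMultiset{}, has constant bit length. Lemma~\ref{lem:bc-known-bound-simulation-interface} then gives $f(n)=O(n+n^2\log n)=O(n^2\log n)$ flooding calls per trial. Lemma~\ref{lem:bc-im-trial-progress} bounds each trial with estimate $u$ by $O(u f(n))$ rounds. Lemma~\ref{lem:bc-calendar-amortization} uses $H_i=K(u_if(n)+n+u_i)$, giving total time $O(nf(n)\log n+n^2)$ before the good trial, and the final trial costs a further $O(nf(n))$. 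Substituting $f(n)$ yields $O(n^3\log^2 n)$.

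The only point I would check carefully is that the stabilization bound depends on the input only through $B_{\max}$. This holds because the initial \DistinctValues{} call costs $O(q(1+B_{\max}))$ flooding calls, and every other cost is independent of the input values. With that confirmed, the $n^2B_{\max}\log n$ term is absorbed into $n^3\log^2 n$, and the corollary follows.
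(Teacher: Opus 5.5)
Your proposal is correct and follows the paper's proof. You reduce \textsc{Counting} to \textsc{Input Multiset} with a constant synthetic input, read off the multiplicity $n$ locally, and note that $B_{\max}=O(1)$, so the $O(n^2B_{\max}\log n)$ term of Theorem~\ref{thm:bc-stabilizing-input-multiset} is absorbed into $O(n^3\log^2 n)$. Redoing the accounting with $f(n)=O(n^2\log n)$ is only a more explicit form of that same substitution.
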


\begin{proof}
\ProblemName{Counting} is the special case of \ProblemName{Input Multiset} in which every agent uses the
same constant input value.  The final multiset then contains one value with
multiplicity \(n\), so every agent can output this multiplicity.  The
stabilization follows from Theorem~\ref{thm:bc-stabilizing-input-multiset}.  Since
the input alphabet has constant bit length in this reduction, the
\(O(n^2 B_{\max}\log n)\) term is absorbed by \(O(n^3\log^2 n)\).
\end{proof}

\subsection{Termination with a Common Bound}
\label{subsec:bc-terminating-common-bound}

Suppose that all agents are given a common bound \(U\geq n\).  We obtain a
terminating algorithm by running the stabilizing algorithm and delaying
acceptance of each candidate output.  When an active agent obtains a
\(\widehat U\)-valid candidate multiset \(C\), it stores \(C\), stops advancing
the simulation, and waits for \(U+1\) simulation-channel rounds.  If it invalidates
or receives a \Restart{} event during this waiting period, it discards \(C\) and
continues with the next trial.  Otherwise, when the waiting period ends, it
terminates and outputs \(C\).  As before, candidates that are not
\(\widehat U\)-valid are rejected by calling \DetectError{}.

\begin{theorem}
\label{thm:bc-terminating-input-multiset}
In anonymous \(1\)-interval-connected dynamic networks with a unique leader,
given a common bound \(U\geq n\), the \ProblemName{Input Multiset} function can be computed
with termination in $O(n^2B_{\max}\log n+n^3\log^2 n+U)$ rounds.
\end{theorem}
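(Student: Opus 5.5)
We follow the wrapper described just before the statement: run the stabilizing algorithm of Theorem~\ref{thm:bc-stabilizing-input-multiset}, but once an active agent holds a \(\widehat U\)-valid candidate \(C\), it freezes the simulation and waits \(U+1\) simulation-channel rounds before terminating. Two things must be shown. \emph{Safety}: no agent ever terminates with a wrong multiset. \emph{Liveness with the bound}: all agents terminate within \(O(n^2B_{\max}\log n+n^3\log^2 n+U)\) rounds.

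For safety, fix an agent \(v\) that terminates with candidate \(C\) obtained in a trial with estimate \(u\). The key claim is that if the trial is invalid, meaning some agent invalidated or \(u<n\), then \(v\) is invalidated or receives \Restart{} during its waiting period. We treat the two ways a wrong candidate could arise.

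\begin{itemize}
\item \textbf{Some agent invalidated before or while \(C\) was produced.} By Lemma~\ref{lem:bc-error-propagation}, invalidity reaches \(v\) within \(n-1\le U\) invalidation-channel rounds. Because channels are interleaved one round each in every cycle of four (Lemma~\ref{lem:bc-channel-overhead}), these occur within the \(U+1\) simulation-channel rounds of waiting. So \(v\) discards \(C\).
\item \textbf{No agent invalidated before \(C\) was produced.} If the trial was synchronously started with all agents active, Lemma~\ref{lem:bc-clean-adaptive-simulation} gives that \(C\) is correct, so termination is safe.
\end{itemize}

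The delicate subcase is a trial started by a reset that did not reach every agent. Here some unreached agent is still invalid at the start, so it falls under the first bullet. I would write this case analysis explicitly, using Lemma~\ref{lem:bc-error-reset-correctness} to argue that at the start of any trial, either all agents are active with a common estimate, or some agent is already invalid.

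For liveness, take the first trial with \(\widehat U\ge n\). By Lemma~\ref{lem:bc-im-trial-progress}, no agent invalidates and no reset ever starts again. All agents therefore produce the correct \(C\) in the same round, wait \(U+1\) simulation-channel rounds undisturbed, and terminate together. Earlier trials are handled differently:
\begin{itemize}
\item A trial with \(u<n\) whose candidate passes the \(\widehat U\)-validity test must contain an invalidation, by Lemma~\ref{lem:bc-clean-adaptive-simulation}.
\item Such a candidate therefore causes at most \(O(n)\) extra waiting, namely the invalidation-propagation time, before the agent resumes. The resumed agent is invalid, so it simply follows the recovery layer.
\end{itemize}
This adds \(O(n)\) per interval to \(H_i\), which is absorbed in the \(O(n+u_i)\) recovery term. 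The amortization of Lemma~\ref{lem:bc-calendar-amortization} and the computation in the proof of Theorem~\ref{thm:bc-stabilizing-input-multiset} then bound the time to reach the good trial by \(O(n^2B_{\max}\log n+n^3\log^2 n)\). The final wait adds \(O(U)\) rounds.

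The main obstacle is the safety argument's interaction with waiting agents. A waiting agent has stopped simulating, but it must still run the invalidation, control and reset channels. I must check that the control channel cannot spuriously invalidate a correct final trial; this holds since \(u\ge n\). I must also check that a waiting agent in a bad trial really does get invalidated within \(U+1\) rounds rather than only later by a control window of length \(u\); this is why the argument relies on the invalidation channel's \(n-1\) bound rather than on control windows. There is one remaining risk. If no agent ever calls \DetectError{} in a bad trial, could all agents agree on a wrong \(C\)? This is excluded by Lemma~\ref{lem:bc-adaptive-flooding-correctness}: without invalidation, every flood returns the true OR.
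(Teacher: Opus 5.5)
Your proposal is correct and matches the paper's proof. Safety uses the same dichotomy: if some agent is already invalid, the waiting agent is invalidated within $n-1\le U$ invalidation-channel rounds; otherwise the trial was synchronously started and Lemma~\ref{lem:bc-clean-adaptive-simulation} makes the candidate correct. Liveness combines the stabilization bound of Theorem~\ref{thm:bc-stabilizing-input-multiset} with the $O(U)$ final wait. Your extra $O(n)$-per-interval charge for waiting in bad trials is harmless but unnecessary: freezing the simulation does not delay the invalidation, control, and reset channels, so the layer's timeline is exactly that of the stabilizing algorithm.
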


\begin{proof}
By Theorem~\ref{thm:bc-stabilizing-input-multiset}, within $O(n^2B_{\max}\log n+n^3\log^2 n)$ rounds all later candidates produced by the stabilizing algorithm are the
correct input multiset and no further restart is generated.  The next such
candidate is therefore correct and is followed by a waiting period of
\(U+1=O(U)\) simulation-channel rounds, hence \(O(U)\) ordinary rounds.

It remains only to rule out termination with an earlier incorrect candidate.  If
some agent has already invalidated when such a candidate is stored, then during
the \(U+1\) simulation-channel waiting period at least \(U\geq n\) invalidation-channel
rounds occur.  Hence the invalidation reaches the agent before the waiting period can
finish, unless a reset completion reaches it first; in either case the candidate
is discarded.  If no agent has become invalid when the candidate is stored, then the
trial was synchronously started and Lemma~\ref{lem:bc-clean-adaptive-simulation}
implies that the candidate is correct.  Thus no incorrect candidate can be
accepted, and the stated bound follows.
\end{proof}

\begin{corollary}
\label{cor:bc-terminating-counting}
In anonymous \(1\)-interval-connected dynamic networks with a unique leader,
given a common bound \(U\geq n\), the \ProblemName{Counting} function can be computed with
termination in $O(n^3\log^2 n+U)$
rounds.
\end{corollary}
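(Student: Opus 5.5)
The plan is to reduce the corollary to Theorem~\ref{thm:bc-terminating-input-multiset}, in the same way that Corollary~\ref{cor:bc-stabilizing-counting} was obtained from Theorem~\ref{thm:bc-stabilizing-input-multiset}.

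First, every agent discards its original input and runs the terminating \ProblemName{Input Multiset} algorithm of Section~\ref{subsec:bc-terminating-common-bound} on the synthetic constant input \(1\). It keeps its leader flag and uses the common bound \(U\geq n\). The admissibility assumptions of Theorem~\ref{thm:bc-terminating-input-multiset} are then satisfied: there is a unique leader, a common bound, and a \(1\)-interval-connected adversary. Hence every agent terminates with the correct input multiset of the synthetic assignment. That multiset is \(\{\!\{1,\ldots,1\}\!\}\), in which the value \(1\) has multiplicity exactly \(n\). Each agent outputs this multiplicity as its final, never-changing output, which gives \ProblemName{Counting} with termination.

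Second, for the complexity, I substitute \(B_{\max}=\lceil\log 2\rceil=1\) into the bound \(O(n^2B_{\max}\log n+n^3\log^2 n+U)\). This yields \(O(n^2\log n+n^3\log^2 n+U)=O(n^3\log^2 n+U)\), since the first term is dominated by the second. I should check that the leader-encoding inside \InputMultiset{} does not change this: it maps the input to \(2x+b\), so it raises the bit length only to a constant. That is already absorbed in the theorem's accounting.

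I expect no real obstacle. The only point needing care is that the reduction preserves termination and not merely stabilization. This holds because the output map (multiset \(\mapsto\) multiplicity of \(1\)) is applied locally at the moment of termination. The agent's final state therefore determines a final output, and no further communication is needed.
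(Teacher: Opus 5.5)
Your proof is correct and matches the paper's intended argument. The paper states this corollary without a separate proof, as the direct analogue of Corollary~\ref{cor:bc-stabilizing-counting}: you run the terminating \ProblemName{Input Multiset} algorithm of Theorem~\ref{thm:bc-terminating-input-multiset} on a constant synthetic input, read off the multiplicity \(n\), and absorb the \(O(n^2B_{\max}\log n)\) term because \(B_{\max}=1\).
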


\color{black}

\section{Related Work}
\label{sec:related-work}

We survey work on anonymous networks and on communication models close to
the one-bit model studied in this paper.

\paragraph{Anonymous networks.}
The study of computability in anonymous static networks goes back to
Angluin~\cite{angluin_STOC1980} and has developed into a broad line of
work~\cite{boldi_vigna_DISC2001,chalopin_godard_metivier_DISC2008,chalopin_metivier_morsellino_FI2012,fraigniaud_pelc_peleg_perennes_PODC2000,yamashita_kameda_PODC1988,yamashita_kameda_TPDS1996}.
The main obstacle introduced by anonymity is symmetry: agents with the
same view of the network cannot be distinguished by a deterministic
algorithm. For instance, in an anonymous static ring, all agents have
the same view, and this view does not by itself distinguish rings of
different sizes. Hence non-trivial tasks such as counting require
additional information or a symmetry-breaking assumption, such as the
presence of a leader.

Several works provide structural characterizations of computability in
anonymous networks. Yamashita and Kameda characterized the tasks solvable
in anonymous networks of known size through the combinatorial notion of
views~\cite{yamashita_kameda_TPDS1996}. Boldi and Vigna gave a
characterization of computability and stabilization using graph fibrations
and coverings~\cite{boldi_vigna_DISC2001}. The above techniques only work on static networks and use messages of size at least $\log (n)$.

\paragraph{Anonymous dynamic networks.}
The problem of counting and computing in anonymous dynamic networks has been studied in several papers~\cite{michail_chatzigiannakis_spirakis_SSS2013,kowalski_mosteiro_ICALP2018,kowalski_mosteiro_JCSS2021} that have given polynomial algorithms for counting and related problems.

For anonymous \(1\)-interval-connected dynamic networks, Di Luna and
Viglietta, introducing the history-tree technique, showed that every computable function, including the one studied in our paper, can be computed in linear
time in the presence of a leader~\cite{diluna_viglietta_FOCS2022}.
Subsequent work extended their history-tree approach to the leaderless and
multi-leader setting~\cite{diluna_viglietta_DISC2023_disconnected}, to
finite-state and self-stabilizing computation~\cite{diluna_viglietta_OPODIS2024},
and to congested networks~\cite{diluna_viglietta_DC2025_congested}.

These history-tree algorithms rely on agents transmitting at least
parts of their vistas. In the congested model, where messages
have \(O(\log n)\) bits, this yields an \(O(n^3)\)-round algorithm for computing \ProblemName{Input Multiset} when a
leader is present~\cite{diluna_viglietta_DC2025_congested}. Our paper studies
the stricter setting in which each agent broadcasts only a single bit
per round; the
history-tree approach used in the previous algorithms therefore does
not directly apply.

\paragraph{Beeping networks.}
The beeping model was introduced as a very weak radio communication model by
Cornejo and Kuhn~\cite{cornejo_kuhn_DISC2010} and, to our knowledge, all works have examined static networks.

Metivier, Robson, and Zemmari
then studied the algorithmic power of several beeping variants, including
collision-detection emulations and local graph problems~\cite{metivier_robson_zemmari_arxiv2015}.
For counting, Casteigts, Metivier, Robson, and Zemmari considered one-hop
beeping networks~\cite{casteigts_metivier_robson_zemmari_TCS2019}. In the beeping model, exact deterministic counting cannot be done, and
the impossibility even holds for Las Vegas algorithms; with sender-side
collision detection, they give an $O(n)$ randomized algorithm with high
probability. Brandes, Kardas, Klonowski, Pajak, and Wattenhofer studied
randomized size approximation in one-hop beeping networks and obtained
time-optimal approximation bounds~\cite{brandes_et_al_TCS2020}. These results concern one-hop networks and are randomized, while our results are deterministic in
multi-hop dynamic networks.

Regarding multi-hop static networks,
Czumaj and Davies developed deterministic beeping algorithms for static global
communication tasks including broadcast, gossiping, and
multi-broadcast~\cite{czumaj_davies_OPODIS2015,czumaj_davies_JPDC2019},
and Beauquier, Burman, Davies, and Dufoulon later obtained uniform
time-optimal deterministic multi-broadcast via group
testing~\cite{beauquier_burman_davies_dufoulon_SIROCCO2019}. Davies obtained
an \(O(\Delta \log n)\)-round simulation of one Broadcast-CONGEST round and
an \(O(\Delta^2 \log n)\)-round simulation of one CONGEST round in the noisy
beeping model~\cite{davies_DC2025_noisy}.
These works use techniques that rely on fixed neighborhoods, identifiers, or degree/diameter
structure, and therefore do not directly transfer to anonymous
\(1\)-interval-connected dynamic networks.

\paragraph{Relationship between our model and beeping networks.} The communication model studied in this paper is closely related to beeping
models. In both settings, each agent chooses between two possible actions in
each round: in beeping models these actions are usually called \emph{beep} and
\emph{silence} or \emph{listen}, while in our model they are denoted by \(1\) and \(0\). 

The basic beeping model is half-duplex. An agent that emits a signal receives
no information in that round, while an agent that remains silent only learns
whether at least one neighbor emitted a signal. Thus the feedback is both
asymmetric and boolean: it distinguishes zero from nonzero, but does not
provide multiplicities.

In counting full-duplex variants of beeping an agent learns, in every round,
how many of its neighbors emitted a signal.
However, this is still weaker than our model, since it does not
learn how many neighbors remained silent. If, in addition, the agent can observe its degree after choosing its action,
then it can recover the number of silent neighbors. This gives
exactly the same information as in the broadcast-counting model.

Consequently, the algorithms presented in this paper also apply to the
corresponding counting beeping models under these additional assumptions.

\paragraph{Content-oblivious model.}
The content-oblivious (CO) model was introduced by Censor-Hillel, Cohen, Gelles, and
Sela. In this model agents communicate
over an asynchronous network by exchanging pulses~\cite{censor_hillel_cohen_gelles_sela_DC2023}.
A pulse carries no content, not even the identity of the sender, so an
agent can only use the number of pulses received on each local port. From this perspective a pulse is weaker than a message carrying a single bit.
To the best of our knowledge, all work on this model has examined static networks \cite{censor_hillel_cohen_gelles_sela_DC2023,frei_et_al_DISC2024,chalopin_chang_chen_diluna_zhou_arxiv2025_nonuniform,chalopin_chang_chen_diluna_zhou_arxiv2025_2edge,chang_chen_zhou_arxiv2025_beyond,chalopin_chang_diluna_zhou_arxiv2026}.

Censor-Hillel et al. showed that, in the presence of a leader, CO
communication can simulate asynchronous message passing on 2-edge-connected
networks~\cite{censor_hillel_cohen_gelles_sela_DC2023}. Subsequent work
removed or weakened the leader requirement in several settings: Frei,
Gelles, Ghazy, and Nolin gave a leader-election algorithm for oriented rings
\cite{frei_et_al_DISC2024}; Chalopin, Chang, Chen, Di Luna, and Zhou
gave a non-uniform leader-election algorithm on unoriented rings and a leader-election algorithm
in uniform 2-edge-connected networks
\cite{chalopin_chang_chen_diluna_zhou_arxiv2025_nonuniform,chalopin_chang_chen_diluna_zhou_arxiv2025_2edge};
while Chang, Chen, and Zhou studied what remains possible beyond
2-edge-connectivity~\cite{chang_chen_zhou_arxiv2025_beyond}.

Closest to our counting problem, Chalopin, Chang, Di Luna, and Zhou study
counting and simulation in content-oblivious rings
\cite{chalopin_chang_diluna_zhou_arxiv2026}. They give an
\(O(n^{3/2})\)-pulse counting algorithm for anonymous rings with a leader,
an \(O(n\log^2 n)\)-pulse algorithm for rings with identifiers, and an
\(\Omega(n\log n)\) lower bound for counting in the CO model. They also
give a simulator for asynchronous message passing on general 2-edge-connected networks with constant pulse
overhead per transmitted bit after preprocessing. The result on counting in an anonymous ring with a leader is close
in spirit to ours, but the models are different: their algorithm is
asynchronous and ring-based, and agents can send a message over a single link, while our algorithms are synchronous, broadcast-based, and work
in arbitrary \(1\)-interval-connected dynamic networks. The dynamic topology and broadcast communication of our setting do not allow the transfer of the content-oblivious algorithmic techniques.

\newcommand{\etalchar}[1]{$^{#1}$}


\begin{thebibliography}{99}

\bibitem{andriambolamalala_ravelomanana_arxiv2021}
Ny~Aina Andriambolamalala and Vlady Ravelomanana.
\newblock {Energy-Efficient Naming in Beeping Networks}, 2021.
\newblock URL: \url{https://arxiv.org/abs/2106.03753},
  \href{https://arxiv.org/abs/2106.03753}{\path{arXiv:2106.03753}}.

\bibitem{angluin_STOC1980}
Dana Angluin.
\newblock {Local and Global Properties in Networks of Processors (Extended
  Abstract)}.
\newblock In {\em Proceedings of the 12th Annual ACM Symposium on Theory of
  Computing (STOC 1980)}, pages 82--93. ACM, 1980.
\newblock \href{https://doi.org/10.1145/800141.804655}
  {\path{doi:10.1145/800141.804655}}.

\bibitem{beauquier_burman_davies_dufoulon_SIROCCO2019}
Joffroy Beauquier, Janna Burman, Peter Davies, and Fabien Dufoulon.
\newblock {Optimal Multi-broadcast with Beeps Using Group Testing}.
\newblock In {\em Structural Information and Communication Complexity
  (SIROCCO 2019)}, volume 11639 of {\em Lecture Notes in Computer Science},
  pages 66--80. Springer, 2019.
\newblock \href{https://doi.org/10.1007/978-3-030-24922-9_5}
  {\path{doi:10.1007/978-3-030-24922-9_5}}.

\bibitem{boldi_vigna_DISC2001}
Paolo Boldi and Sebastiano Vigna.
\newblock {An Effective Characterization of Computability in Anonymous
  Networks}.
\newblock In {\em Distributed Computing}, volume 2180 of {\em Lecture Notes in
  Computer Science}, pages 33--47. Springer, 2001.
\newblock \href{https://doi.org/10.1007/3-540-45414-4_3}
  {\path{doi:10.1007/3-540-45414-4_3}}.

\bibitem{brandes_et_al_TCS2020}
Philipp Brandes, Marcin Kardas, Marek Klonowski, Dominik Pajak, and Roger
  Wattenhofer.
\newblock {Fast Size Approximation of a Radio Network in Beeping Model}.
\newblock {\em Theoretical Computer Science}, 810:15--25, 2020.
\newblock \href{https://doi.org/10.1016/j.tcs.2017.05.022}
  {\path{doi:10.1016/j.tcs.2017.05.022}}.

\bibitem{casteigts_metivier_robson_zemmari_TCS2019}
Arnaud Casteigts, Yves M{\'e}tivier, John~Michael Robson, and Akka Zemmari.
\newblock {Counting in One-Hop Beeping Networks}.
\newblock {\em Theoretical Computer Science}, 780:20--28, 2019.
\newblock \href{https://doi.org/10.1016/j.tcs.2019.02.009}
  {\path{doi:10.1016/j.tcs.2019.02.009}}.

\bibitem{censor_hillel_cohen_gelles_sela_DC2023}
Keren Censor-Hillel, Shir Cohen, Ran Gelles, and Gal Sela.
\newblock {Distributed Computations in Fully-Defective Networks}.
\newblock {\em Distributed Computing}, 36(4):501--528, 2023.
\newblock \href{https://doi.org/10.1007/s00446-023-00452-2}
  {\path{doi:10.1007/s00446-023-00452-2}}.

\bibitem{chalopin_chang_chen_diluna_zhou_arxiv2025_2edge}
J{\'e}r{\'e}mie Chalopin, Yi-Jun Chang, Lyuting Chen, Giuseppe~A. Di~Luna,
  and Haoran Zhou.
\newblock {Content-Oblivious Leader Election in 2-Edge-Connected Networks},
  2025.
\newblock URL: \url{https://arxiv.org/abs/2507.08348},
  \href{https://arxiv.org/abs/2507.08348}{\path{arXiv:2507.08348}}.

\bibitem{chalopin_chang_chen_diluna_zhou_arxiv2025_nonuniform}
J{\'e}r{\'e}mie Chalopin, Yi-Jun Chang, Lyuting Chen, Giuseppe~A. Di~Luna,
  and Haoran Zhou.
\newblock {Non-Uniform Content-Oblivious Leader Election on Oriented
  Asynchronous Rings}, 2025.
\newblock URL: \url{https://arxiv.org/abs/2509.19187},
  \href{https://arxiv.org/abs/2509.19187}{\path{arXiv:2509.19187}}.

\bibitem{chalopin_chang_diluna_zhou_arxiv2026}
J{\'e}r{\'e}mie Chalopin, Yi-Jun Chang, Giuseppe~A. Di~Luna, and Haoran Zhou.
\newblock {Efficient Counting and Simulation in Content-Oblivious Rings}.
\newblock To appear in {\em Proceedings of the 45th ACM Symposium on
  Principles of Distributed Computing (PODC 2026)}. ACM, 2026.
\newblock URL: \url{https://arxiv.org/abs/2603.28260},
  \href{https://arxiv.org/abs/2603.28260}{\path{arXiv:2603.28260}}.

\bibitem{chalopin_godard_metivier_DISC2008}
J{\'e}r{\'e}mie Chalopin, Emmanuel Godard, and Yves M{\'e}tivier.
\newblock {Local Terminations and Distributed Computability in Anonymous
  Networks}.
\newblock In {\em Distributed Computing}, volume 5218 of {\em Lecture Notes in
  Computer Science}, pages 47--62. Springer, 2008.
\newblock \href{https://doi.org/10.1007/978-3-540-87779-0_4}
  {\path{doi:10.1007/978-3-540-87779-0_4}}.

\bibitem{chalopin_metivier_morsellino_FI2012}
J{\'e}r{\'e}mie Chalopin, Yves M{\'e}tivier, and Thomas Morsellino.
\newblock {Enumeration and Leader Election in Partially Anonymous and
  Multi-hop Broadcast Networks}.
\newblock {\em Fundamenta Informaticae}, 120(1):1--27, 2012.
\newblock \href{https://doi.org/10.3233/FI-2012-747}
  {\path{doi:10.3233/FI-2012-747}}.

\bibitem{chang_chen_zhou_arxiv2025_beyond}
Yi-Jun Chang, Lyuting Chen, and Haoran Zhou.
\newblock {Beyond 2-Edge-Connectivity: Algorithms and Impossibility for
  Content-Oblivious Leader Election}, 2025.
\newblock URL: \url{https://arxiv.org/abs/2511.23297},
  \href{https://arxiv.org/abs/2511.23297}{\path{arXiv:2511.23297}}.

\bibitem{chlebus_demarco_talo_FI2017}
Bogdan~S. Chlebus, Gianluca De~Marco, and Muhammed Talo.
\newblock {Naming a Channel with Beeps}.
\newblock {\em Fundamenta Informaticae}, 153(3):199--219, 2017.
\newblock \href{https://doi.org/10.3233/FI-2017-1537}
  {\path{doi:10.3233/FI-2017-1537}}.

\bibitem{cornejo_kuhn_DISC2010}
Alejandro Cornejo and Fabian Kuhn.
\newblock {Deploying Wireless Networks with Beeps}.
\newblock In {\em Distributed Computing}, volume 6343 of {\em Lecture Notes in
  Computer Science}, pages 148--162. Springer, 2010.
\newblock URL: \url{https://arxiv.org/abs/1005.2567}.

\bibitem{czumaj_davies_OPODIS2015}
Artur Czumaj and Peter Davies.
\newblock {Communicating with Beeps}.
\newblock In {\em 19th International Conference on Principles of Distributed
  Systems (OPODIS 2015)}, volume 46 of {\em LIPIcs}, pages 30:1--30:16.
  Schloss Dagstuhl -- Leibniz-Zentrum f{\"u}r Informatik, 2016.
\newblock \href{https://doi.org/10.4230/LIPIcs.OPODIS.2015.30}
  {\path{doi:10.4230/LIPIcs.OPODIS.2015.30}}.

\bibitem{czumaj_davies_JPDC2019}
Artur Czumaj and Peter Davies.
\newblock {Communicating with Beeps}.
\newblock {\em Journal of Parallel and Distributed Computing}, 130:98--109,
  2019.
\newblock \href{https://doi.org/10.1016/j.jpdc.2019.03.020}
  {\path{doi:10.1016/j.jpdc.2019.03.020}}.

\bibitem{davies_DC2025_noisy}
Peter Davies-Peck.
\newblock {Optimal Message-Passing with Noisy Beeps}.
\newblock {\em Distributed Computing}, 38(3):247--260, 2025.
\newblock \href{https://doi.org/10.1007/s00446-025-00488-6}
  {\path{doi:10.1007/s00446-025-00488-6}}.

\bibitem{diluna_baldoni_bonomi_chatzigiannakis_ICDCS2014}
Giuseppe~A. Di~Luna, Roberto Baldoni, Silvia Bonomi, and Ioannis
  Chatzigiannakis.
\newblock {Counting in Anonymous Dynamic Networks under Worst-Case Adversary}.
\newblock In {\em 2014 IEEE 34th International Conference on Distributed
  Computing Systems (ICDCS 2014)}, pages 338--347. IEEE, 2014.
\newblock \href{https://doi.org/10.1109/ICDCS.2014.42}
  {\path{doi:10.1109/ICDCS.2014.42}}.

\bibitem{diluna_viglietta_FOCS2022}
Giuseppe~A. Di~Luna and Giovanni Viglietta.
\newblock {Computing in Anonymous Dynamic Networks Is Linear}.
\newblock In {\em 2022 IEEE 63rd Annual Symposium on Foundations of Computer
  Science (FOCS 2022)}, pages 1122--1133. IEEE, 2022.
\newblock \href{https://doi.org/10.1109/FOCS54457.2022.00108}
  {\path{doi:10.1109/FOCS54457.2022.00108}}.

\bibitem{diluna_viglietta_DISC2023_disconnected}
Giuseppe~A. Di~Luna and Giovanni Viglietta.
\newblock {Optimal Computation in Leaderless and Multi-Leader Disconnected
  Anonymous Dynamic Networks}.
\newblock In {\em 37th International Symposium on Distributed Computing
  (DISC 2023)}, volume 281 of {\em LIPIcs}, pages 18:1--18:20.
  Schloss Dagstuhl -- Leibniz-Zentrum f{\"u}r Informatik, 2023.
\newblock \href{https://doi.org/10.4230/LIPIcs.DISC.2023.18}
  {\path{doi:10.4230/LIPIcs.DISC.2023.18}}.

\bibitem{diluna_viglietta_OPODIS2024}
Giuseppe~A. Di~Luna and Giovanni Viglietta.
\newblock {Universal Finite-State and Self-Stabilizing Computation in
  Anonymous Dynamic Networks}.
\newblock In {\em 28th International Conference on Principles of Distributed
  Systems (OPODIS 2024)}, volume 324 of {\em LIPIcs}, pages 10:1--10:17.
  Schloss Dagstuhl -- Leibniz-Zentrum f{\"u}r Informatik, 2024.
\newblock \href{https://doi.org/10.4230/LIPIcs.OPODIS.2024.10}
  {\path{doi:10.4230/LIPIcs.OPODIS.2024.10}}.

\bibitem{diluna_viglietta_DC2025_congested}
Giuseppe~A. Di~Luna and Giovanni Viglietta.
\newblock {Efficient Computation in Congested Anonymous Dynamic Networks}.
\newblock {\em Distributed Computing}, 38(2):95--112, 2025.
\newblock URL: \url{https://arxiv.org/abs/2301.07849}.

\bibitem{fraigniaud_pelc_peleg_perennes_PODC2000}
Pierre Fraigniaud, Andrzej Pelc, David Peleg, and St{\'e}phane P{\'e}rennes.
\newblock {Assigning Labels in Unknown Anonymous Networks (Extended Abstract)}.
\newblock In {\em Proceedings of the 19th Annual ACM Symposium on Principles of
  Distributed Computing (PODC 2000)}, pages 101--111. ACM, 2000.
\newblock \href{https://doi.org/10.1145/343477.343527}
  {\path{doi:10.1145/343477.343527}}.

\bibitem{frei_et_al_DISC2024}
Fabian Frei, Ran Gelles, Ahmed Ghazy, and Alexandre Nolin.
\newblock {Content-Oblivious Leader Election on Rings}.
\newblock In {\em 38th International Symposium on Distributed Computing
  (DISC 2024)}, volume 319 of {\em LIPIcs}, pages 26:1--26:20.
  Schloss Dagstuhl -- Leibniz-Zentrum f{\"u}r Informatik, 2024.
\newblock \href{https://doi.org/10.4230/LIPIcs.DISC.2024.26}
  {\path{doi:10.4230/LIPIcs.DISC.2024.26}}.

\bibitem{garncarek_et_al_ESA2025}
Pawe{\l} Garncarek, Dariusz~R. Kowalski, Shay Kutten, and Miguel~A. Mosteiro.
\newblock {Beeping Deterministic CONGEST Algorithms in Graphs}.
\newblock In {\em 33rd Annual European Symposium on Algorithms (ESA 2025)},
  volume 351 of {\em LIPIcs}, pages 20:1--20:17. Schloss Dagstuhl --
  Leibniz-Zentrum f{\"u}r Informatik, 2025.
\newblock \href{https://doi.org/10.4230/LIPIcs.ESA.2025.20}
  {\path{doi:10.4230/LIPIcs.ESA.2025.20}}.

\bibitem{kowalski_mosteiro_ICALP2018}
Dariusz~R. Kowalski and Miguel~A. Mosteiro.
\newblock {Polynomial Counting in Anonymous Dynamic Networks with Applications
  to Anonymous Dynamic Algebraic Computations}.
\newblock In {\em 45th International Colloquium on Automata, Languages, and
  Programming (ICALP 2018)}, volume 107 of {\em LIPIcs}, pages 156:1--156:14.
  Schloss Dagstuhl -- Leibniz-Zentrum f{\"u}r Informatik, 2018.
\newblock \href{https://doi.org/10.4230/LIPIcs.ICALP.2018.156}
  {\path{doi:10.4230/LIPIcs.ICALP.2018.156}}.

\bibitem{kowalski_mosteiro_JCSS2021}
Dariusz~R. Kowalski and Miguel~A. Mosteiro.
\newblock {Polynomial Anonymous Dynamic Distributed Computing without a Unique
  Leader}.
\newblock {\em Journal of Computer and System Sciences}, 123:37--63, 2022.
\newblock \href{https://doi.org/10.1016/j.jcss.2021.07.002}
  {\path{doi:10.1016/j.jcss.2021.07.002}}.

\bibitem{lynch1996distributed}
Nancy~A. Lynch.
\newblock {\em Distributed Algorithms}.
\newblock Morgan Kaufmann Publishers, San Francisco, CA, 1996.

\bibitem{metivier_robson_zemmari_arxiv2015}
Yves M{\'e}tivier, John~Michael Robson, and Akka Zemmari.
\newblock {On Distributed Computing with Beeps}, 2015.
\newblock URL: \url{https://arxiv.org/abs/1507.02721},
  \href{https://arxiv.org/abs/1507.02721}{\path{arXiv:1507.02721}}.

\bibitem{michail_chatzigiannakis_spirakis_SSS2013}
Othon Michail, Ioannis Chatzigiannakis, and Paul~G. Spirakis.
\newblock {Naming and Counting in Anonymous Unknown Dynamic Networks}.
\newblock In {\em Stabilization, Safety, and Security of Distributed Systems},
  volume 8255 of {\em Lecture Notes in Computer Science}, pages 281--295.
  Springer, 2013.
\newblock \href{https://doi.org/10.1007/978-3-319-03089-0_20}
  {\path{doi:10.1007/978-3-319-03089-0_20}}.

\bibitem{olshevsky_SICON2017}
Alex~Olshevsky.
\newblock {Linear Time Average Consensus and Distributed Optimization on Fixed
  Graphs}.
\newblock {\em SIAM Journal on Control and Optimization}, 55(6):3990--4014,
  2017.

\bibitem{viglietta_SIROCCO2024}
Giovanni Viglietta.
\newblock {History Trees and Their Applications}.
\newblock In {\em 31st International Colloquium on Structural
  Information and Communication Complexity (SIROCCO 2024)}, volume 14662 of {\em LNCS}, pages 3--23, 2024.
\newblock URL: \url{https://arxiv.org/abs/2404.02673},
  \href{https://arxiv.org/abs/2404.02673}{\path{arXiv:2404.02673}}.

\bibitem{yamashita_kameda_PODC1988}
Masafumi Yamashita and Tsunehiko Kameda.
\newblock {Computing on an Anonymous Network}.
\newblock In {\em Proceedings of the 7th Annual ACM Symposium on Principles of
  Distributed Computing (PODC 1988)}, pages 117--130. ACM, 1988.

\bibitem{yamashita_kameda_TPDS1996}
Masafumi Yamashita and Tsunehiko Kameda.
\newblock {Computing on Anonymous Networks. I. Characterizing the Solvable
  Cases}.
\newblock {\em IEEE Transactions on Parallel and Distributed Systems},
  7(1):69--89, 1996.
\newblock \href{https://doi.org/10.1109/71.481599}
  {\path{doi:10.1109/71.481599}}.

\end{thebibliography}
\end{document}